\definecolor{DarkBlue}{RGB}{0,0,150}
\theoremstyle{definition}
\newtheorem{theorem}{Theorem}[section]
\newtheorem*{theorem*}{Theorem}
\newtheorem{definition}[theorem]{Definition}
\newtheorem*{definition*}{Definition}
\newtheorem{lemma}[theorem]{Lemma}
\newtheorem{corollary}[theorem]{Corollary}
\newtheorem*{goal*}{Goal}
\newtheorem{remark}[theorem]{Remark}
\title{On the Power and Limits of Dynamic Pricing in Combinatorial Markets\thanks{
The work of Ben Berger and Michal Feldman is supported by the European Research Council (ERC) under the European Union's Horizon 2020 research and innovation program (grant agreement No. 866132), and by the Israel Science Foundation (grant number 317/17).  The work of Alon Eden is supported by the National Science Foundation under Grant No. CCF-1718549.
}
}
\author{Ben Berger	\\ Tel-Aviv University	\\ \texttt{benberger1@tauex.tau.ac.il}
	\and
		Alon Eden   \\ Harvard University	\\ \texttt{ aloneden@seas.harvard.edu}
	\and
		Michal Feldman 	\\Tel-Aviv University and Microsoft Research \\ \texttt{michal.feldman@cs.tau.ac.il}}  
\date{}
\begin{document}
	\maketitle

\begin{abstract}
	
We study the power and limits of {\em optimal dynamic pricing} in combinatorial markets; i.e., dynamic pricing that leads to optimal social welfare. 
Previous work by Cohen-Addad {\em et al.} [EC'16] demonstrated the existence of optimal dynamic prices for unit-demand buyers, and showed a market with coverage valuations that admits no such prices. 
However, finding the most general class of markets (i.e., valuation functions) that admit optimal dynamic prices remains an open problem.
In this work we establish positive and negative results that narrow the existing gap. 

On the positive side, we provide tools for handling markets beyond unit-demand valuations.  In particular, we characterize all optimal allocations in multi-demand markets.
This characterization allows us to partition the items into equivalence classes according to the role they play in achieving optimality.
Using these tools, we provide a poly-time optimal dynamic pricing algorithm for up to $3$ multi-demand buyers. 
	
On the negative side, we establish a maximal domain theorem, showing that for every non-gross substitutes valuation, there exist unit-demand valuations such that adding them yields a market that does not admit an optimal dynamic pricing. 
This result is the dynamic pricing equivalent of the seminal maximal domain theorem by Gul and Stacchetti [JET'99] for Walrasian equilibrium.
Yang [JET'17] discovered an error in their original proof, and established a different, incomparable version of their maximal domain theorem. 
En route to our maximal domain theorem for optimal dynamic pricing, we provide the first complete proof of the original theorem by Gul and Stacchetti.   
\end{abstract}

\allowdisplaybreaks

\section{Introduction}
\label{sec:intro}

We study the power and limitations of pricing schemes for social welfare optimization in combinatorial markets. 
We consider combinatorial markets with $m$ heterogeneous, indivisible goods, and $n$ buyers with publicly known valuation function $v_i: 2^{[m]}\rightarrow \mathbb{R}_{\geq 0}$ over bundles of items. The goal is to allocate items to buyers in a way that maximizes the social welfare.

Apart from being simple, pricing schemes are attractive since they do not require an all powerful central authority.
Once the prices are set, the buyers arrive and simply choose a desired set of items from the available inventory.  
This is the mechanism we see everywhere, from supermarkets to online stores.
Formally, the seller sets items prices $\mathbf{p}=(p_1, \ldots, p_m)\in \mathbb{R}^m_{\geq 0}$, buyers arrive sequentially in an arbitrary order, and every buyer chooses a bundle $T$ from the remaining items that maximizes the utility: $u_i(T,\mathbf{p})=v_i(T)-\sum_{j\in T}p_j$, breaking ties arbitrarily.

A reader familiar with the fundamental notion of Walrasian equilibrium, which dates back to the 19th century \cite{walras1896elements} (also known as market/pricing/competitive equilibrium), may conclude that the problem is solved for any market that admits a Walrasian equilibrium. A Walrasian equilibrium is a pair of an allocation $\mathbf{S}=(S_1,\ldots,S_n)$ and prices $\mathbf{p}=(p_1,\ldots,p_m)$, such that for every buyer $i$, $S_i$ maximizes $i$'s utility given $\mathbf{p}$. By the first welfare theorem,
every Walrasian equilibrium maximizes social welfare. 

Are Walrasian prices a solution to our problem? The answer, as was previously observed \cite{DBLP:conf/sigecom/Cohen-AddadEFF16,HsuMRRV16}, is no. Walrasian prices alone cannot resolve a market without coordinating the tie breaking. If a buyer is faced with multiple utility-maximizing bundles, it is crucial that a central authority coordinate the tie breaking in accordance with the corresponding optimal allocation. In real-world markets, however, buyers are only faced with the prices and choose a desired bundle by themselves without caring about global efficiency.  \cite{DBLP:conf/sigecom/Cohen-AddadEFF16} demonstrated that lacking a tie-breaking coordinator, Walrasian pricing can lead to an arbitrarily bad allocation. Moreover, they showed that no fixed prices whatsoever can achieve more than $2/3$ of the optimal social welfare in the worst case, even when restricted to unit-demand buyers\footnote{A unit demand buyer has a value for every item, and the value for a set is the maximum value of any item in the set.}.

In order to circumvent this state of affairs, \cite{DBLP:conf/sigecom/Cohen-AddadEFF16} proposed a more powerful pricing scheme, namely {\em dynamic pricing}, in which the seller updates prices in between buyer arrivals. The updated prices are set based on the remaining buyers and the current inventory.
The main result of \cite{DBLP:conf/sigecom/Cohen-AddadEFF16} is that every unit-demand market admits an optimal dynamic pricing.  However, they also showed an example of a market with coverage valuations (a strict sub-class of submodular valuations) in which dynamic prices cannot guarantee optimal welfare. A natural question arises:
\begin{quote}
\textit{What markets (i.e., what valuation classes) can be resolved optimally using dynamic pricing?
}\end{quote}

A similar question was considered for Walrasian equilibrium, where it was shown that every market with gross-substitutes\footnote{Gross-substitutes valuations are a strict super-set of unit-demand.  See formal definition in Section~\ref{sec:prelim}.} buyers admits a Walrasian equilibrium~\cite{kelso1982job}. Moreover, \cite{GS99} show that gross-substitutes valuations are also maximal with respect to guaranteed existence of a Walrasian equilibrium in the following sense:
\begin{theorem}[Maximal Domain Theorem for Walrasian Equilibrium \cite{GS99}] \label{gs_thm}
	Let $v_1$ be a non gross-substitutes valuation.  Then, there exist unit-demand valuations $v_2,\ldots,v_\ell$ for some $\ell$ such that the valuation profile $(v_1,v_2,\ldots,v_\ell)$ does not admit a Walrasian equilibrium\footnote{Yang \cite{Yang17} discovered an error in the proof; details to follow.}.
\end{theorem}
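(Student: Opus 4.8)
The plan is to prove the Maximal Domain Theorem for Walrasian Equilibrium by starting from a non-gross-substitutes valuation $v_1$ and ``witnessing'' the failure of gross substitutes, then constructing unit-demand buyers that block any Walrasian equilibrium. First I would recall the standard characterization of gross substitutes due to Gul--Stacchetti: $v_1$ fails gross substitutes if and only if its associated demand correspondence violates the substitutes condition, equivalently (via their single-improvement / $M^\natural$-concavity characterizations) there exist a price vector $\mathbf{q}$ and two demanded bundles $A, B \in D_{v_1}(\mathbf{q})$ and an item $j \in A \setminus B$ such that no demanded bundle at the price vector $\mathbf{q}$ raised on some coordinate $k \neq j$ retains $(A \setminus \{j\})$ together with $j$'s ``replacement role'' --- more concretely, one extracts from the failure a pair of bundles and prices exhibiting non-$M^\natural$-concavity: indices $j,k$ and a demanded set $X$ with $j \in X$, $k \notin X$, such that neither $X - j$ nor $X - j + k$ can be ``repaired'' to a demanded set by a single exchange. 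This combinatorial witness is the raw material for the construction.

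Given such a witness, the next step is to design unit-demand buyers whose presence forces a tension. The idea, following the template of \cite{GS99}, is to pick prices $\mathbf{p}^{\star}$ at which $v_1$'s demand contains (at least) two ``incompatible'' optimal bundles, and then add unit-demand buyers $v_2,\dots,v_\ell$ that each want exactly one item and have values calibrated so that: (i) in any optimal allocation, the items not taken by buyer $1$ must be distributed among the unit-demand buyers in an essentially forced pattern, and (ii) the marginal/competitive prices that the unit-demand buyers impose on their items are pinned down to specific values $p_j^{\star}$. Then, for a Walrasian equilibrium to exist, buyer $1$ would have to demand, at precisely these prices $p^{\star}$, a bundle that is simultaneously consistent with the unit-demand buyers' requirements; but the single-improvement-type failure of gross substitutes for $v_1$ is exactly what rules this out --- every bundle in $D_{v_1}(\mathbf{p}^{\star})$ conflicts with the forced allocation of some unit-demand buyer, since otherwise one could perform the single exchange that the witness says is impossible. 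One must be careful to choose $\ell$ and the unit-demand values large/generic enough that the only optimal allocations are the intended ones and that the price constraints are rigid; this typically requires introducing several ``high-value'' unit-demand buyers per relevant item to clamp prices from above and below.

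The main obstacle is exactly the gap that \cite{Yang17} identified: translating the \emph{local} failure of gross substitutes (a statement about the demand correspondence at one price vector, or about one violating triple $j,k,X$) into a \emph{global} non-existence of Walrasian equilibrium across \emph{all} price vectors and \emph{all} optimal allocations. The subtlety is that the witness prices $\mathbf{q}$ from the gross-substitutes violation need not be the prices at which a hypothetical Walrasian equilibrium of the augmented market clears; one has to argue that the unit-demand buyers can be chosen so that any equilibrium price vector is \emph{forced} into the relevant region --- or, more robustly, one must extract a stronger structural consequence of non-gross-substitutes that is robust to reparametrization. I expect the fix (and the bulk of the work in the ``first complete proof'') to lie in a careful case analysis: either handle the ``single-improvement'' failure and the ``concavity'' (exchange) failure separately, or derive a canonical form of the violation (e.g. a minimal pair of demanded bundles $A,B$ with $|A \triangle B|$ minimal subject to some item of $A\setminus B$ having no valid swap) whose associated prices can be perturbed freely without destroying the violation, and then attach the unit-demand buyers to those perturbable prices. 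A secondary technical point is ensuring integrality/feasibility of the constructed example (finite $\ell$, rational values, well-defined optimal allocations), which is routine once the structural core is in place.

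Finally, I would verify the construction by exhibiting explicitly that no $(\mathbf{S},\mathbf{p})$ can be a Walrasian equilibrium: supposing one existed, the unit-demand buyers' individual rationality and utility-maximization constraints pin $p_j$ to $p_j^{\star}$ on the relevant coordinates and force $S_i$ for $i \geq 2$; then $S_1$ must be the complement, hence $S_1 \in D_{v_1}(\mathbf{p})$, hence $S_1 \in D_{v_1}(\mathbf{p}^{\star})$ up to the unconstrained coordinates --- and feeding $S_1$ back into the gross-substitutes violation produces the swap that was assumed impossible, a contradiction. I would close by noting (as the excerpt promises) that this argument, done carefully, also repairs the original \cite{GS99} proof, and remark on how it differs from the incomparable version in \cite{Yang17}.
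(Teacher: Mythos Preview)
Your proposal is a roadmap rather than a proof, and the crucial technical step that makes the argument go through is not present. You correctly anticipate that one must extract from the gross-substitutes failure some canonical witness and then attach unit-demand buyers whose price constraints collide with $v_1$'s demand; you also correctly flag that the difficulty (Yang's objection) is forcing the hypothetical Walrasian prices into the region where the witness lives. What you do not identify is \emph{how} this is done.

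The paper's resolution is concrete and different from what you sketch. Instead of the single-improvement/$M^\natural$-concavity characterization, it works from the Reijnierse--van Gellekom--Potters characterization (submodularity plus the three-item ``RGP'' inequality), and proves a \emph{nonnegative-price} analogue of its price-based form (Lemma~\ref{SM_RGP_iff_NP_SM_NP_RGP}). From this it extracts the key structural lemma (Theorem~\ref{B_minus_A_2}): any non-GS $v_1$ admits a \emph{nonnegative} price vector $\mathbf{p}$ and bundles $A,B$ with $|B\setminus A|=2$, $|A\setminus B|\leq 1$, and $B$ minimizing $|A\triangle C|$ among all $C$ with $u_1(C,\mathbf{p})>u_1(A,\mathbf{p})$. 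The nonnegativity of $\mathbf{p}$ is the point: it is exactly what lets the unit-demand buyers $v_x$ (each valuing a single item $x$ at $p_x+\varepsilon_x$) have nonnegative valuations and still clamp the Walrasian price of $x$ near $p_x$. The original Gul--Stacchetti construction used witness prices that could be negative, which is why one of their two cases failed; Theorem~\ref{B_minus_A_2} eliminates that case entirely rather than repairing it.

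Your sketch of the endgame (``the unit-demand buyers pin $p_j$ to $p_j^\star$, then $S_1$ must be the complement, contradiction'') is also too coarse. In the paper the unit-demand buyers only bound each $q_x$ within $p_x\pm\varepsilon_x$, not exactly; the contradiction comes from a quantitative comparison of $u_1(X_1,\mathbf{q})$ against $u_1(A,\mathbf{q})$ and $u_1(B,\mathbf{q})$, using that the $\varepsilon$'s are chosen small relative to $u_1(B,\mathbf{p})-u_1(A,\mathbf{p})$ and that $|A\triangle X_1|<|A\triangle B|$ forces $u_1(X_1,\mathbf{p})\leq u_1(A,\mathbf{p})$ by minimality of $B$. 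None of this is visible in your outline, and without the specific size constraint $|B\setminus A|=2$, $|A\setminus B|\leq 1$ the case analysis on $X_1$ would not close.
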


 Although the notions of dynamic pricing and Walrasian are incomparable (see discussion in the last paragraph of our Results and Techniques section), Cohen-Addad {\it et al.} \cite{DBLP:conf/sigecom/Cohen-AddadEFF16} conjectured that GS valuations are also maximal and sufficient with respect to the existence of dynamic prices. For the special case of markets with a unique optimal allocation, they showed that every GS market admits static prices guaranteeing optimal welfare, and there exists a market with non-GS (though submodular) valuations such that no pricing, even dynamic, guarantee optimal welfare.



\subsection{Our Results and Techniques}
In this work we shrink the known gap between markets that can and cannot be resolved optimally via dynamic pricing, from both ends.

A natural extension of unit-demand valuations is {\em multi-demand} valuations, where every buyer $i$ has a public {\em cap} $k_i\in \mathbb{N}$ on the number of desired items, and the value for a set is the sum of the values for the $k_i$ most valued items in the set. The case of $k_i = 1$ is simply unit-demand. Every multi-demand valuation is gross-substitutes. Our main positive result is the following:

\begin{theorem}
	\label{thm:3-md-players}
	Every market with up to $3$ buyers, each with a multi-demand valuation function, admits an optimal dynamic pricing.
	Moreover, the prices can be computed in polynomial (in the number of items $m$) time, using value queries\footnote{A value query for a valuation $v$ receives a set $S$ as input, and returns $v(S)$.}.
\end{theorem}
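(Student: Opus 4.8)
The plan is to prove the statement by (strong) induction on the number of buyers $n \le 3$, after reducing both the existence and the computation of an optimal dynamic pricing to a single ``one-step'' condition. A price vector $\mathbf{p}$ is an optimal dynamic pricing exactly when, for every buyer $i$ and every utility-maximizing bundle $T$ for $i$ at $\mathbf{p}$, the bundle $T$ is allocated to $i$ in some optimal allocation. Given such a $\mathbf{p}$: whichever buyer $i$ arrives first consumes some such $T$; the residual market, consisting of the remaining buyers together with the items $[m]\setminus T$, has optimal welfare exactly $\mathrm{OPT}-v_i(T)$; and since restricting a multi-demand valuation to a subset of items yields again a multi-demand valuation, the residual market has at most $2$ multi-demand buyers, to which the induction hypothesis applies. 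Hence it suffices to construct, for every multi-demand market with at most $3$ buyers, a price vector satisfying the one-step condition that is computable with polynomially many value queries. The base case $n=1$ is immediate: price every item at $0$; every utility-maximizing bundle for the single buyer achieves value $v_1([m])=\mathrm{OPT}$ and hence extends to an optimal allocation.

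The engine for $n\in\{2,3\}$ is the characterization of all optimal allocations of a multi-demand market. The plan is to model welfare maximization as a maximum-weight degree-constrained bipartite assignment problem (equivalently, replace buyer $i$ by $k_i$ unit-demand copies and take a maximum-weight matching between items and copies), and then, via LP duality together with augmenting-path and exchange arguments over the set of optimal matchings, to extract (i) which items are allocated in every optimal allocation, and (ii) for the allocated items, which buyers may receive them and under which counting constraints. This produces a partition of the items into ``role'' classes, where within a class items are freely interchangeable among a fixed set of buyers subject only to per-buyer capacities, while items in distinct classes are constrained differently. Two features make this usable: the partition, together with a representative optimal allocation and a supporting (Walrasian-type) dual price vector, is computable in polynomial time with value queries --- the per-item values and caps are recoverable with polynomially many queries, after which everything reduces to a polynomial-time computation on the assignment instance --- and, crucially, when $n\le 3$ the number of distinct roles and of interaction patterns among classes is bounded by an absolute constant.

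With the partition in hand, I would set the initial prices class by class: all items of a class receive a common base price equal to the class's marginal contribution to $\mathrm{OPT}$ (the Walrasian-type dual value), and within a class we add carefully chosen, sufficiently small perturbations, only to the extent needed to force ties to be broken toward bundles that extend to optimal allocations. One then verifies the one-step condition directly: for each buyer $i$, every utility-maximizing bundle at these prices --- including every resolution of a tie and every decision about whether to grab a zero-surplus item --- is a bundle that $i$ receives in some optimal allocation. This verification is where the case analysis for $n\in\{2,3\}$ resides: one enumerates the possible interaction patterns of the role classes among the (at most three) buyers and checks each.

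The main obstacle is precisely this last step for $n=3$. Walrasian (unperturbed) prices only guarantee a Walrasian pricing, not a dynamic one, so the perturbations are essential; yet they must be chosen so that no buyer's demand correspondence ever contains a bundle that ``steals'' an item some other buyer needs in every completion to an optimal allocation. Designing a single price vector that controls all three demand correspondences simultaneously and is robust to every tie-breaking rule, and carrying out the accompanying case analysis over the ways items can be unallocated, pinned to one buyer, contested by two, or contested by all three (with the capacity interactions among these), is the delicate core of the argument; proving the optimal-allocation characterization in a form tight enough to drive it is the other substantial piece of work.
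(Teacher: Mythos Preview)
Your high-level scaffolding matches the paper closely: the reduction to a one-step price vector (Definition~\ref{def:dynamic-pricing}), the characterization of optimal allocations via the unit-demand reduction (the paper's Theorem~\ref{legal_iff_optimal}, your ``legal allocations''), and the partition of items into role classes (the paper's sets $B_{i,C}$ and item-equivalence graph). You also correctly isolate where the difficulty lies.

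The genuine gap is that you do not supply a pricing mechanism, and the one you gesture at --- a common Walrasian-dual base price per class plus ``carefully chosen, sufficiently small perturbations'' --- is precisely the approach the paper shows \emph{fails} when naively extended beyond unit demand. The running example (Figure~\ref{fig_simple_bottleneck}) exhibits a bottleneck item $e$: if you perturb so as to allow buyer~1 to take either of $c,d$ (each individually extends to an optimal allocation), she may take \emph{both}, after which no optimal completion exists. No uniform within-class perturbation resolves this, because the failure is not about ties within a class but about how many items a buyer pulls from a neighboring class. The paper's mechanism is different in kind: prices are set as shortest-path distances in the preference graph $H$ after removing only a \emph{selected} subset of edges, and the selection rule for $n=3$ is the crux --- in each of the two 3-cycles $B_{13}\to B_{21}\to B_{32}\to B_{13}$ and $B_{12}\to B_{31}\to B_{23}\to B_{12}$ one marks the edges incident to a \emph{minimum-cardinality} vertex. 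This asymmetry is what makes the Hall-type argument in Lemma~\ref{main_goal_exposition} go through: if buyer~1 over-steals from buyer~2, the minimality guarantees $|B_{32}|$ is large enough for buyer~2 to compensate by stealing from buyer~3, who in turn is made whole from $B_{13}$.

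So your plan needs, at minimum, (i) a concrete rule for which preference constraints to relax (not merely ``perturb within classes''), and (ii) a proof that any resulting deviation can be absorbed by a chain of legal compensations among the remaining buyers. The paper's minimum-size selection and the counting argument in Lemma~\ref{main_goal_exposition} are exactly those two pieces; without an analogue, the proposal stops short of the theorem.
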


On the negative side, we show the first general negative result for dynamic prices, which takes the form of a maximal domain result in the spirit of Gul and Stacchetti:

\begin{theorem}\label{maximal_domain_theorem_dynamic_pricing}
	Let $v_1$ be a non gross-substitutes valuation. Then, there are unit-demand valuations $v_2,\ldots,v_\ell$ such that the valuation profile $(v_1,v_2,\ldots,v_\ell)$ does not admit an optimal dynamic pricing.
\end{theorem}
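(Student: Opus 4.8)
The plan is a two-stage strategy. In the first stage I would re-prove the Gul--Stacchetti maximal domain theorem (Theorem~\ref{gs_thm}) with a complete argument: starting from the non-GS valuation $v_1$, construct unit-demand valuations $v_2,\dots,v_\ell$ so that the profile $(v_1,\dots,v_\ell)$ admits no Walrasian equilibrium. In the second stage I would carry out the same construction so that, in addition, the resulting market has a \emph{unique} welfare-optimal allocation $\mathbf{S}^*=(S_1^*,\dots,S_\ell^*)$ which allocates all of $[m]$. Given such a market, Theorem~\ref{maximal_domain_theorem_dynamic_pricing} is immediate: for \emph{any} price vector $\mathbf{p}$, since no Walrasian equilibrium exists the pair $(\mathbf{S}^*,\mathbf{p})$ is not one, so some buyer $i$ has $S_i^*\notin D(v_i,\mathbf{p})$, where $D(v_i,\mathbf{p})$ denotes $i$'s set of utility-maximizing bundles at $\mathbf{p}$. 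Pick any $T\in D(v_i,\mathbf{p})$; then $T\neq S_i^*$, so by uniqueness of $\mathbf{S}^*$ the bundle $T$ is not allocated to $i$ in any optimal allocation. Hence if buyer $i$ arrives first and breaks ties toward $T$, the process cannot be completed to an optimal allocation, so $\mathbf{p}$ cannot serve as the initial price of an optimal dynamic pricing. As $\mathbf{p}$ was arbitrary, the market admits no optimal dynamic pricing. (This uses only failure at the very first arrival, so we never need to unwind the recursive definition of dynamic pricing.)

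For the first stage I would begin by extracting a finite combinatorial witness of non-substitutability from $v_1$. By the Gul--Stacchetti characterization of gross substitutes (equivalently, the single-improvement / no-complementarities conditions), if $v_1$ is not GS then there are a price vector $q$, an item $a$, and a set $I$ of other items such that $a$ lies in some demanded bundle of $v_1$ at $q$, but $a$ lies in no demanded bundle of $v_1$ at the price vector obtained from $q$ by raising the prices of the items in $I$. Around this witness I would place unit-demand buyers whose values are read off from $q$ (suitably shifted by the price increments of the witness), so that in any candidate Walrasian equilibrium the induced prices on $v_1$'s relevant items would have to equal the shifted copy of $q$ at which $v_1$ refuses item $a$, while the unit-demand competition simultaneously forces $a$ to be sold to $v_1$ --- a contradiction. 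This is exactly the step whose case analysis was incomplete in the original proof and corrected (in a different form) by Yang~\cite{Yang17}, and I would spend the bulk of the work exhausting the possible shapes of the witness and their matching unit-demand gadgets.

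For the second stage I would add, for every item $j$, an auxiliary unit-demand buyer with a minuscule value for $j$ (forcing every item to be sold in an optimal allocation) and perturb all of the unit-demand values by generic pairwise-distinct infinitesimals. The perturbations are chosen small enough that every optimal allocation of the perturbed market is optimal in the unperturbed one; they therefore do not affect the non-GS exploitation above --- which I would arrange to hold not just for the single value vector produced in the first stage but on an open neighborhood of it in unit-demand valuation space --- while breaking all ties so that the optimal allocation becomes unique. Plugging the resulting market into the reduction of the first paragraph yields the theorem.

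The main obstacle is to make all of these requirements coexist in a single construction. The non-GS witness pins $v_1$ down at a specific price vector and couples it rigidly to the unit-demand competitors, so one must verify that ruling out a Walrasian equilibrium is \emph{robust} --- that it holds on a full-dimensional set of unit-demand value vectors --- and only then spend that robustness on the generic perturbation that enforces a unique optimal allocation allocating all items. Carrying out this accounting uniformly over all the ways gross substitutes can fail for $v_1$ is precisely the delicate point behind the gap in~\cite{GS99}, and it is the technical heart of both the re-proof of Theorem~\ref{gs_thm} and of Theorem~\ref{maximal_domain_theorem_dynamic_pricing}.
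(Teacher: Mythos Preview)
Your high-level strategy matches the paper's: build unit-demand competitors around a witness of non-GS so that (i) no Walrasian equilibrium exists, and (ii) the optimal allocation is essentially pinned down; then argue that dynamic pricing would imply Walrasian pricing, a contradiction. Two execution details differ from the paper and are worth flagging.

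First, the paper does \emph{not} aim for a fully unique optimal allocation. Its construction only achieves ``almost uniqueness'': every item outside $A\cap B$ is allocated to the same buyer in every optimal allocation, while items in $A\cap B$ (valued at $0$ by all unit-demand buyers) may float between $v_1$ and being unallocated. The paper then proves a tailored lemma (Lemma~\ref{lem:Dynamic_implies_Walrasian}) showing that dynamic pricing still implies Walrasian pricing in this almost-unique regime, by zeroing out the prices on $A\cap B$. Your plan to force full uniqueness via auxiliary buyers with minuscule values and generic perturbations is not obviously wrong, but it trades one difficulty for another: you must verify that these extra gadgets do not accidentally create a Walrasian equilibrium. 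You yourself identify this robustness accounting as the ``technical heart'' and leave it open; the paper simply avoids it by relaxing uniqueness and strengthening the reduction instead.

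Second, for extracting the non-GS witness the paper does not work directly from the price-raising definition you propose. It first proves a variant of the Reijnierse--van~Gellekom--Potters characterization with \emph{nonnegative} prices (Theorem~\ref{GS_iff_NP_SM_NP_RGP}), and from it derives a single concrete shape for the witness (Theorem~\ref{B_minus_A_2}): bundles $A,B$ with $|B\setminus A|=2$, $|A\setminus B|\le 1$, and $B$ minimizing $|A\triangle C|$ among bundles with strictly higher utility than $A$. This collapses the case analysis you anticipate in stage one to one explicit construction, and the careful choice of the perturbations $\varepsilon_a,\varepsilon_b,\varepsilon'_b$ as distinct powers of $1/2$ simultaneously delivers almost-uniqueness and preserves the no-Walrasian argument. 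If you pursue your route, the key step you have not yet supplied is precisely an analogue of this: a finite list of witness shapes together with matching unit-demand gadgets whose no-Walrasian conclusion is stable under the perturbations you need for uniqueness.
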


En route, we provide \textit{the first complete proof of the maximal domain theorem by Gul-Stacchetti} (Theorem \ref{gs_thm} above), whose original proof was imprecise (details to follow).  In the following we give an overview of the techniques and tools used to prove these results.

\vspace{-0.1in}
\subsubsection{Techniques: Positive Results.}
%
We first review the solution of Cohen-Addad {\it et al.}~\cite{DBLP:conf/sigecom/Cohen-AddadEFF16} for unit-demand valuations, and show why we need a more fundamental
technique 
in order to get past unit-demand bidders.  In a nutshell, their scheme computes an optimal allocation $\mathbf{X} = (x_1,\ldots,x_n)$, where item $x_i$ is allocated to buyer $i$, and then constructs a {\em complete}, weighted directed graph in which the vertices are the items.  An edge $x_i \rightarrow x_j$ in this graph represents a \emph{preference constraint}, requiring that buyer $i$ {\em strongly} prefer the item $x_i$ over $x_j$, relative to the output prices.  Hereafter, we term this graph the \textit{preference graph}.

If there exist prices $\mathbf{p}$ that satisfy all edge constraints, then all buyers strongly prefer their items over the rest, and the allocation obtained after the last buyer leaves the market is precisely $\mathbf{X}$, which is optimal. Unfortunately, in some cases such prices do not exist. In order to circumvent this problem, \cite{DBLP:conf/sigecom/Cohen-AddadEFF16} proves the following two claims:
\begin{itemize}
	\item An edge $x_i \rightarrow x_j$ participates in a 0-weight cycle iff there is an alternative optimal allocation in which $x_j$ is allocated to buyer $i$.
	\item If 0-weight cycles are removed from the graph, then one can compute prices that satisfy the remaining edge constraints. 
\end{itemize}

Their pricing scheme removes every edge that participates in a 0-weight cycle, and then computes the prices as per the second bullet above.  Relative to these prices, every buyer strongly prefers her allocated item to every other item, except perhaps for the set of items that are allocated to her in some alternative optimal allocation.  Since every buyer takes at most one favorite item, as the buyers are unit-demand, this property guarantees that allocating this item to the buyer is consistent with an optimal allocation (not necessarily $\mathbf{X}$), as desired. When agents are multi-demand, they might take multiple items, and this breaks the solution by \cite{DBLP:conf/sigecom/Cohen-AddadEFF16}. 

\begin{figure}
	\centering\scalebox{0.4}{
		\includegraphics[width=\linewidth]{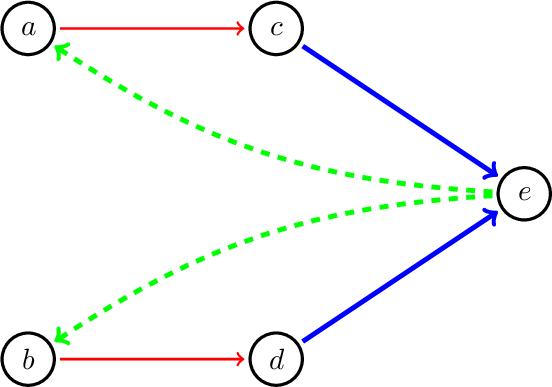}
	}
	
	\caption{Consider a market with 5 items $a,b,c,d,e$ and 3 buyers, 1, 2, 3.  Buyers 1 and 2 are both $2$-demand, and buyer 3 is unit-demand.  Buyer 1 values $a,b,c,d$ at 1 and $e$ at 0, Buyer 2 values $c,d,e$ at 1 and $a,b$ at 0, and buyer 3 values $a,b,e$ at 1 and $c,d$ at 0.  One can verify that allocating $a,b$ to 1, $c,d$ to 2 and $e$ to 3 maximizes social welfare.  The figure depicts two 0-weight cycles in the preference graph constructed in the running example (edge weights are omitted). The thin red, thick blue and dashed green arrows correspond to the constraints of buyers 1,2,3 respectively.}
	\vspace{-10pt}
	\label{fig_simple_bottleneck}
\end{figure}

To illustrate this, consider the example given in Figure \ref{fig_simple_bottleneck}, which serves as a running example throughout the paper.  Removing the given 0-weight cycles could result in buyer 1 taking $c$ and $d$ instead of $a$ and $b$, and the only remaining item that gives buyer 2 any positive value is $e$.  This decreases the maximum attainable welfare from 5 to 4. The reason for this is that the two cycles intersect, and item $e$ acts as a bottleneck for the two cycles. The machinery developed in \cite{DBLP:conf/sigecom/Cohen-AddadEFF16} cannot identify the special role of item $e$, which is crucial for resolving this instance.

Our first step is to gain a better structural understanding of optimal allocations in multi-demand markets. This is cast in  the following theorem that characterizes the set of optimal allocations in multi-demand markets with any number of buyers.  For the sake of simplicity, we present the theorem for markets in which all $m$ items are allocated in every optimal allocation, and in which the total demand of the players equals supply, i.e. $m = \sum_{i=1}^{n}k_i$, where $k_i$ is the cap of buyer $i$. An analogous result holds in the general case (see Appendix \ref{appendix: m < sum k_i}).

\begin{theorem*}[Informal. See Theorem \ref{legal_iff_optimal}]
	In a market with multi-demand buyers, an allocation is optimal if and only if the following hold:
	\vspace{-0.1in}
	\begin{itemize}
		\item Every buyer $i$ receives $k_i$ items.
		\item If item $x$ is allocated to buyer $i$, then there exists an optimal allocation where $x$ is allocated to $i$. 
	\end{itemize}
\end{theorem*}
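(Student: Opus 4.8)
\emph{Plan.} Write $w_{ix}\ge 0$ for buyer $i$'s value for item $x$, so that $v_i(S)=\max\{\sum_{x\in T}w_{ix}:T\subseteq S,\ |T|\le k_i\}$, and note $v_i(S)=\sum_{x\in S}w_{ix}$ whenever $|S|\le k_i$. The plan is to recast optimality as a maximum-weight transportation problem and then read off both directions of the theorem from linear programming duality. First, truncating each buyer's bundle to her $k_i$ most valued items preserves her value, so the maximum social welfare $W$ over all allocations equals the optimum of the LP $\max\sum_{i,x}w_{ix}z_{ix}$ over $\sum_x z_{ix}\le k_i\ (\forall i)$, $\sum_i z_{ix}\le 1\ (\forall x)$, $z\ge 0$; its constraint matrix is totally unimodular, so this optimum is attained at an integral point, i.e.\ by an allocation. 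Second, every optimal allocation $\mathbf{S}$ has $|S_i|=k_i$ for all $i$: were $|S_{i_0}|>k_{i_0}$, then truncating $\mathbf{S}$ to top-$k_i$ bundles would still have welfare $W$ yet leave some item unallocated, contradicting the hypothesis that every optimal allocation allocates all $m$ items. These two facts already give the ``only if'' direction: an optimal allocation satisfies the first bullet, and it witnesses the second bullet for each of its own pairs $(i,x)$.

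For the ``if'' direction the idea is to produce a single dual certificate that verifies optimality of \emph{every} allocation meeting the two bullets simultaneously. Fix an optimal dual solution $(\mathbf{y},\mathbf{p})$ of the transportation LP, so $y_i+p_x\ge w_{ix}$ for all $i,x$ and $\sum_i k_i y_i+\sum_x p_x=W$. Call $(i,x)$ a \emph{good edge} if some optimal allocation assigns item $x$ to buyer $i$. The one substantive step is the claim that every good edge is \emph{tight}, $y_i+p_x=w_{ix}$: an optimal allocation containing $(i,x)$—which by the first fact assigns exactly $k_i$ items to each buyer and hence is an integral primal optimum—must, by complementary slackness with $(\mathbf{y},\mathbf{p})$, have its support among the tight dual constraints. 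Now take any $\mathbf{S}$ satisfying the two bullets. By the first bullet $|S_i|=k_i$ for all $i$, so (using disjointness and $\sum_i k_i=m$) the sets $S_i$ partition $[m]$; by the second bullet every $(i,x)$ with $x\in S_i$ is a good edge, hence tight. Therefore
\[
\sum_i v_i(S_i)=\sum_i\sum_{x\in S_i}w_{ix}=\sum_i\sum_{x\in S_i}(y_i+p_x)=\sum_i k_i y_i+\sum_{x\in[m]}p_x=W,
\]
so $\mathbf{S}$ is optimal, which completes the proof.

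\emph{Main obstacle.} Once the instance is cast as a transportation LP there is little difficulty; the work is in the reduction and in resisting a tempting but fragile combinatorial route. The naive combinatorial proof—pick an optimal $\mathbf{O}$ maximizing agreement with $\mathbf{S}$, and for an $\mathbf{S}$-edge on which $\mathbf{O}$ disagrees swap $\mathbf{O}$ along an alternating cycle toward a witnessing optimal allocation to increase the agreement—does not obviously go through, because such a swap can destroy agreements at other items. Indeed, the running example of Figure~\ref{fig_simple_bottleneck} shows that good edges need not be jointly realizable in a common optimal allocation even when they form a conflict-free partial allocation: the three assignments $c\mapsto 1$, $d\mapsto 1$, $e\mapsto 2$ each occur in some optimal allocation individually, yet no optimal allocation makes all three. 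What the LP-duality packaging buys is exactly that all good edges are certified tight against one and the same dual solution, which is why I would take that route rather than chasing the exchange argument.
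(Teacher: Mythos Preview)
Your proof is correct and takes a genuinely different route from the paper.

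The paper argues combinatorially: it first reduces multi-demand to unit-demand by splitting each $k_i$-demand buyer into $k_i$ identical unit-demand copies, and then handles the unit-demand case by a regular-bipartite-graph decomposition. Concretely, for a legal allocation $\mathbf{L}=(\{\ell_1\},\ldots,\{\ell_n\})$ it takes, for each $i$, an optimal allocation $\mathbf{O}^i$ witnessing that $\ell_i$ is legal for $i$, superposes all $n$ of these perfect matchings to form an $n$-regular bipartite multigraph, peels off the matching corresponding to $\mathbf{L}$, decomposes the remaining $(n-1)$-regular graph into $n-1$ perfect matchings (K\"onig), and compares the total welfare on both sides to force $\mathsf{SW}(\mathbf{L})=\mathsf{SW}(\mathbf{O})$.

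Your route instead packages the problem as a transportation LP and reads the conclusion off complementary slackness against a single fixed optimal dual $(\mathbf{y},\mathbf{p})$. The crucial observation---that every ``good edge'' is tight for that one dual---is exactly what lets you avoid the exchange/alternating-cycle argument you (correctly) identify as fragile. This approach is arguably cleaner: it handles the multi-demand case directly without the unit-demand reduction, and it makes the connection to Walrasian-style prices explicit (your dual $p_x$ are item prices and $y_i$ are per-unit buyer surpluses). The paper's approach, by contrast, is more elementary in that it avoids LP duality and total unimodularity entirely, relying only on the regular-bipartite decomposition; it also yields the unit-demand reduction as a standalone lemma, which the paper reuses elsewhere.

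One small remark on your ``only if'' argument for bullet~1: you show $|S_i|\le k_i$ via the truncation contradiction, but to conclude $|S_i|=k_i$ you should also invoke that an optimal $\mathbf{S}$ allocates all $m$ items (the standing hypothesis), so $\sum_i |S_i|=m=\sum_i k_i$ forces equality everywhere. This is implicit in what you wrote but worth stating.
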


Put informally, the above states that one can mix-and-match items given to a buyer in \textit{different} optimal allocations, and as long as each buyer $i$ receives {\em exactly} $k_i$ items, the resulting allocation is also optimal. While the only if direction is straightforward, it is not a-priori clear that the other direction holds as well.  
%
We prove this direction by reducing the problem to unit-demand valuations and proving for this case. 

This characterization significantly simplifies the problem. 
It allows us to ignore the concrete values, and consider for each item only the set of buyers that receives it in some optimal allocation.
Two items are essentially equivalent if their corresponding sets of buyers coincide. Thus, we group items into {\em equivalence classes}, providing a compact view of the market.  For example, in markets with up to 3 multi-demand buyers, there are at most 8 (non-empty) equivalence classes corresponding to the possible subsets of players, while the total number of items can be arbitrarily large. We construct a new directed graph, termed the {\em item-equivalence graph}, where the vertices are these equivalence classes (refined after intersecting them with the bundles from the initial optimal allocation $\mathbf{X}$), and there is an edge $C \rightarrow D$ whenever the buyer that receives the items in $C$ in the allocation $\mathbf{X}$ also receives every item in $D$ in some optimal allocation.  Figure \ref{bottleneck_example1} depicts the item-equivalence graph for the running example.

\begin{figure}
	\centering\scalebox{0.4}{
		\includegraphics[width=\linewidth]{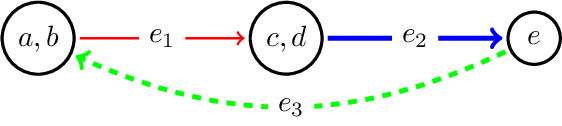}
	}
	\caption{The item-equivalence graph for the running example. E.g., the items $a,b$ are equivalent since the set of buyers that receive any of them in some optimal allocation is the same ($\{1,3\}$).}
	\vspace{-10pt}
	\label{bottleneck_example1}
\end{figure}

We show that there is a correspondence between cycles in the item-equivalence graph and 0-weight cycles in the preference graph.   Thus our challenge is reduced to removing enough edges from the first (and translating these removals back to the second), in a way that eliminates all cycles, but also guarantees the following:  every deviation by any buyer from her prescribed bundle, implied by the edge removals, allows the other buyers to simultaneously compensate for their ``stolen'' items by replacing them with items from other relevant equivalence classes.  The optimal allocation characterization theorem then guarantees that the obtained allocation is indeed optimal.  We devise an edge-removal method satisfying these requirements whenever the number of buyers is at most 3.

We believe this characterization 
theorem and the item equivalence graph may prove useful in other problems related to multi-demand markets.

%

\subsubsection{Techniques: Negative Results.}
%

The original proof of Theorem~\ref{gs_thm} by Gul and Stacchetti considers two cases, and for each case, they construct a different market that does not admit a Walrasian equilibrium.
However, as was observed by Yang~\cite{Yang17}, one of the constructions does not work. Indeed, Yang shows an instance such that the constructed market does admit a Walrasian equilibrium. 
Unfortunately, the error could not be easily fixed, and Yang proceeded by establishing an alternative, incomparable theorem; namely, that for every non gross-substitutes valuation there is a (single) gross-substitutes valuation for which the obtained market has no Walrasian equilibrium. While Yang's version of the assertion requires only a single valuation, this valuation has a complex structure (compared with the simple unit-demand valuations in the original version).  
In Section \ref{sec: MDT} we prove the maximal domain theorem \textit{as it was originally stated} (see Theorem \ref{gs_thm}). The proof relies on a theorem  (see Theorem \ref{B_minus_A_2}) which allows us to consider only the case with the correct construction in the original proof. 


Our proof of Theorem~\ref{maximal_domain_theorem_dynamic_pricing} is driven by the following lemma from Cohen-Addad {\it et al.}~\cite{DBLP:conf/sigecom/Cohen-AddadEFF16} --- in the case of a unique optimal allocation, the existence of optimal dynamic prices implies the existence of Walrasian prices. Once correcting the proof of Gul and Stacchetti, we modify the \textit{correct} construction in their proof to a market with an optimal allocation that is ``almost" unique, which still does not admit a Walrasian equilbrium. We then adapt the lemma in~\cite{DBLP:conf/sigecom/Cohen-AddadEFF16} to show that the existence of optimal dynamic prices in this market also implies the existence of Walrasian prices. The non-existence of Walrasian prices now implies non-existence of optimal dynamic prices.
%

We stress that for proving our maximal domain theorem for dynamic prices, it does not suffice to provide a correct proof of the Gul-Stacchetti theorem, as optimal dynamic prices \textit{do not} imply Walrasian prices. Indeed, in Appendix \ref{appendix:example_dp_no_we} we show an instance that admits optimal dynamic prices but not Walrasian prices.
Interestingly, it was already observed by~\cite{DBLP:conf/sigecom/Cohen-AddadEFF16} that the existence of Walrasian prices does not imply the existence of optimal dynamic prices. Putting these together implies the notions are incomparable.

\vspace{0.1in}
\noindent {\bf Open problems.}  Our results suggest questions for future research. The most obvious one is whether our positive result for 3 multi-demand buyers can be extended to any number of buyers. Recall that some of the tools developed in this work (such as the optimal allocation characterization theorem, and the item equivalence graph) are applicable to multi-demand markets of any size. Thus, they may prove useful in extending our positive result beyond 3 buyers. More generally, it is still open whether any market with gross substitutes valuations admits an optimal dynamic pricing.

\subsection{Related Work}
The notion of Walrasian equilibrium was defined for divisible-goods as early as the 19th century~\cite{walras1896elements}. This notion was later extended to combinatorial markets, where Kelso and Crawford~\cite{kelso1982job} introduced the class of gross-substitutes valuations as a class for which a natural ascending auction reaches a Walrasian equilibrium. Gul and Stacchetti later showed via their maximal domain theorem that gross-substitutes is the frontier of this existence result~\cite{GS99,Yang17}. Gross-substitutes valuations have been introduced independently in different fields, under different names, and under seemingly different definitions~\cite{dress1995rewarding,dress1995well,dress1990valuated,murota1999m}; see \cite{Leme17} for a comprehensive survey of gross-substitutes valuations. In order to circumvent the non-existence of a market equilibrium under broader valuation classes, relaxations of market equilibrium were introduced~\cite{FeldmanGL16,DughmiEFFL16}, and behavioral biases were harnessed~\cite{BabaioffDO18,EzraFF19}.

Posted price mechanisms were shown to be useful in combinatorial markets. Feldman, Gravin and Lucier~\cite{FeldmanGL15} showed how to compute simple ``balanced" static prices in order to obtain at least half of the optimal welfare for submodular valuations, even in the case where the seller has only Bayesian knowledge about the valuations. This idea was generalized by~\cite{DuettingFKL17}, and was shown to be useful even in the face of complementarities between items~\cite{ChawlaMT19}. 

\cite{DBLP:conf/sigecom/Cohen-AddadEFF16} and \cite{HsuMRRV16} were the first to demonstrate that Walrasian prices cannot even approximate the optimal welfare in the absence of a centralized tie-breaking coordinator. Cohen-Addad {\em et al.} resolved this issue by adjusting prices dynamically for unit-demand valuations. They also showed an instance of coverage valuations where a Walrasian equilibrium exists and yet dynamic prices cannot guarantee optimal welfare. On the other hand, Hsu et al. showed that under some conditions, minimal Walrasian prices guarantee near-optimal welfare for a strict subclass of gross-substitutes valuations\footnote{\cite{tran2019finite} recently showed that the class of matroid-based valuations is a strict subclass of gross-substitutes valuations.}. \cite{EzraFRS18} and \cite{EdenFF19} established better guarantees via static pricing for simpler markets (identical items and binary unit-demand, respectively), in comparison to \cite{FeldmanGL15}. 


Posted price mechanisms have been shown to be useful in additional settings with different objective functions, including revenue maximization in combinatorial markets~\cite{GuruswamiHKKKM05},\cite{ChawlaHK07},\cite{ChawlaHMS10},\cite{AnshelevichS17}, cost minimization in online scheduling~\cite{FeldmanFR17},\cite{EdenFFT18},\cite{ImMPS17}, and a variety of other online resource allocation problems~\cite{CohenEFJ15},\cite{CohenEFJ19},\cite{FiatMN08},\cite{AzarBFFS19}.



\section{Preliminaries}
\label{sec:prelim}

We consider a setting with a finite set of indivisible items $M$ (with $m := \left|M\right|$) and a set of $n$ buyers (or players).  Every buyer has a valuation function $v: 2^M \rightarrow \mathbb{R}_{\geq 0}$. As standard, we assume monotonicity and normalization of all valuations, i.e. $v(S)\leq v(T)$ whenever $S\subseteq T$, and $v(\emptyset) =0$.
A valuation profile of $n$ buyers is denoted $\mathbf{v}=(v_1,\ldots,v_n)$ and we assume that it is known by all.  An \textit{allocation} is a vector $\mathbf{A}=(A_1,\ldots, A_n)$ of disjoint subsets of $M$, indicating the bundles of items given to each player (not all items have to be allocated).  The \textit{social welfare} of an allocation $\mathbf{A}$ is given by $\mathsf{SW}(\mathbf{A}) = \sum_{i=1}^{n}v_i(A_i)$.  An \textit{optimal allocation} is an allocation that achieves the maximum social welfare among all allocations.

A \textit{pricing} or a \textit{price vector} is a vector $\mathbf{p} \in \mathbb{R}_{\geq 0}^m$ indicating the price of each item.  We assume a quasi-linear utility, i.e. the \textit{utility} of a buyer $i$ from a bundle $S\subseteq M$ given prices $\mathbf{p}$ is $u_i(S,\mathbf{p}) = v_i(S)-\sum_{x \in S} p_x$.  The \textit{demand correspondence} of buyer $i$ given $\mathbf{p}$ is the collection of utility maximizing bundles $D_{\mathbf{p}}(v_i) := \arg \max_{S\subseteq M} \{u_i (S,\mathbf{p})\}$.

\vspace{0.1in}
\noindent {\bf Dynamic Pricing.}
In the dynamic pricing problem buyers arrive to the market in an arbitrary and unknown order.  Before every buyer arrival new prices are set to the items that are still available, and these prices are based only on the set of buyers that have not yet arrived (their arrival order remains unknown).  The arriving buyer then chooses an arbitrary utility-maximizing bundle based on the current prices and available items.  The goal is to set the prices so that for any arrival order and any tie breaking choices by the buyers, the obtained social welfare is optimal.

We are interested in proving the guaranteed existence of an optimal dynamic pricing for any market composed entirely of buyers from a given valuation class $C$.  It can be easily shown by induction that the problem is reduced to proving the guaranteed existence of item prices $\mathbf{p}$ such that any utility-maximizing bundle of any buyer can be completed to an optimal allocation.  In other words, we can rephrase optimal dynamic pricing as follows:

\begin{definition}
	\label{def:dynamic-pricing}
	An \textit{optimal dynamic pricing} (hereafter, dynamic pricing) for the buyer profile $\mathbf{v}=(v_1,\ldots,v_n)$ is a price vector $\mathbf{p} \in \mathbb{R}_{\geq 0}^{m}$ such that for any buyer $i$ and any $S\in D_{\mathbf{p}}(v_i)$ there is an optimal allocation $\mathbf{O}$ in which player $i$ receives $S$.
\end{definition}



\vspace{0.1in}
\noindent{\bf Valuation Classes.}
Besides unit-demand and multi-demand valuations, which were presented in the introduction, this paper also considers gross-substitutes valuations.  A valuation $v: 2^M \rightarrow \mathbb{R}_{\geq 0}$ is \textit{gross-substitutes} if for any two price vectors $\mathbf{p},\mathbf{q}$ such that $\mathbf{p} \leq \mathbf{q}$ (point-wise), and for any $A \in D_{\mathbf{p}}(v)$ there is a bundle $B \in D_{\mathbf{q}}(v)$ such that $A\cap \{x\in M \mid p_x = q_x\} \subseteq B$.

%


%

\section{Dynamic Pricing for Multi-Demand Buyers} \label{sec: dynamic_pricing}

In this section we prove Theorem \ref{thm:3-md-players}, namely we establish a 
dynamic pricing scheme for up to $n=3$ multi-demand buyers that runs in $poly(m)$ time.  As we shall see, most of the tools we use hold for any number of buyers $n$.  We fix a multi-demand buyer profile $\mathbf{v} = (v_1,\ldots, v_n)$ over the item set $M$, where each $v_i$ is $k_i$-demand. We assume w.l.o.g. that all items are essential for optimality (i.e. all items are allocated in every optimal allocation) since otherwise we can price all unnecessary items at $\infty$ in every round to ensure that no player takes any of them (and price the rest of the items as if the unnecessary items do not exist).  Note that under this assumption, each optimal allocation gives buyer $i$ at most $k_i$ items, for every $i$.  In particular we have $m\leq \sum_{i=1}^{n} k_i$.  For the sake of simplicity we further assume for the rest of this section that every optimal allocation gives each buyer $i$ exactly $k_i$ items, and thus $m=\sum_{i=1}^{n}k_i$.  The case $m < \sum k_i$ introduces substantial technical difficulty and we defer its treatment to Appendix \ref{appendix: m < sum k_i}.  
We first go over the tools used in our dynamic pricing scheme.  With the tools at hand, we present the dynamic pricing scheme for $n=3$ buyers.

\subsection{Tools and Previous Solutions}\label{sec: tools}
We start by presenting
the main combinatorial construct of our solution, namely the preference-graph, which generalizes the construct given by \cite{DBLP:conf/sigecom/Cohen-AddadEFF16} in their solution for unit-demand buyers.  
Then
we explain the obstacles for generalizing the approach of \cite{DBLP:conf/sigecom/Cohen-AddadEFF16} to the multi-demand setting.
Finally,
we develop the necessary machinery needed to overcome these obstacles.  All the tools we develop and their properties hold for any number of buyers $n$.

\vspace{0.1in}
\noindent {\bf The Preference Graph and an Initial Pricing Attempt.}
Let $\mathbf{O}$ be an arbitrary optimal allocation.  The preference graph based on $\mathbf{O}$ is the directed graph $H$ whose vertices are the items in $M$.  Furthermore there is a special `source' vertex denoted $s$. For any two different players $i,j$ and items $x\in O_i,y\in O_j$ we have a directed edge $e=x\rightarrow y$ with weight $w(e)= v_{i}(x)-v_{i}(y)$.  We also have a 0-weight edge $s\rightarrow x$ for every item $x\in M$.  Since an optimal allocation can be computed in $poly(n,m)$ time with value queries (since the valuations are gross substitutes, see \cite{Leme17}), it follows that the preference graph can also be computed in $poly(n,m)$ time with value queries.
When $\left|O_i\right| = 1$ for every $i$, the graph is exactly the one introduced by \cite{DBLP:conf/sigecom/Cohen-AddadEFF16} in their unit-demand solution\footnote{A similar graph structure has been used by Murota in order to compute Walrasian equilibria in gross-substitutes markets (\cite{Murota96}).}.
The proofs of the following two lemmas and corollary are deferred to Appendix \ref{proofs: section dynamic_pricing}.

\begin{lemma} \label{pass_alloc}
	Let $C := x_1 \rightarrow x_2 \rightarrow \cdots x_k \rightarrow x_1$ be a cycle in $H$, where $x_i$ is allocated to player $i$ in $\mathbf{O}$ and $x_i \neq x_j$ for every $i \neq j$.  Then the weight of the cycle is $w(C)=\mathsf{SW}(\mathbf{O}) - \mathsf{SW}(\mathbf{A})$
	where $\mathbf{A}$ is the allocation obtained from $\mathbf{O}$ by transferring $x_{i+1}$ to player $i$ for every $i$ (we identify player $k+1$ with player $1$).
\end{lemma}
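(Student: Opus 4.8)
The plan is to simply unfold the definitions of the edge weights and of social welfare, and observe that the sum telescopes. Recall that for an edge $e = x \rightarrow y$ with $x \in O_i$ and $y \in O_j$, the weight is $w(e) = v_i(x) - v_i(y)$; that is, the weight of an edge measures how much player $i$ (the owner of the tail) loses by surrendering its own item $x$ in exchange for the head item $y$. For the cycle $C = x_1 \rightarrow x_2 \rightarrow \cdots \rightarrow x_k \rightarrow x_1$ in which $x_i \in O_i$, the edge $x_i \rightarrow x_{i+1}$ therefore has weight $v_i(x_i) - v_i(x_{i+1})$ (indices taken mod $k$, identifying player $k+1$ with player $1$). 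Summing over all edges of the cycle gives
\begin{equation}
	w(C) = \sum_{i=1}^{k} \bigl( v_i(x_i) - v_i(x_{i+1}) \bigr) = \sum_{i=1}^{k} v_i(x_i) - \sum_{i=1}^{k} v_i(x_{i+1}).
\end{equation}

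Next I would identify each of these two sums with a social welfare term. Since the $x_i$ are pairwise distinct and each $x_i$ is player $i$'s item in $\mathbf{O}$, and since players are multi-demand (so $v_i$ is additive on the relevant bundles once we have fixed that each player receives exactly $k_i$ items), the quantity $\sum_{i=1}^{k} v_i(x_i)$ is exactly the contribution of the items $x_1, \dots, x_k$ to $\mathsf{SW}(\mathbf{O})$, while the items of $\mathbf{O}$ not on the cycle contribute identically to both $\mathbf{O}$ and $\mathbf{A}$. In the allocation $\mathbf{A}$, player $i$ gives up $x_{i+1}$'s slot — wait, more precisely: $\mathbf{A}$ is obtained from $\mathbf{O}$ by having player $i$ hand $x_{i+1}$... let me restate: in $\mathbf{A}$, player $i$ receives $x_{i+1}$ in place of $x_i$ (and keeps all its other $\mathbf{O}$-items), so player $i$'s bundle in $\mathbf{A}$ is $(O_i \setminus \{x_i\}) \cup \{x_{i+1}\}$. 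By additivity of the multi-demand valuation on a bundle of exactly $k_i$ items, $v_i(A_i) = v_i(O_i) - v_i(x_i) + v_i(x_{i+1})$. Summing over $i$ and over all players (those not on the cycle have $A_i = O_i$) yields $\mathsf{SW}(\mathbf{A}) = \mathsf{SW}(\mathbf{O}) - \sum_{i=1}^k v_i(x_i) + \sum_{i=1}^k v_i(x_{i+1})$, which rearranges to exactly $w(C) = \mathsf{SW}(\mathbf{O}) - \mathsf{SW}(\mathbf{A})$.

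The only genuine subtlety — and the step I would be most careful about — is the claim that each $v_i$ behaves additively here, i.e. that swapping one item for another in a size-$k_i$ bundle changes the value by exactly the difference of the two single-item values. This needs the running assumption that $x_{i+1} \notin O_i$ (guaranteed since the $x_j$ are distinct and $x_{i+1} \in O_{i+1}$), so that $A_i$ really is a set of size $k_i$ and the multi-demand valuation evaluates it as the sum of the top $k_i$ item-values — which, when $A_i$ has exactly $k_i$ items, is just the sum of all of them. One should also note that $\mathbf{A}$ is a legal allocation (the bundles remain disjoint) precisely because the cyclic transfer is a permutation of the items $\{x_1, \dots, x_k\}$ among the players on the cycle. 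Everything else is bookkeeping; the telescoping sum is the heart of the argument.
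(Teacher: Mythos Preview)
Your proposal is correct and takes essentially the same approach as the paper: expand the cycle weight using the edge-weight definition and compare term by term with $\mathsf{SW}(\mathbf{O}) - \mathsf{SW}(\mathbf{A})$. The paper's proof is item-centric (it tracks each item's contribution to the welfare difference, noting that $x_i$ contributes $v_i(x_i)-v_{i-1}(x_i)$), while yours is player-centric (you track each player's bundle change $v_i(A_i)=v_i(O_i)-v_i(x_i)+v_i(x_{i+1})$); these are the same telescoping computation grouped differently, and your explicit remark about additivity on size-$k_i$ bundles is a nice touch that the paper leaves implicit.
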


\begin{corollary} \label{non_negative_cycles}
	Every cycle in $H$ has non-negative weight.
\end{corollary}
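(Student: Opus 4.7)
The plan is to reduce Corollary \ref{non_negative_cycles} to Lemma \ref{pass_alloc} by mimicking the swap construction of that lemma in the more general setting where a cycle may touch several items of the same buyer. First, I would observe that no cycle in $H$ can involve the source $s$, since $s$ has only outgoing edges; hence any cycle $C$ consists entirely of item-vertices, and I can write it as $C = y_1 \rightarrow y_2 \rightarrow \cdots \rightarrow y_\ell \rightarrow y_1$ with the $y_t$'s pairwise distinct items. Let $i_t$ denote the buyer that owns $y_t$ in $\mathbf{O}$; by definition of the edges of $H$, consecutive indices satisfy $i_t \neq i_{t+1}$, but the cycle may well revisit the same buyer at non-adjacent positions.

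Next I would construct an alternative allocation $\mathbf{A}$ from $\mathbf{O}$ by transferring, for every $t$, item $y_{t+1}$ from $i_{t+1}$ to $i_t$ (indices mod $\ell$). Because the cycle items are distinct, each item is reassigned exactly once, so $\mathbf{A}$ is a well-defined allocation. For each buyer $i$, letting $T_i = \{t : i_t = i\}$, buyer $i$ loses the items $\{y_t : t \in T_i\}$ and gains the items $\{y_{t+1} : t \in T_i\}$, so $|A_i| = |O_i| = k_i$ is preserved. This is where the multi-demand assumption does the work: since $|A_i| \le k_i$ and $|O_i| \le k_i$, the valuation is additive on these bundles, i.e., $v_i(A_i) = \sum_{x \in A_i} v_i(\{x\})$ and similarly for $O_i$.

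Using this additivity, I would compute
\[
\mathsf{SW}(\mathbf{A}) - \mathsf{SW}(\mathbf{O}) = \sum_i \sum_{t \in T_i} \bigl(v_i(\{y_{t+1}\}) - v_i(\{y_t\})\bigr) = \sum_{t=1}^{\ell} \bigl(v_{i_t}(y_{t+1}) - v_{i_t}(y_t)\bigr) = -w(C),
\]
i.e., $w(C) = \mathsf{SW}(\mathbf{O}) - \mathsf{SW}(\mathbf{A})$, which is the natural generalization of Lemma \ref{pass_alloc} to cycles that need not have distinct owners. Optimality of $\mathbf{O}$ then immediately gives $w(C) \ge 0$.

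The only genuine subtlety is the case of a cycle that revisits the same buyer's items, which Lemma \ref{pass_alloc} as stated does not literally cover; this is the step I would expect to draw the most scrutiny. However, the obstacle is mild and is dissolved precisely by the additivity of multi-demand valuations on bundles of size at most $k_i$: once this is invoked, every per-buyer value change decomposes cleanly into a sum over the cycle positions she contributes, and the global sum telescopes into $-w(C)$ as above.
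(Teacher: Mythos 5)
Your argument is correct and follows the same underlying idea as the paper: construct $\mathbf{A}$ by shifting items backward along the cycle, compute the welfare difference using the additivity of multi-demand valuations within the cap $k_i$, identify $\mathsf{SW}(\mathbf{O}) - \mathsf{SW}(\mathbf{A})$ with $w(C)$, and conclude from optimality of $\mathbf{O}$. The difference is one of bookkeeping and emphasis. The paper factors the calculation through Lemma \ref{pass_alloc} (which names owners $1,\ldots,k$, tacitly suggesting distinct buyers) and then, in the corollary, decomposes a cycle with repeated vertices into repetition-free subcycles and sums their weights. You instead re-derive the lemma's identity directly for a cycle with distinct items but possibly repeated owners, which is actually the generality needed in the corollary; your observation that the per-buyer value change decomposes as $\sum_{t \in T_i}\bigl(v_i(y_{t+1}) - v_i(y_t)\bigr)$ because each bundle has size $\le k_i$ is exactly the reason the lemma's proof (which argues item by item) goes through. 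So you have, in effect, made explicit a point the paper's proof leaves slightly informal. One small caveat: you declare the $y_t$'s pairwise distinct without justification. This is fine if ``cycle'' means a simple cycle, but the paper's own proof explicitly treats cycles with repeated vertices and splits them into repetition-free subcycles; if you wanted to cover that usage, you would add the same one-line decomposition step.
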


Corollary \ref{non_negative_cycles} implies that the weight of the min-weight path from $s$ to $x$, denoted $\delta (s,x)$, is well-defined for any item $x$.

\begin{lemma} \label{non_negative_prices_util_dif}
	Let $p_x := -\delta (s,x)$ for every item $x$.  Let $i$ be some player, and let $x,y$ be items such that $x \in O_i, y \notin O_i$. Then:
	\begin{enumerate}
		\item $p_x \geq 0$.
		\item $v_{i}(x) - p_x \geq v_{i}(y) - p_y$.
		\item $v_{i}(x) - p_x \geq 0$.
	\end{enumerate}
\end{lemma}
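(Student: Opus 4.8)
The plan is to derive all three items directly from the definition $p_x = -\delta(s,x)$, where $\delta(s,x)$ is the min-weight path from the source $s$ to the vertex $x$ in the preference graph $H$, using only the structure of the edges and Corollary \ref{non_negative_cycles} (non-negative cycles). The key observation throughout is a Bellman-type relaxation property: since $\delta(s,\cdot)$ is a shortest-path function, for every edge $u \rightarrow z$ in $H$ of weight $w(u\rightarrow z)$ we have $\delta(s,z) \leq \delta(s,u) + w(u\rightarrow z)$, i.e. $-p_z \leq -p_u + w(u \rightarrow z)$, which rearranges to $p_u - p_z \leq w(u \rightarrow z)$. Equivalently, for every edge $u \rightarrow z$, the price difference is bounded by the edge weight. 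This single inequality, instantiated on the appropriate edges, yields items 2 and 3; item 1 is the most immediate.

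First, for item 1: every item $x$ has the edge $s \rightarrow x$ of weight $0$, so $\delta(s,x) \leq 0$, hence $p_x = -\delta(s,x) \geq 0$. (I should also note that $\delta(s,x)$ is finite, which follows since $H$ restricted to reachable vertices has no negative cycles by Corollary \ref{non_negative_cycles}, so shortest paths are well-defined and attained by simple paths.) Next, for item 2: since $x \in O_i$ and $y \notin O_i$, there is some player $j \neq i$ with $y \in O_j$ (here I use that all items are allocated in every optimal allocation, part of the standing assumption of the section), so $H$ contains the edge $x \rightarrow y$ with weight $w(x\rightarrow y) = v_i(x) - v_i(y)$. Applying the relaxation inequality to this edge gives $p_x - p_y \leq v_i(x) - v_i(y)$, which rearranges exactly to $v_i(x) - p_x \geq v_i(y) - p_y$, as desired.

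Finally, for item 3: I want $v_i(x) - p_x \geq 0$ for $x \in O_i$. Consider a shortest path from $s$ to $x$; let its last edge be $u \rightarrow x$. If $u = s$, then $p_x = -w(s\rightarrow x) = 0$ and $v_i(x) \geq 0 = p_x$ by normalization/monotonicity of $v_i$. Otherwise $u$ is an item, say $u \in O_j$ for some player $j$; note $j$ need not equal $i$, so a little care is needed. The edge $u \rightarrow x$ has weight $v_j(u) - v_j(x)$, and since it lies on the shortest path, $p_x = p_u - w(u \rightarrow x) = p_u - v_j(u) + v_j(x)$, hence $v_i(x) - p_x = v_i(x) - v_j(x) + v_j(u) - p_u$. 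The term $v_j(u) - p_u$ is the utility player $j$ gets from her own item $u \in O_j$; if we already knew items 2 and 3 for this term we could try to bound it, but to keep the argument clean I would instead prove item 3 by induction on the number of edges in the shortest path to $x$, simultaneously over all items and all players owning them. The cleanest route, and the one I would actually pursue, is: among all items $x$ (over all players $i$ with $x \in O_i$), pick one minimizing $v_i(x) - p_x$; if this minimum is negative, trace back the last edge $u \rightarrow x$ of a shortest $s$-to-$x$ path (necessarily $u \neq s$, else the quantity is $\geq 0$), write $u \in O_j$, and observe $v_j(u) - p_u = (v_i(x) - p_x) + (v_j(u) - v_j(x)) \geq (v_i(x) - p_x) + (v_j(x) - v_j(x))$— here I need $v_j(u) \geq v_j(x)$, which holds because $w(u\rightarrow x) = v_j(u) - v_j(x)$ and this weight is nonnegative along a shortest path segment that is "tight" at a negative-value vertex; more carefully, $v_j(u) - p_u \le v_i(x) - p_x$ would contradict minimality unless equality holds throughout, and then one continues tracing back, eventually reaching $s$ and forcing the quantity to be $\geq 0$. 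The main obstacle, and the step deserving the most care, is exactly this item-3 argument, because the owner of the predecessor vertex $u$ may differ from $i$, so one cannot apply item 2 in a single shot; the resolution is the minimality/backtracking argument above, using that the path is finite and reaches $s$, at which point normalization $v(\emptyset)=0$ closes the loop. Items 1 and 2 are routine given the Bellman relaxation property.
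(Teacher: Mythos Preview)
Your arguments for parts 1 and 2 are correct and essentially identical to the paper's: part 1 uses the direct edge $s\rightarrow x$ of weight $0$, and part 2 is exactly the Bellman/triangle inequality applied to the edge $x\rightarrow y$.

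Your argument for part 3, however, has a genuine gap. First, the displayed identity is miscopied: from tightness of the last edge $u\rightarrow x$ (with $u\in O_j$, $j\neq i$) one gets $v_j(u)-p_u = v_j(x)-p_x$, hence
\[
v_j(u)-p_u \;=\; (v_i(x)-p_x) + \bigl(v_j(x)-v_i(x)\bigr),
\]
not $(v_i(x)-p_x)+(v_j(u)-v_j(x))$. More importantly, even with the correct identity the minimality/backtracking step does not close. Minimality of $v_i(x)-p_x$ only yields $v_j(u)-p_u \ge v_i(x)-p_x$, i.e.\ $v_j(x)\ge v_i(x)$; this inequality points the wrong way and says nothing about the sign of the minimum. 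Your claim that the edge weight $w(u\rightarrow x)=v_j(u)-v_j(x)$ is nonnegative along a shortest path is simply false---edges of $H$ can (and do) carry negative weight; only \emph{cycles} are nonnegative. Consequently the ``trace back until you hit $s$'' plan never forces the desired inequality: at each step you either learn nothing (strict inequality) or merely shift the problem to the predecessor (equality), and there is no mechanism that rules out the strict case. The induction-on-path-length variant fails for the same reason: the cross term $v_i(x)-v_j(x)$ has uncontrolled sign.

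The paper's proof of part 3 uses a different idea that you are missing: it \emph{closes the shortest path into a cycle} and invokes Corollary~\ref{non_negative_cycles}. Concretely, take a shortest path $s\rightarrow x_1\rightarrow\cdots\rightarrow x_k=x$. If $x_1\notin O_i$ then the edge $x\rightarrow x_1$ exists, and the cycle $x_1\rightarrow\cdots\rightarrow x_k\rightarrow x_1$ has weight $\delta(s,x)+w(x\rightarrow x_1) = -p_x + v_i(x) - v_i(x_1)\ge 0$, whence $v_i(x)-p_x\ge v_i(x_1)\ge 0$. If $x_1\in O_i$ one either has $k=1$ (so $p_x=0$) or closes the cycle through $x_2$ instead, using that $x_1\in O_i$ makes $w(x_1\rightarrow x_2)=v_i(x_1)-v_i(x_2)$. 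The point is that nonnegativity of \emph{cycles} (not edges) is what drives part 3; your backtracking argument never invokes it.
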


Note that the utility player $i$ obtains from any bundle of size at most $k_i$ is the sum of the individual utilities obtained by the individual items.  
Thus, Lemma \ref{non_negative_prices_util_dif} shows that setting the prices $p_x = -\delta(s,x)$ almost achieves the requirements of dynamic pricing.  
However, since the inequalities in Lemma \ref{non_negative_prices_util_dif} are not strict, the incoming player might deviate from the designated bundle.  



\vspace{0.1in}
\noindent {\bf Solution for Unit-Demand Valuations and its Failure to Generalize.}
The inequalities of Lemmas~\ref{non_negative_prices_util_dif} can be made strict by decreasing the weight of all edges by an appropriately selected $\varepsilon > 0$, but in the case $H$ has zero-weight cycles, this can introduce negative cycles to $H$, in which case $\delta(s,x)$ is not defined for any $x$ in such cycle. To circumvent this issue,~\cite{DBLP:conf/sigecom/Cohen-AddadEFF16} remove every edge that participates in a 0-weight cycle in $H$. Therefore, by choosing a small enough $\epsilon$ to decrease from the remaining edges, the remaining cycles are guaranteed to be strictly positive. 
Removing an edge $x \rightarrow y$ for $x \in O_i$, $y \notin O_i$ cancels the preference guarantee of Lemma \ref{non_negative_prices_util_dif} (part 2), leading to a possible deviation by buyer $i$ from taking $x$ to taking $y$. However, since 0-weight cycles correspond to alternative optimal allocations (see Lemma \ref{pass_alloc} with $w(C)=0$), then this is not a problem:  if the edge $x \rightarrow y$ was removed, then there is an optimal allocation in which player $i$ receives $y$ instead of $x$.  As for the edges $x\rightarrow y$ that were not removed, the $\epsilon$ decrement causes $i$ to strongly prefer $x$ over $y$.  The other inequalities of Lemma \ref{non_negative_prices_util_dif} would also be strict, and we are thus guaranteed that the incoming player indeed takes a one-item bundle that is part of some optimal allocation, as desired.  

This approach works in the unit-demand setting, but poses problems in the multi-demand setting, as illustrated in the running example (presented in the introduction, see Figure \ref{fig_simple_bottleneck}).
%
%
The example shows that
a more sophisticated method of eliminating 0-weight cycles must be employed instead of simply removing all edges that participate in some 0-weight cycle.  To be more precise, we state our informal goal:  
\begin{quote}
{\it
Remove a set of edges from the preference graph so that no 0-weight cycles are left, and every possible deviation implied by the removed edges is consistent with some optimal allocation.
}
\end{quote}



\vspace{0.1in}
\noindent {\bf Legal Allocations.}
\begin{definition} $ $
	\begin{itemize} 
		\item An item $x \in M$ is \textit{legal} for player $i$ if there is some optimal allocation $\mathbf{X} = (X_1, \ldots, X_n)$ such that $x \in X_i$.
		\item A bundle $S \subseteq M$ is \textit{legal} for player $i$ if $|S| = k_i$ and every $x \in S$ is legal for player $i$.
		\item A \textit{legal allocation} $\mathbf{A} = (A_1, \ldots, A_n)$ is an allocation in which $A_i$ is legal for player $i$, for every $i$.  
	\end{itemize}
\end{definition}

In a legal allocation every player $i$ receives exactly $k_i$ items, each of which is allocated to her in some optimal allocation. Note that a legal bundle for buyer $i$ might not form a part of any optimal allocation (e.g., the bundle $\{c,d\}$ for buyer 1 in the running example).
The following theorem provides a characterization of the collection of optimal allocations in the given market $\mathbf{v}$.  The subsequent Corollary follows directly from the theorem and Definition \ref{def:dynamic-pricing}.

\begin{theorem}
	\label{legal_iff_optimal} An allocation is legal if and only if it is optimal.
\end{theorem}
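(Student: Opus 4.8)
The plan is to prove the two directions separately, with the hard work being the ``legal $\Rightarrow$ optimal'' direction.

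\textbf{The easy direction (optimal $\Rightarrow$ legal).} Suppose $\mathbf{A}$ is optimal. Under the standing assumption every optimal allocation gives buyer $i$ exactly $k_i$ items, so $|A_i| = k_i$. And if $x \in A_i$, then $\mathbf{A}$ itself witnesses that $x$ is legal for $i$. Hence $A_i$ is a legal bundle for each $i$, so $\mathbf{A}$ is legal.

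\textbf{The hard direction (legal $\Rightarrow$ optimal).} Let $\mathbf{A} = (A_1,\dots,A_n)$ be a legal allocation; I want to show $\mathsf{SW}(\mathbf{A}) = \mathsf{SW}(\mathbf{O})$ for an optimal allocation $\mathbf{O}$. The strategy, as hinted in the introduction, is to reduce to the unit-demand case. The idea: replace each $k_i$-demand buyer $i$ by $k_i$ distinct unit-demand ``clones,'' each having exactly the same per-item values as $v_i$. Since a $k_i$-demand buyer's value for a bundle of size $\le k_i$ is the sum of the individual item values, an optimal allocation in the cloned market (where each clone gets at most one item, and total demand equals supply so each clone gets exactly one) corresponds to an optimal allocation in the original market with the same social welfare, and vice versa — here one must check that splitting buyer $i$'s bundle arbitrarily among its clones is harmless, which it is because the clones are identical. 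Under this correspondence, ``$x$ is legal for buyer $i$'' translates to ``$x$ is legal for some clone of $i$,'' i.e. there is an optimal matching in the cloned market using the edge from that clone to $x$. So it suffices to prove the theorem when every $k_i = 1$, i.e.: \emph{in a unit-demand market with $m = n$ where every item is allocated in every optimal allocation, any allocation in which each buyer gets a single item that is legal for her is optimal.}

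\textbf{The unit-demand core.} Here an allocation is a perfect matching between buyers and items, and ``optimal'' means ``maximum-weight perfect matching'' in the complete bipartite graph with edge weights $v_i(x)$. The claim becomes: if $M$ is a perfect matching all of whose edges lie in \emph{some} maximum-weight perfect matching, then $M$ itself is maximum-weight. This is a standard fact about maximum-weight matchings / the assignment polytope. One clean way: take a maximum matching $\mathbf{O}$ and consider the symmetric difference $\mathbf{O} \oplus \mathbf{A}$, which is a disjoint union of alternating cycles. It suffices to show each such cycle has zero weight-difference (then swapping along all of them turns $\mathbf{O}$ into $\mathbf{A}$ without changing total weight). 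For a cycle $x_1 \to x_2 \to \cdots \to x_r \to x_1$ where $x_{i}$ is matched to buyer $i$ in $\mathbf{O}$ and to buyer $i-1$ in $\mathbf{A}$ (indices mod $r$), Lemma~\ref{pass_alloc} (applied to the corresponding cycle in the preference graph $H$ built from $\mathbf{O}$) gives that its weight equals $\mathsf{SW}(\mathbf{O}) - \mathsf{SW}(\mathbf{A}')$ where $\mathbf{A}'$ differs from $\mathbf{O}$ only along this one cycle; by Corollary~\ref{non_negative_cycles} this is $\ge 0$, so $\mathsf{SW}(\mathbf{A}') \le \mathsf{SW}(\mathbf{O})$, with equality iff the cycle has weight $0$. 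It remains to rule out a strictly positive cycle. But the reverse cycle $x_1 \to x_r \to x_{r-1} \to \cdots \to x_2 \to x_1$ is \emph{also} a cycle in $H$ (its edges go between bundles of distinct buyers in $\mathbf{O}$, which is all that $H$ requires), and for that to make sense I need each edge $x_{i+1} \to x_i$ to exist in $H$, i.e. $x_{i+1} \in O_{i+1}$ and $x_i \in O_i$ with distinct players — true. By Corollary~\ref{non_negative_cycles} the reverse cycle also has weight $\ge 0$; but the forward and reverse cycle weights sum to $\sum_i \big(v_{?}(x_i) - v_?(x_{i+1})\big) + \big(v_?(x_{i+1}) - v_?(x_i)\big)$ — wait, the two cycles use different buyers' valuations on the edges, so they need not be negatives of each other. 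This is exactly the subtle point. Instead I use legality directly: because $\mathbf{A}$ is legal, for each edge $x_{i} \to x_{i-1}$ of $\mathbf{A}$ (buyer $i-1$ gets $x_i$) there is an optimal allocation placing $x_i$ with buyer $i-1$; stitching these together along the cycle, together with the fact that each $x_j$ is also placed with buyer $j$ in $\mathbf{O}$, produces via an exchange argument an optimal allocation realizing the cyclic shift, which forces the cycle weight to be $0$. Summing over all cycles of $\mathbf{O} \oplus \mathbf{A}$ gives $\mathsf{SW}(\mathbf{A}) = \mathsf{SW}(\mathbf{O})$.

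\textbf{Main obstacle.} The crux is the last step — showing every alternating cycle in $\mathbf{O} \oplus \mathbf{A}$ has zero weight. Non-negativity (hence ``$\le$'') is immediate from Corollary~\ref{non_negative_cycles}, so the real content is the matching lower bound, i.e. extracting from the per-edge legality witnesses a \emph{single} optimal allocation that performs the whole cyclic shift at once. I expect this to require a careful exchange/uncrossing argument on maximum-weight matchings (equivalently, an argument that the set of ``legal'' edges is exactly the union of optimal matchings and is closed under taking alternating cycles), and getting the clone reduction to interact cleanly with legality is the second place where care is needed.
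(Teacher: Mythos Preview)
Your high-level structure matches the paper: the easy direction is immediate, and for the hard direction you reduce to unit-demand by cloning each $k_i$-demand buyer into $k_i$ identical unit-demand copies. That reduction is exactly what the paper does (their Lemma~\ref{lem:equivalence}), and your description of it is fine.

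The gap is in your unit-demand core. You decompose $\mathbf{O}\oplus\mathbf{A}$ into alternating cycles and try to show each cycle has weight $0$. Non-negativity is indeed free from Corollary~\ref{non_negative_cycles}. For the reverse inequality you write that the per-edge legality witnesses $\mathbf{O}^i$ can be ``stitched together along the cycle'' to produce a single optimal allocation realizing the cyclic shift. But you do not give that stitching argument, and as stated it is circular: showing that the local witnesses can be combined into one optimal allocation performing the shift is precisely the statement you are trying to prove (restricted to the buyers on the cycle). The witnesses $\mathbf{O}^i$ are different optimal matchings that may disagree arbitrarily outside the single edge they certify, so ``stitching'' is not a one-line exchange; some genuine mechanism is needed here, and you acknowledge as much in your final paragraph.

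The paper supplies that mechanism with a global counting argument rather than a cycle-by-cycle one. With $\mathbf{L}=(\{\ell_1\},\dots,\{\ell_n\})$ legal and $\mathbf{O}^i$ an optimal allocation giving $\ell_i$ to buyer $i$, sum the identities $\mathsf{SW}(\mathbf{O}^i)=\mathsf{SW}(\mathbf{O})$ over $i$ to get $n\cdot\mathsf{SW}(\mathbf{O})=\sum_i \mathsf{SW}(\mathbf{O}^i)$. View the right side as an $n$-regular bipartite multigraph on buyers versus items (each $\mathbf{O}^i$ is a perfect matching, since all items are allocated). The matching $\mathbf{L}$ itself sits inside this graph, because the edge $\{i,\ell_i\}$ comes from $\mathbf{O}^i$. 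Remove it; the remainder is $(n-1)$-regular and hence decomposes into $n-1$ perfect matchings $\mathbf{P}^1,\dots,\mathbf{P}^{n-1}$. Thus $n\cdot\mathsf{SW}(\mathbf{O})=\mathsf{SW}(\mathbf{L})+\sum_j \mathsf{SW}(\mathbf{P}^j)$, and since each summand on the right is at most $\mathsf{SW}(\mathbf{O})$, all inequalities are equalities, in particular $\mathsf{SW}(\mathbf{L})=\mathsf{SW}(\mathbf{O})$. This regular-bipartite-decomposition step is the missing idea in your proposal; it replaces the unproven ``stitching'' and makes the argument go through in a few lines.
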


\begin{corollary}\label{main_theorem}
	A price vector $\mathbf{p}$ is a dynamic pricing if for every player $i$ and $S \in D_{\mathbf{p}}(i)$, $S$ is legal for player $i$ and there exists an allocation of the items $M\setminus S$ to the other players in which every player receives a bundle that is legal for her.
\end{corollary}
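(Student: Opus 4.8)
The plan is to obtain the corollary immediately from the characterization in Theorem~\ref{legal_iff_optimal} together with the reformulated notion of dynamic pricing in Definition~\ref{def:dynamic-pricing}. Fix a price vector $\mathbf{p}$ satisfying the stated hypothesis. By Definition~\ref{def:dynamic-pricing}, to conclude that $\mathbf{p}$ is a dynamic pricing it suffices to show that for every player $i$ and every $S \in D_{\mathbf{p}}(i)$ there exists an optimal allocation in which player $i$ receives $S$.

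So fix such an $i$ and such an $S$. By the hypothesis, $S$ is legal for player $i$, and there is an allocation $(A_j)_{j \neq i}$ of the items of $M \setminus S$ to the remaining players in which each $A_j$ is legal for player $j$. I would then glue these together: define an allocation $\mathbf{A}$ by setting $A_i := S$ and keeping $A_j$ for $j \neq i$. This is a genuine allocation, since $S$ and the $A_j$ ($j\neq i$) are pairwise disjoint subsets of $M$ (the $A_j$ lie in $M\setminus S$ and are disjoint among themselves). Moreover every bundle of $\mathbf{A}$ is legal for its owner --- including $A_i = S$, by hypothesis --- so $\mathbf{A}$ is a legal allocation. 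By Theorem~\ref{legal_iff_optimal}, $\mathbf{A}$ is therefore optimal, and since $A_i = S$ it witnesses exactly what Definition~\ref{def:dynamic-pricing} demands. As $i$ and $S$ were arbitrary, $\mathbf{p}$ is a dynamic pricing.

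There is essentially no obstacle here: the entire argument is a direct invocation of Theorem~\ref{legal_iff_optimal} and the equivalent formulation of dynamic pricing, and the only thing worth spelling out is that attaching $S$ to the legal bundles of the other players yields a bona fide allocation, which is immediate from disjointness. (For completeness one may also note that, under the running assumption $m=\sum_i k_i$, the legal bundles $A_j$ automatically partition $M\setminus S$, since $|S|=k_i$ and $|A_j|=k_j$, so in fact all items are allocated in $\mathbf{A}$; but this is not needed for the conclusion, only for matching the convention of Section~\ref{sec: dynamic_pricing}.)
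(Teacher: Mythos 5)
Your argument is correct and is exactly the one the paper has in mind: the paper remarks that the corollary "follows directly from the theorem and Definition \ref{def:dynamic-pricing}," and you have simply spelled out the gluing step that makes that immediate. Nothing more is needed.
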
 
Thus, going back to our informal goal, Theorem \ref{legal_iff_optimal} determines the deviations from the bundles $O_i$ which are tolerable.  A buyer can only deviate to a bundle which is legal for her, in a way that the leftover items can be partitioned ``legally'' among the rest of the buyers.
We now prove Theorem \ref{legal_iff_optimal}.

\begin{proof}
	Legality follows from optimality due to our assumption that every optimal allocation allocates exactly $k_i$ items to every player $i$.  For the other direction we first prove the theorem in the special case of a unit-demand market, i.e. $k_i = 1$ for every player $i$, and then we show how the general case reduces to the unit-demand market case.
	\begin{lemma} \label{matching_markets}
		If $k_i = 1$ for all $i$ then any legal allocation is optimal.
	\end{lemma}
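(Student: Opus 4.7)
The plan is to fix any optimal allocation $\mathbf{O} = (o_1, \ldots, o_n)$ and show that an arbitrary legal allocation $\mathbf{A} = (a_1, \ldots, a_n)$ satisfies $\mathsf{SW}(\mathbf{A}) = \mathsf{SW}(\mathbf{O})$. Since $k_i = 1$ and $m = n$, both $\mathbf{O}$ and $\mathbf{A}$ are perfect matchings of buyers to items, and hence are related by a permutation $\pi$ of $[n]$ defined by $a_i = o_{\pi(i)}$. I would decompose $\pi$ into its disjoint cycles and prove that each cycle contributes zero to the welfare gap.

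For a cycle $C = (i_1, i_2, \ldots, i_k)$ of $\pi$ (with $\pi(i_j) = i_{j+1}$, indices mod $k$), the items $o_{i_1}, \ldots, o_{i_k}$ form a directed cycle $\Gamma_C \colon o_{i_1} \to o_{i_2} \to \cdots \to o_{i_k} \to o_{i_1}$ in the preference graph $H$ based on $\mathbf{O}$. By Lemma~\ref{pass_alloc}, $w(\Gamma_C)$ equals $\mathsf{SW}(\mathbf{O}) - \mathsf{SW}(\mathbf{A}^C)$, where $\mathbf{A}^C$ agrees with $\mathbf{A}$ on $\{i_1, \ldots, i_k\}$ and with $\mathbf{O}$ elsewhere; summing over the cycles of $\pi$ gives $\mathsf{SW}(\mathbf{O}) - \mathsf{SW}(\mathbf{A}) = \sum_C w(\Gamma_C)$. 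Corollary~\ref{non_negative_cycles} already supplies $w(\Gamma_C) \geq 0$, so the goal reduces to establishing the reverse inequality $w(\Gamma_C) \leq 0$ for every cycle $C$.

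To exploit legality, I would use that for each $j$ there exists an optimal $\mathbf{X}^j$ with $X^j_{i_j} = a_{i_j} = o_{i_{j+1}}$. Applying Lemma~\ref{pass_alloc} to the cycles of the permutation mapping $\mathbf{O}$ to $\mathbf{X}^j$, the optimality of $\mathbf{X}^j$ forces their total weight to be zero, and the non-negativity of each individual cycle weight then forces each such cycle to have weight zero. In particular, the edge $e_j = o_{i_j} \to o_{i_{j+1}}$ belongs to a zero-weight cycle $D_j$ in $H$, which decomposes as $e_j$ followed by a directed path $P_j$ from $o_{i_{j+1}}$ back to $o_{i_j}$ satisfying $w(P_j) = -w(e_j)$.

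Concatenating $P_k, P_{k-1}, \ldots, P_1$ in order produces a closed walk in $H$ rooted at $o_{i_1}$ of total weight $\sum_j w(P_j) = -w(\Gamma_C)$. Since every closed walk in a directed graph decomposes (with multiplicities) as a union of simple directed cycles sharing the same edge multiset, and since every simple cycle in $H$ has non-negative weight by Corollary~\ref{non_negative_cycles}, this closed walk has non-negative weight. Hence $w(\Gamma_C) \leq 0$, and combined with the earlier bound we conclude $w(\Gamma_C) = 0$ for every cycle $C$, giving $\mathsf{SW}(\mathbf{A}) = \mathsf{SW}(\mathbf{O})$ as desired. The main obstacle is precisely this concatenation step: upgrading the local fact that each $e_j$ lies in a zero-weight cycle (a consequence of legality for a single item) into the global conclusion that the aggregate cycle $\Gamma_C$ has zero weight is not automatic, and the closed-walk construction together with cycle decomposition is the conceptual heart of the proof.
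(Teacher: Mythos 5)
Your argument is correct, and it takes a genuinely different route from the paper's own proof. The paper's proof is global and combinatorial: it collects, for each $i$, an optimal allocation $\mathbf{O}^i$ in which player $i$ receives the legal item $\ell_i$, observes that the $n$ perfect matchings $\mathbf{O}^1,\ldots,\mathbf{O}^n$ together form an $n$-regular bipartite multigraph containing the matching $\mathbf{L}$, removes $\mathbf{L}$, decomposes the remaining $(n-1)$-regular bipartite graph into $n-1$ perfect matchings (a K\"onig-type fact), and reads off $n\,\mathsf{SW}(\mathbf{O})=\mathsf{SW}(\mathbf{L})+\sum_{i}\mathsf{SW}(\mathbf{P}^i)$ with every summand at most $\mathsf{SW}(\mathbf{O})$, forcing equality. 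You instead stay entirely inside the preference-graph machinery: you reduce the claim to each cycle $\Gamma_C$ of the permutation from $\mathbf{O}$ to the legal allocation having zero weight in $H$, extract from each witnessing optimal allocation $\mathbf{X}^j$ a zero-weight cycle $D_j$ through the edge $e_j$ (using that $\mathsf{SW}(\mathbf{X}^j)=\mathsf{SW}(\mathbf{O})$ forces every cycle in the permutation from $\mathbf{O}$ to $\mathbf{X}^j$ to have weight exactly zero, by Lemma~\ref{pass_alloc} and Corollary~\ref{non_negative_cycles}), strip $e_j$ to obtain a path $P_j$ of weight $-w(e_j)$, and concatenate the $P_j$ into a closed walk of total weight $-w(\Gamma_C)$, which is non-negative by decomposing closed walks into simple cycles and again applying Corollary~\ref{non_negative_cycles}. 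Both arguments are sound. The paper's is shorter and essentially self-contained beyond basic bipartite graph theory; yours pays a bit more set-up but has the conceptual advantage of deriving the statement directly from the same zero-weight-cycle structure that drives the entire pricing scheme, which makes the later `legal $\Leftrightarrow$ $0$-weight-cycle' correspondence (Lemma~\ref{item-equivalence_cycles}) feel inevitable. One small point to tighten in a formal write-up: fixed points of the permutation (indices $i$ with $a_i=o_i$) have no associated cycle $\Gamma_C$ in $H$ and contribute nothing to the welfare gap; they should be set aside explicitly before forming the $\Gamma_C$.
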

	\begin{proof}
		
		Let $\mathbf{L} = (\{\ell_1\},\ldots, \{\ell_n\})$ be a legal allocation, and let $\mathbf{O} = (\{o_1\},\ldots, \{o_n\})$ be some optimal allocation.	We show that $\mathsf{SW}(\mathbf{L}) = \mathsf{SW}(\mathbf{O})$. By definition, for every $i$ there is an optimal allocation $\mathbf{O}^i$ in which the item $\ell_i$ is allocated to player $i$ and optimality implies that $\mathsf{SW}(\mathbf{O}^i) = \mathsf{SW}(\mathbf{O})$.  Summing over all $i$ we get
		\begin{align}
		\sum_{i=1}^n \mathsf{SW}(\mathbf{O}) = \sum_{i=1}^{n} \mathsf{SW}(\mathbf{O}^i) \label{summing}
		\end{align}
		
		The right side in (\ref{summing}) accounts for the welfare of $n$ optimal allocations, each of which is by itself a sum of $n$ terms of the form $v_j(x)$ where every item and player appears exactly once (recall the assumption that every optimal allocation allocates all items, i.e., is a perfect matching).  Thus, the right side in (\ref{summing}) is a sum of $n^2$ terms in which every item and player appears exactly $n$ times.  Consider the bipartite graph $G=(X,Y;E)$ in which the left side $X$ is the set of items,  the right side $Y$ is the set of players and there is an edge $\{x,j\}$ for each summand $v_j(x)$ appearing in the right side of (\ref{summing}).  Then $G$ is an $n$-regular bipartite graph (possibly with multi-edges), and note that the (perfect) matching induced by the legal allocation $\mathbf{L}$ appears in $G$ (since the term $v_i(\ell_i)$ appears in $\mathsf{SW}(\mathbf{O}^i)$, for every i). If we erase the edges of that matching from $G$ then we are left with an $(n-1)$-regular bipartite graph.  It is a well-known fact that in this case the edges of $G$ can be split into $n-1$ perfect matchings $\mathbf{P}^1,\ldots, \mathbf{P}^{n-1}$.  Thinking of these matchings as allocations, and together with the allocation $\mathbf{L}$, we get
		\[
		\sum_{i=1}^n \mathsf{SW}(\mathbf{O}) = \sum_{i=1}^{n} \mathsf{SW}(\mathbf{O}^i) = \mathsf{SW}(\mathbf{L}) + \sum_{i=1}^{n-1} \mathsf{SW}(\mathbf{P}^i).
		\] 
		Since $\mathsf{SW}(\mathbf{O}) \geq \mathsf{SW}(\mathbf{L})$ and $\mathsf{SW}(\mathbf{O}) \geq \mathsf{SW}(\mathbf{P}^i)$ for every $i$ (by optimality of $\mathbf{O}$), it must be the case that all weak inequalities are in fact equalities, establishing in particular that $\mathsf{SW}(\mathbf{O}) = \mathsf{SW}(\mathbf{L})$, as desired.
	\end{proof}

	We now describe the reduction from the general case to the unit-demand market case. We are given a market with $n$ agents where each agent is $k_i$-demand and $\sum_i k_i=m$.  Our reduction keeps the same set of items $M$, but splits each agent $i$ to $k_i$ identical unit-demand agents, where for each copy, the value of the agent for an item $x\in M$ is simply $v_i(x)$. Clearly, the number of unit-demand agents as a result of this reduction is $\sum_i k_i = m$. Let $N_i$ be the set of unit-demand bidders that corresponds to agent $i$, and let $N'=\bigcup_{i\in N} N_i$.
	
	Given an allocation $\mathbf{O}=(O_i)_{i\in N}$ for the original market, where $|O_i|= k_i$ for every $i$, the corresponding allocation in the unit-demand market splits the $|O_i|$ different items arbitrarily between the unit-demand bidders corresponding to bidder $i$, with each bidder receiving exactly one item. Notice that the social welfare achieved by this allocation is the same as in the original allocation.  Similarly, given an allocation $\mathbf{O'}=\{O'_\ell\}_{\ell\in N'}$ in the unit-demand market, the corresponding allocation in the original market gives all the items allocated to agents in $N_i$ to agent $i$. Again, since $|N_i|=k_i$ the resulting allocation achieves the same welfare as the original one.
	
	\begin{lemma} \label{lem:equivalence}
		An allocation is legal in the original market if and only if it is legal in the corresponding unit-demand market. An allocation is optimal in the original market if and only if it is optimal in the corresponding unit-demand market.
	\end{lemma}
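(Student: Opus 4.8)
The plan is to reduce both equivalences to two facts: \textbf{(i)} the original market $\mathbf{v}$ and the corresponding unit-demand market over $N'$ have the same optimal social welfare; and \textbf{(ii)} a single item $x$ is legal for agent $i$ in the original market if and only if it is legal for every copy $\ell\in N_i$ in the unit-demand market. Given these, the lemma is pure bookkeeping. For optimality: the split/merge correspondence preserves $\mathsf{SW}$ exactly (as already observed just before the lemma), so by (i) it carries optimal allocations to optimal allocations in both directions; and every optimal allocation of $\mathbf{v}$ has $|O_i|=k_i$ (the standing assumption of this section), while every optimal allocation of the unit-demand market is a perfect matching --- merge it, note by (i) that its merge attains the original optimum and hence allocates all $m$ items, which forces each copy in $N_i$ to hold exactly one of the $k_i$ items given to $i$ --- so the correspondence is defined on exactly the relevant allocations. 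For legality: if $\mathbf{A}=(A_1,\dots,A_n)$ is legal in the original market then $|A_i|=k_i=|N_i|$, so its split assigns one item to each copy in $N_i$, and by (ii) that item is legal for the copy iff it is legal for $i$; hence $\mathbf{A}$ is legal iff its split is, and symmetrically when merging.

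To prove (i), I would use the two maps already described in the text, each restricted to where it applies. Any optimal allocation $\mathbf{O}$ of $\mathbf{v}$ satisfies $|O_i|=k_i$ for all $i$, so the split map applies and yields a unit-demand allocation of the same welfare; thus the unit-demand optimum is at least the original optimum. Conversely, the merge map sends any optimal unit-demand allocation to an original allocation of the same welfare (using $|N_i|=k_i$, so that the $k_i$-demand valuation of the merged bundle --- which has at most $k_i$ items --- is exactly the sum of the corresponding copies' values); thus the original optimum is at least the unit-demand optimum. Hence the two optima coincide, and the optimality part of the correspondence follows as in the first paragraph.

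For (ii), the forward direction: if $x$ is legal for $i$, pick an optimal original allocation $\mathbf{X}$ with $x\in X_i$ and split it so that $x$ goes to a chosen copy $\ell\in N_i$; by (i) this split is optimal, so $x$ is legal for $\ell$. Since the copies in $N_i$ are identical agents, swapping the bundles of $\ell$ and any other $\ell'\in N_i$ in this allocation keeps it optimal and puts $x$ at $\ell'$, so $x$ is legal for every copy in $N_i$. The backward direction is the mirror image: an optimal unit-demand allocation placing $x$ at some $\ell\in N_i$ merges --- optimally, again by (i) --- to an optimal original allocation placing $x$ at $i$, so $x$ is legal for $i$. Combining (i) and (ii) as in the first paragraph proves the lemma.

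The genuinely routine content here is the welfare bookkeeping; the one place that needs attention is the argument that an optimal allocation of the unit-demand market must be a perfect matching, which is precisely where the section's standing assumption ($m=\sum_i k_i$ and every optimal allocation of $\mathbf{v}$ gives agent $i$ exactly $k_i$ items) is used --- without it the merge map could fuse distinct copies' items into one agent's bundle and the correspondence would misbehave. The single small idea driving (ii) is the interchangeability of the identical copies in $N_i$, obtained by swapping their bundles in a witnessing optimal allocation.
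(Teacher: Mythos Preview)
Your proposal is correct and follows essentially the same route as the paper: both arguments rest on the split/merge correspondence and the observation that it preserves social welfare. The paper's proof is a two-sentence sketch (optimal $\leftrightarrow$ optimal because welfare is preserved; legal $\leftrightarrow$ legal by ``similar reasoning''), whereas you unpack the details the paper leaves implicit --- in particular, you isolate the item-level fact (ii) that $x$ is legal for $i$ iff it is legal for every copy in $N_i$, and you flag exactly where the standing assumption $m=\sum_i k_i$ with $|O_i|=k_i$ is doing work (forcing optimal unit-demand allocations to be perfect matchings). One small wrinkle: your sentence ``which forces each copy in $N_i$ to hold exactly one of the $k_i$ items given to $i$'' is not quite a complete argument on its own (knowing the merged bundle has size $k_i$ does not by itself rule out one copy holding two items and another holding none); the missing step is that if a copy held two positively-valued items, redistributing one to an empty copy would strictly raise unit-demand welfare, and the essentiality assumption rules out zero-valued items. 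The paper's proof glosses over exactly the same point, so this is not a divergence from the paper but rather a shared informality.
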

	\begin{proof}
		Consider an optimal allocation in the original market and recall our assumption that optimal allocations give each player $i$ exactly $k_i$ items. The corresponding allocation in the unit-demand market obtains the same value. Similarly, given an optimal allocation in the unit-demand market where each agent gets one item, the corresponding allocation in original market obtains the same value. Therefore, an allocation is optimal in the original market if and only if it is optimal in the corresponding unit-demand market.
		
		Similar reasoning shows that given a legal allocation in the original market, the corresponding allocation in the unit-demand market is legal as well and vice versa.
	\end{proof}
	
	To complete the proof of Theorem \ref{legal_iff_optimal}, take a legal allocation in the original market. According to Lemma~\ref{lem:equivalence} it is also legal in the corresponding unit-demand market. From Lemma \ref{matching_markets}, we get that it is optimal in the unit-demand market. Again, by Lemma~\ref{lem:equivalence}, we get that the corresponding allocation in the original market is also optimal.
\end{proof} 

\vspace{0.1in}
\noindent {\bf The Item-Equivalence Graph.}
Let $\mathbf{O}$ be some optimal allocation and $H$ the corresponding preference graph.  For every player $i$ and set of players $C \subseteq \left[n\right]\setminus \{i\}$, we denote by $B_{i,C}$ the set of items
allocated to buyer $i$ in $\mathbf{O}$, and whose set of players to which they are legal is exactly $\{i\} \cup C$.
For example, $B_{1,\{2,3\}}$ is the set of items $x\in O_1$ such that there are optimal allocations $\mathbf{O}'$,$\tilde{\mathbf{O}}$ in which $x$ is allocated to players 2, 3 (respectively), and for any other player $j \notin \{1,2,3\}$, there is no optimal allocation in which $x$ is allocated to $j$.  Formally,
\begin{eqnarray*}
	B_{i,C}:=\left\{x \in O_i \biggm| 	\begin{matrix*}[l] 	\forall j\in \{i\}\cup C
		&x \text{ is legal for } j \\
		\forall j\notin \{i\}\cup C 
		&x \text{ is not legal for } j \end{matrix*} \right\}
\end{eqnarray*}
We make a few observations:
\begin{itemize}
	\item The sets $B_{i,C}$ form a partition of $M$ (some of these sets might be empty sets).	
	\item Let $x \in O_i$ and $y \in O_j$ for $i \neq j$.  If $x \rightarrow y$ participates in a 0-weight cycle in $H$ and $y \in B_{j,C}$, then $i \in C$.
\end{itemize}

The second observation holds since if $x \rightarrow y$ participates in a 0-weight cycle, then there is an alternative optimal allocation in which $y$ is allocated to player $i$ (see Lemma \ref{pass_alloc} with $w(C) = 0$).

\begin{definition}[Item-Equivalence Graph]
	Given an optimal allocation $\mathbf{O}$, its associated \textit{item-equivalence graph} is the directed graph $B = \left(T,D \right)$ with vertices ~$T=\{B_{i,C}\neq \emptyset \mid i\in\left[n\right], C\subseteq \left[n\right]\setminus\{i\}\}$
	and directed edges $D = \left\{ B_{i, C_1} \rightarrow B_{j, C_2} \mid  i \in C_2 \right\}$.

\end{definition}

For example, $(B_{1,\emptyset} \rightarrow B_{2,\{1,4\}})$ and $(B_{2,\{1,5\}}\rightarrow B_{6,\{2\}})$ are edges in the item-equivalence graph (assuming that the participating sets are non-empty), whereas, for example, $(B_{1,\emptyset} \rightarrow B_{1,\{2\}})$ and $(B_{2,\{1\}} \rightarrow B_{3,\{1,4\}})$ are not.  Note also that the number of vertices is at most $m$.

The next Lemma shows that the item-equivalence graph can be computed efficiently.  The proof is deferred to Appendix \ref{proofs: section dynamic_pricing}.

\begin{lemma}\label{poly_time_item-equivalence_graph}
	Given an optimal allocation $\mathbf{O}$, its associated item-equivalence graph can be computed in $poly(m,n)$ time and value queries.
\end{lemma}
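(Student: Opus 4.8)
The plan is to reduce the computation of the item-equivalence graph to determining, for each item $x \in M$, the set $L(x) := \{j \in [n] : x \text{ is legal for } j\}$. Once all the $L(x)$ are known, the vertex set is immediate: for $x \in O_i$ we have $x \in B_{i,\, L(x)\setminus\{i\}}$, so grouping the items according to the value of $L(x)$ yields exactly the nonempty sets $B_{i,C}$ (hence the vertex set $T$), and the edge set is then purely combinatorial --- for each ordered pair of vertices $B_{i,C_1},B_{j,C_2}$ we include the edge iff $i \in C_2$, which costs $O(|T|^2) = O(m^2)$. So everything comes down to deciding, in polynomial time, whether a given item $x$ is legal for a given player $j$.

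For this I would first spend $mn$ value queries to read off all single-item values $v_j(\{x\})$; since each $v_j$ is $k_j$-demand, this determines $v_j$ completely, and no further queries are needed. In particular $\mathsf{SW}(\mathbf{O}) = \sum_i v_i(O_i)$ is known, and the optimal welfare of any multi-demand sub-market --- a subset of the buyers, each possibly with a decreased cap, over a subset of the items --- is computable in $poly(m,n)$ time, e.g. by splitting each buyer into unit-demand copies and solving a max-weight bipartite matching, exactly as in the reduction used in the proof of Theorem~\ref{legal_iff_optimal} (alternatively one may invoke the gross-substitutes optimization algorithm referenced in Section~\ref{sec: dynamic_pricing}).

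The key claim is then: for $x \in O_i$ and $j \neq i$, the item $x$ is legal for $j$ if and only if $v_j(\{x\}) + \mathsf{OPT}^x_j = \mathsf{SW}(\mathbf{O})$, where $\mathsf{OPT}^x_j$ is the optimal welfare of the modified market obtained by deleting item $x$ and replacing buyer $j$ by a $(k_j-1)$-demand buyer with the same single-item values (all other buyers unchanged). The ``only if'' direction uses the standing assumption that every optimal allocation gives each buyer exactly her cap: an optimal allocation $\mathbf{A}$ with $x \in A_j$ has $|A_j| = k_j$, hence $v_j(A_j) = v_j(\{x\}) + \sum_{y \in A_j\setminus\{x\}} v_j(\{y\})$, so deleting $x$ from $A_j$ yields a feasible allocation of the modified market of welfare $\mathsf{SW}(\mathbf{O}) - v_j(\{x\})$. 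For ``if'', an optimal allocation of the modified market with $x$ appended to $j$'s bundle is a feasible allocation of the original market of welfare $v_j(\{x\}) + \mathsf{OPT}^x_j$, which by the same computation never exceeds $\mathsf{SW}(\mathbf{O})$ and, under the hypothesized equality, attains it --- hence is optimal and witnesses legality of $x$ for $j$. (And $i \in L(x)$ trivially, since $x \in O_i$ and $\mathbf{O}$ is optimal.)

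The main point to be careful about --- the ``obstacle'', though it is more bookkeeping than genuine difficulty --- is precisely this equivalence between legality and a constrained welfare optimum: one must verify that moving an item between the original market and the $(k_j-1)$-capped market preserves welfare exactly, which is where the section's assumption that every optimal allocation allocates exactly $k_i$ items to buyer $i$ is essential, and one must check that the degenerate case $k_j = 1$ (the modified buyer becomes the identically-zero valuation, which is still multi-demand and hence admits the same optimization routine) is harmless. Given the claim, the running time follows: $mn$ value queries up front, then $O(mn)$ invocations of a $poly(m,n)$-time welfare optimization to determine all the sets $L(x)$, then $O(m^2)$ combinatorial work to enumerate the vertices and edges, for a total of $poly(m,n)$ time and value queries.
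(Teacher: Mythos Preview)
Your proposal is correct and follows essentially the same approach as the paper's proof: both reduce the problem to deciding, for each item $x\in O_i$ and each player $j\neq i$, whether the optimal welfare of the residual market obtained by fixing $x$ to player $j$ equals $\mathsf{SW}(\mathbf{O})$, and both rely on the fact that welfare in a multi-demand (hence gross-substitutes) market can be optimized in $poly(m,n)$ time. Your write-up is simply more explicit than the paper's about the equivalence $v_j(\{x\}) + \mathsf{OPT}^x_j = \mathsf{SW}(\mathbf{O}) \iff x$ is legal for $j$, and about the bookkeeping (the $mn$ single-item queries, the $k_j=1$ edge case), which the paper's one-paragraph proof leaves implicit.
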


The following lemma uses Theorem \ref{legal_iff_optimal} to establish a correspondence between 0-weight cycles in $H$ and cycles in $B$.  Its proof is deferred to Appendix \ref{proofs: section dynamic_pricing}.

\begin{lemma} \label{item-equivalence_cycles}
	Let $\mathbf{O}$ be an optimal allocation and let $H$ and $B$ be the corresponding preference graph and item-equivalence graph, respectively. Then:
	\begin{enumerate}
		\item If $B_{i_1, C_1} \rightarrow \cdots \rightarrow B_{i_k, C_k} \rightarrow B_{i_1, C_1}$ is a cycle in $B$ then for any items $x_1 \in B_{i_1, C_1} , \ldots, x_k \in B_{i_k, C_k}$, the cycle $C=x_1\rightarrow x_2\rightarrow \cdots \rightarrow x_k\rightarrow x_1$ is a 0-weight cycle in $H$. 
		
		\item If $C=x_1\rightarrow x_2\rightarrow \cdots \rightarrow x_k\rightarrow x_1$ is a 0-weight cycle in $H$, and $x_\ell \in B_{i_\ell, C_\ell}$ for every $1 \leq \ell \leq k$ then $C' := B_{i_1, C_1} \rightarrow \cdots \rightarrow B_{i_k, C_k} \rightarrow B_{i_1, C_1}$ is a cycle in $B$. 
	\end{enumerate}	
\end{lemma}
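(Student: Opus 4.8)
The plan is to treat the two directions separately; both come down to Lemma~\ref{pass_alloc} (equivalently, the second observation preceding the definition of the item-equivalence graph) together with the legal-equals-optimal characterization of Theorem~\ref{legal_iff_optimal}. Throughout I take indices modulo $k$ and use the standing assumption of this section that every optimal allocation gives player $p$ exactly $k_p$ items, so that $v_p(A_p)=\sum_{x\in A_p}v_p(x)$ whenever $|A_p|=k_p$.

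For part~1 I would start from a (simple) cycle $B_{i_1,C_1}\to\cdots\to B_{i_k,C_k}\to B_{i_1,C_1}$ in $B$ and arbitrary items $x_\ell\in B_{i_\ell,C_\ell}$. The edge set of $B$ gives $i_\ell\in C_{\ell+1}$ for every $\ell$, and since $C_{\ell+1}\subseteq[n]\setminus\{i_{\ell+1}\}$ this forces $i_\ell\neq i_{\ell+1}$; moreover distinct vertices of $B$ are disjoint subsets of $M$, so the $x_\ell$ are distinct items with $x_\ell\in O_{i_\ell}$. Hence each $x_\ell\to x_{\ell+1}$ is a genuine edge of $H$ and $C=x_1\to\cdots\to x_k\to x_1$ is a cycle of $H$ of weight $w(C)=\sum_{\ell}\bigl(v_{i_\ell}(x_\ell)-v_{i_\ell}(x_{\ell+1})\bigr)$. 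To show this weight is $0$ I would build the allocation $\mathbf{A}$ obtained from $\mathbf{O}$ by transferring, for each $\ell$, the item $x_{\ell+1}$ from player $i_{\ell+1}$ to player $i_\ell$; since consecutive players differ and each $x_m$ is moved exactly once, $\mathbf{A}$ is a valid allocation with $|A_p|=|O_p|=k_p$ for all $p$. Every item of $A_p$ is legal for $p$ (the retained ones lie in $O_p$, and a newly received $x_{\ell+1}$ with $i_\ell=p$ is legal for $p$ because $p=i_\ell\in C_{\ell+1}$ and every item of $B_{i_{\ell+1},C_{\ell+1}}$ is legal for all players in $\{i_{\ell+1}\}\cup C_{\ell+1}$), so $\mathbf{A}$ is legal, hence optimal by Theorem~\ref{legal_iff_optimal}, giving $\mathsf{SW}(\mathbf{A})=\mathsf{SW}(\mathbf{O})$. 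Telescoping within each player, $v_p(O_p)-v_p(A_p)=\sum_{\ell:\,i_\ell=p}\bigl(v_{i_\ell}(x_\ell)-v_{i_\ell}(x_{\ell+1})\bigr)$, and summing over $p$ yields $\mathsf{SW}(\mathbf{O})-\mathsf{SW}(\mathbf{A})=w(C)$, so $w(C)=0$.

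For part~2 I would take a $0$-weight cycle $C=x_1\to\cdots\to x_k\to x_1$ of $H$ with $x_\ell\in B_{i_\ell,C_\ell}$ and argue edge by edge. Fix $\ell$: since $x_\ell\to x_{\ell+1}$ is an $H$-edge we have $x_\ell\in O_{i_\ell}$, $x_{\ell+1}\in O_{i_{\ell+1}}$ with $i_\ell\neq i_{\ell+1}$, and this edge lies on the $0$-weight cycle $C$. Applying the second observation preceding the definition of the item-equivalence graph (with $y=x_{\ell+1}\in B_{i_{\ell+1},C_{\ell+1}}$) gives $i_\ell\in C_{\ell+1}$, which is exactly the condition for $B_{i_\ell,C_\ell}\to B_{i_{\ell+1},C_{\ell+1}}$ to be an edge of $B$. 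Hence $C'=B_{i_1,C_1}\to\cdots\to B_{i_k,C_k}\to B_{i_1,C_1}$ is a closed walk in $B$ with no two consecutive vertices equal; discarding repeated vertices if necessary produces the asserted cycle.

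The routine parts are the bookkeeping that $\mathbf{A}$ is a legitimate allocation and the passage from multi-demand values to sums over items. The one point that needs care---and what I expect to be the main obstacle---is that a cycle of $B$ can revisit the same player index (just never in two consecutive positions), so Lemma~\ref{pass_alloc} cannot be quoted verbatim in part~1; I sidestep this by observing that the transferred-items allocation $\mathbf{A}$ is still legal and then invoking Theorem~\ref{legal_iff_optimal} directly rather than Lemma~\ref{pass_alloc}.
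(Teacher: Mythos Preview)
Your proof is correct and follows essentially the same route as the paper: in both directions you build the reallocation $\mathbf{A}$ obtained by shifting each $x_{\ell+1}$ to player $i_\ell$, use Theorem~\ref{legal_iff_optimal} to identify optimality with legality, and read off $w(C)=\mathsf{SW}(\mathbf{O})-\mathsf{SW}(\mathbf{A})$. The one place you diverge is for the better: the paper invokes Lemma~\ref{pass_alloc} as a black box, but that lemma is phrased with distinct player labels $1,\ldots,k$, whereas (as you note) a simple cycle in $B$ can revisit a player non-consecutively; your explicit per-player telescoping cleanly handles this, and your part~2 use of the ``second observation'' is just the edge-by-edge form of the same argument.
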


\begin{proof}
	Let $C = B_{i_1, C_1} \rightarrow \cdots B_{i_k, C_k} \rightarrow B_{i_1, C_1}$ be a cycle in $B$, let $x_1 \in B_{i_1, C_1},\ldots, x_k \in B_{i_k, C_k}$ and note that the edges $x_i \rightarrow x_{i+1}$ exist in $H$ (since these are items that belong to different players).  Consider the cycle $C' = x_1 \rightarrow \cdots \rightarrow \cdots x_k \rightarrow x_1$ in $H$.  We need to show that $w(C')=0$. We can assume w.l.o.g. that all the items $x_\ell$ are different as otherwise $C'$ is a union of two or more cycles for which this assumption holds (these cycles are derived from corresponding sub-cycles of $C$), and the weight of a union of 0-weight cycles is 0.  By definition of $B$ we have $i_{\ell} \in C_{\ell + 1}$ for every $1 \leq \ell \leq k$ (again we identify $k+1$ with $1$) and so we conclude that the allocation $\mathbf{A}$ obtained from $\mathbf{O}$ by passing $x_{\ell + 1}$ to player $i_\ell$ is a legal allocation and thus optimal (by Theorem \ref{legal_iff_optimal}).  Since $\mathbf{O}$ is also optimal, we conclude by Lemma \ref{pass_alloc} that 
	$
	w(C) = \mathsf{SW}(\mathbf{O}) - \mathsf{SW}(\mathbf{A}) = 0.
	$
	We now prove part 2.  Let $C = x_1 \rightarrow \cdots \rightarrow x_k \rightarrow x_1$ be a 0-weight cycle in $H$.  Again we can assume w.l.o.g. that all items are different. For all $\ell$ let $i_\ell$ and $C_\ell$ be the player and set such that $x_\ell \in B_{i_\ell, C_\ell}$.  By Lemma \ref{pass_alloc} the allocation $\mathbf{A}$ obtained from $\mathbf{O}$ by passing $x_{\ell + 1}$ to player $i_\ell$ is optimal and thus legal.  We conclude that $i_{\ell} \in C_{\ell + 1}$ for all $1 \leq \ell \leq k$ implying that the edges $B_{i_\ell, C_{\ell}} \rightarrow B_{i_{\ell + 1}, C_{\ell + 1}}$ exist in $B$. Therefore $C' = B_{i_1, C_1} \rightarrow \cdots B_{i_k, C_k}\rightarrow B_{i_1, C_1}$ is a cycle in $B$. 
\end{proof}


%
As explained before, our main challenge in the dynamic pricing problem is to come up with a method to remove all 0-weight cycles from $H$ in a way that each potential deviation of any player $i$ from the designated bundle $O_i$, that emanates from the edge removals, is consistent with some optimal allocation.  In particular the method must overcome the ``bottleneck problem''  (as illustrated in Figure \ref{fig_simple_bottleneck}).  Lemma \ref{item-equivalence_cycles} allows us to shift the focus from removing 0-weight cycles in $H$ to removing cycles in $B$ and translate these removals back to $H$.

\vspace{0.1in}
\noindent {\bf [Running Example]}
Figure \ref{bottleneck_example1} (presented in the introduction) shows the item-equivalence graph obtained from the initial optimal allocation.  Each of the items $a,b$ is allocated to buyer 3 in some other optimal allocation (and is never allocated to buyer 2).  Thus $a,b \in B_{1,\{3\}}$.  Similarly we have $c,d \in B_{2,\{1\}}$ and $e \in B_{3,\{2\}}$.  Note that removing any of the edges of the item-equivalence graph makes it cycle-free.  Thus, by Lemma \ref{item-equivalence_cycles}, if we choose one of the edges $e_1,e_2,e_3$ and remove all edges in the preference graph corresponding to the chosen edge, then the preference graph will remain cycle-free.  Now, removing the edges corresponding to $e_1$ could cause player 1 to take the bundle $\{c,d\}$ instead of the designated bundle $\{a,b\}$, and this cannot be completed to an optimal allocation.  On the other hand, removing the preference graph edges that correspond to the edges $e_2$ and/or $e_3$ is fine.  If player 2 arrives first to the market, then the removal of edge $e_2$ might cause her to take the item $e$ instead of $c$ or $d$, and both options are consistent with some optimal allocation.  Likewise if player 3 arrives first and takes $a$ or $b$ instead of $e$ then this too can be completed to an optimal allocation.  The important property here is that $B_{3,\{2\}}$ has minimal size in the cycle,
and thus removing its incoming and outgoing edges introduces tolerable potential deviations.
\vspace{0.1in}

%

\subsection{Solution for up to 3 Multi-Demand Buyers.}\label{three_player_solution}

We are now ready to present the dynamic pricing scheme for up to $n=3$ multi-demand buyers.
\begin{remark}
The algorithm makes use of the item-equivalence graph. We abuse notation and instead of writing $B_{i,\{j\}}$ ($B_{i,\{j,k\}}$) we write $B_{ij}$ ($B_{ijk}$).  Thus the vertices of the item-equivalence graph for 3 buyers are
	\[
	\begin{array}{lll}
	B_{1,\emptyset}	& B_{2,\emptyset}	& B_{3, \emptyset}	\\	
	B_{12}			& B_{21}			& B_{31}  			\\
	B_{123} 		& B_{213} 			& B_{312}			\\
	B_{13}			& B_{23}			& B_{32}	 
	\end{array}
	\]
	where each column corresponds to a different player. Note that only the non-empty sets out of these actually appear in the graph.  For 2 buyers there are at most 4 vertices in the graph:
	\[
	\begin{array}{ll}
	B_{1,\emptyset}	& B_{2,\emptyset}\\	
	B_{12}			& B_{21}
	\end{array}
	\]
	Step \ref{step:mark_3_cycles} is only relevant for the case of 3 buyers.
\end{remark}

\vspace{5pt}
\begin{algorithm}[H]
	\caption{Dynamic Pricing Scheme for up to 3 Multi-Demand Buyers.}
	\label{alg_3_players}
	
	\KwIn{Multi-demand valuations $v_1, v_2$, and also $v_3$ when $n=3$.}
	\KwOut{prices $\mathbf{p} = (p_x)_{x \in M}$.}
	
	Compute some optimal allocation $\mathbf{O}$.  \\
	Compute the preference graph $H$ and the item-equivalence graph $B$ based on $\mathbf{O}$.		\\
	Mark all edges that participate in a cycle of size 2 in $B$.\label{step:mark_2_cycles} 				\\
	In each of the cycles $B_{13}\rightarrow B_{21}\rightarrow B_{32}\rightarrow B_{13}$ and $B_{12}\rightarrow B_{31}\rightarrow B_{23} \rightarrow B_{12}$ (if these exist) choose a set of minimal size and mark its incoming edge and outgoing edge in the cycle. \label{step:mark_3_cycles} 								\\
	For every edge $B_{i_1, C_1}\rightarrow B_{i_2, C_2}$ in $B$ that was marked, and for every $x\in B_{i_1, C_1}, y\in B_{i_2, C_2}$, remove the edge $x\rightarrow y$ from $H$.  Denote the obtained graph by $H'$. \label{step:delete} \\
	Let $\Delta> 0$ be the difference in social welfare between the optimal and 2nd optimal allocation. Denote $\epsilon :=\frac{\Delta}{m + 1}$ and for every edge $e$ that was not removed (except for edges starting at the source vertex $s$) update its weight to $w'(e)=w(e)-\epsilon$. \label{epsilon}  \\
	Compute the min-weight paths from $s$ to every $x$ in $H'$, and let $\delta(s,x)$ be its weight. For every item $x$ set the price $p_x=-\delta(s,x) + \epsilon$.\label{step:prices}  \\
	\Return $(p_x)_{x \in M}$
\end{algorithm}
\vspace{5pt}


When $n=2$, the only cycle in the item-equivalence graph is $B_{12}\rightarrow B_{21}\rightarrow B_{12}$ (assuming both of these are non-empty sets), and both of its edges were marked in step $\ref{step:mark_2_cycles}$.  Thus, by Lemma \ref{item-equivalence_cycles}, all edges that participate in a 0-weight cycle in the preference graph were removed in step \ref{step:delete}.  Thus for $n=2$ Algorithm \ref{alg_3_players} is, effectively, the straightforward generalization of the Cohen-Addad {\it et al.} \cite{DBLP:conf/sigecom/Cohen-AddadEFF16} unit-demand solution to multi-demand buyers.

\vspace{0.1in}
\noindent {\bf [Running Example]}
Figure \ref{example_alg} (which can be found in Appendix \ref{proofs: section dynamic_pricing}) demonstrates the graphs $H$,$B$ and $H'$ obtained in the pricing scheme when run on our example, based on the optimal allocation $O_1=\{a,b\}, O_2= \{c,d\}, O_3 = \{e\}$.  The edges that get marked in the item-equivalence graph are $e_2$ and $e_3$ in step \ref{step:mark_3_cycles}.  This translates to the removal of the outgoing edges from $c,d$ to $e$ and from $e$ to $a,b$ when transitioning from $H$ to $H'$ and consequently no 0-weight cycles are left.  This remains true also after subtracting $\epsilon$ from every edge that does not touch $s$.
\vspace{0.1in}			

As stated before, 
computing $\mathbf{O}$, $H$ and $B$ can be done in polynomial time.  Finding the cycles in $B$ can also be done efficiently ($B$ has a constant number of vertices) as well as computing min-weight paths.  Thus the algorithm indeed runs in $poly(m)$ time as desired.

\begin{lemma} \label{positive_weights}
	After step \ref{step:delete} every cycle in $H'$ has strictly positive weight.
\end{lemma}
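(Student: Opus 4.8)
The statement to prove is Lemma~\ref{positive_weights}: after the edge-removal step (Step~\ref{step:delete}), every cycle in the graph $H'$ — which still carries the original weights $w$, before the $\epsilon$-decrement — has \emph{strictly} positive weight. By Corollary~\ref{non_negative_cycles} we already know every cycle in $H$ (hence in $H' \subseteq H$) has non-negative weight, so the content of the lemma is that no \emph{zero-weight} cycle survives the edge deletions. The natural strategy is: take a hypothetical $0$-weight cycle $C = x_1 \to x_2 \to \cdots \to x_k \to x_1$ in $H'$, apply Lemma~\ref{item-equivalence_cycles}(2) to obtain a corresponding cycle $C' = B_{i_1,C_1} \to \cdots \to B_{i_k,C_k} \to B_{i_1,C_1}$ in the item-equivalence graph $B$, and then argue that for \emph{every} cycle in $B$ (when $n \le 3$), at least one of its edges must have been marked by Step~\ref{step:mark_2_cycles} or Step~\ref{step:mark_3_cycles} — contradicting the fact that all edges of $C$ survived into $H'$, since marking an edge $B_{i_1,C_1} \to B_{i_2,C_2}$ of $B$ deletes \emph{all} preference-graph edges $x \to y$ with $x \in B_{i_1,C_1}$, $y \in B_{i_2,C_2}$.

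**The combinatorial core.** So the whole lemma reduces to a finite case analysis: \emph{every cycle in the item-equivalence graph $B$ on at most $12$ vertices (for $n=3$) or $4$ vertices (for $n=2$) contains at least one marked edge.} I would organize this by the structure of $B$. Recall an edge $B_{i,C_1} \to B_{j,C_2}$ exists iff $i \in C_2$; in particular a vertex $B_{i,\emptyset}$ has no incoming edges, so it cannot lie on a cycle, and we may discard the three source-type vertices $B_{1,\emptyset}, B_{2,\emptyset}, B_{3,\emptyset}$. For $n \le 3$, any $C \in [n]\setminus\{i\}$ is a singleton or the full pair, so the cyclic vertices are among $B_{ij}$ and $B_{ijk}$. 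The key observation is that any cycle which revisits a player index can be decomposed into shorter cycles (as in the proof of Lemma~\ref{item-equivalence_cycles}), so it suffices to handle \emph{simple} cycles that visit each of the three players at most once — these have length $2$ or $3$. Length-$2$ cycles are killed by Step~\ref{step:mark_2_cycles}. For length-$3$ cycles visiting all three players, a short check shows the only possibilities (up to the direction of traversal) are exactly the two triangles named in Step~\ref{step:mark_3_cycles}: $B_{13} \to B_{21} \to B_{32} \to B_{13}$ and $B_{12} \to B_{31} \to B_{23} \to B_{12}$, possibly with some of the $B_{ij}$'s replaced by $B_{ijk}$'s — and here I must check that the ``minimal-size vertex'' marking in Step~\ref{step:mark_3_cycles}, which marks the two edges incident to a smallest vertex \emph{of the triangle written with the two-index names}, still intersects every length-$3$ cycle sharing that index pattern. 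This last point, reconciling the $B_{ijk}$ variants with the $B_{ij}$-only triangles that Step~\ref{step:mark_3_cycles} literally names, is the part I expect to require the most care.

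**Expected obstacle and how to handle it.** The main subtlety is precisely that Step~\ref{step:mark_3_cycles} is phrased in terms of two specific triangles on the ``small'' vertices $B_{ij}$, whereas a surviving $0$-weight cycle in $H$ could a priori route through a three-index set $B_{ijk}$ — which has outgoing edges to \emph{both} other players and so participates in more cycles. I would resolve this by noting that a cycle through $B_{ijk}$ automatically creates a shorter cycle or a length-$2$ cycle with one of the other vertices on $C'$ (because from $B_{ijk}$ one can go to either player), so it is either already covered by Step~\ref{step:mark_2_cycles}, or it can be reduced — via the same "decompose at a repeated player index" trick used inside Lemma~\ref{item-equivalence_cycles} — to one of the two canonical triangles, at which point the minimal-vertex marking applies. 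Once the combinatorial claim is established, the logical wrap-up is immediate: a $0$-weight cycle in $H'$ yields a cycle in $B$ all of whose edges survived, contradicting that every cycle in $B$ has a marked (hence deleted) edge; therefore every cycle in $H'$ has weight $> 0$. I would also remark that this is exactly the point where the restriction to $n \le 3$ is used — for $n \ge 4$ the item-equivalence graph admits longer simple cycles and the two-triangle marking rule no longer suffices, which is why the theorem is stated only up to three buyers.
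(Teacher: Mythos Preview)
Your overall strategy — reduce via Lemma~\ref{item-equivalence_cycles} to showing that every cycle in the item-equivalence graph $B$ contains a marked edge — matches the paper exactly. However, your reduction ``any cycle which revisits a player index can be decomposed into shorter cycles'' does not work as stated: when a cycle $C'$ in $B$ visits player $i$ at two \emph{distinct} vertices $v_a$ and $v_b$, splitting into sub-cycles requires inserting the new edges $v_a \to v_{b+1}$ and $v_b \to v_{a+1}$ while dropping the original edges $v_a \to v_{a+1}$ and $v_b \to v_{b+1}$. So even if each sub-cycle contains a marked edge, that edge may be one of the newly inserted ones, and $C'$ itself could remain unmarked. (The decomposition inside the proof of Lemma~\ref{item-equivalence_cycles} is at a repeated \emph{vertex}, not a repeated player, and only there do the edge sets partition.)

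The paper sidesteps this entirely with a one-line observation that you nearly reach in your ``obstacle'' paragraph but do not fully exploit: for $n=3$, \emph{every} outgoing edge of a three-index vertex $B_{ijk}$ lies in a $2$-cycle --- its target has player $\ell \in \{j,k\}$, so the reverse edge into $B_{ijk}$ exists automatically --- and is therefore already marked in Step~\ref{step:mark_2_cycles}. This disposes of any cycle passing through a three-index vertex outright. For cycles on two-index vertices only, the paper does no length reduction at all; it simply follows the first edges from an arbitrary vertex, say $B_{12}$: the successor is either $B_{21}$ (a $2$-cycle edge, marked) or $B_{31}$; from $B_{31}$ the successor is either $B_{13}$ (a $2$-cycle edge, marked) or $B_{23}$; and the segment $B_{12} \to B_{31} \to B_{23}$ lies on one of the triangles named in Step~\ref{step:mark_3_cycles}, so one of its two edges is marked. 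This three-case walk covers every cycle regardless of length, so no decomposition is needed.
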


\begin{proof}
	Since every cycle in $H$ has non-negative weight (Corollary \ref{non_negative_cycles}), it is enough to show that at least one edge was removed from every 0-weight cycle (note that $\epsilon$ is sufficiently small enough so that any positive-weight cycle in $H$ remains positive-weight in $H'$).  By Lemma \ref{item-equivalence_cycles} it is enough to show that at least one edge was marked in each cycle in the item-equivalence graph. If $n=2$, then the only cycle in the item-equivalence graph is $B_{12}\rightarrow B_{21} \rightarrow B_{12}$, and both edges of this cycle were marked in step \ref{step:mark_2_cycles}.  Assume $n=3$ and let $C$ be a cycle in the item-equivalence graph.  Note that all outgoing edges of every vertex with 3 indices $B_{ijk}$ were marked in step \ref{step:mark_2_cycles} (for every such edge, the reverse edge also exists in the item-equivalence graph, forming a 2-edge cycle).  Furthermore, the vertices $B_{i,\emptyset}$ do not participate in any cycle as they have no incoming edges.  Thus we can assume that $C$ contains only vertices with 2 indices.  Assume w.l.o.g. that $B_{12}$ is one of the vertices in $C$.  We split to the following cases, based on the structure of $C$ starting at $B_{12}$:
	\begin{itemize}
		\item Case 1:  $B_{12} \rightarrow B_{21} \rightarrow \cdots$.  Here, the first edge is part of a 2-edge cycle and thus it was marked.
		\item Case 2:  $B_{12} \rightarrow B_{31} \rightarrow B_{13} \rightarrow \cdots$.  Here, the second edge was marked similarly to the previous bullet.
		\item Case 3:  $B_{12} \rightarrow B_{31} \rightarrow B_{23} \rightarrow \cdots$.  Here, at least one of the edges was marked in step \ref{step:mark_3_cycles}.	
	\end{itemize}	
\end{proof}

\begin{lemma} \label{prices_positive}
	For any item $x$, $p_x > 0$.
\end{lemma}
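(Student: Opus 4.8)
The plan is to reduce the claim to the single inequality $\delta(s,x)\le 0$ for every item $x$, from which $p_x=-\delta(s,x)+\epsilon\ge \epsilon>0$ follows at once (recall $\epsilon=\Delta/(m+1)>0$, since $\Delta$, the gap between the optimal and second-optimal social welfare, is strictly positive).

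First I would verify that the quantities $\delta(s,x)$ appearing in step \ref{step:prices} are well-defined. They denote the weights of min-weight $s\to x$ paths in the graph $H'$ equipped with the perturbed weights $w'$. By Lemma \ref{positive_weights}, every cycle of $H'$ has strictly positive weight; in particular $H'$ has no zero- or negative-weight cycle, so min-weight walks are attained by simple paths and $\delta(s,x)$ is well-defined (and finite) for every $x$ — exactly as $\delta(s,x)$ was well-defined for $H$ via Corollary \ref{non_negative_cycles}.

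The crux is the observation that the edge $s\to x$ survives into $H'$ with weight $0$, for every item $x$. Indeed, in the preference graph $H$ this edge is present with weight $0$ by construction; step \ref{step:delete} only deletes edges of the form $x\to y$ between two items (those corresponding to marked edges of the item-equivalence graph), hence never deletes a source edge; and step \ref{epsilon} explicitly exempts edges starting at the source vertex $s$ from the $\epsilon$-decrement. Therefore $w'(s\to x)=0$, and the one-edge path $s\to x$ already witnesses $\delta(s,x)\le 0$. Plugging this into the price formula gives $p_x=-\delta(s,x)+\epsilon\ge \epsilon>0$, completing the proof.

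I do not expect a genuine obstacle here: the argument is a direct analogue of Lemma \ref{non_negative_prices_util_dif}(1), the only points requiring care being (i) that $\delta(s,x)$ is well-defined, which is precisely what Lemma \ref{positive_weights} supplies, and (ii) that the source edges are untouched by both the edge-removal step \ref{step:delete} and the weight-perturbation step \ref{epsilon}, which is immediate from the wording of those steps.
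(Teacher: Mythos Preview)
Your proof is correct and follows essentially the same approach as the paper: observe that the edge $s\to x$ remains in $H'$ with weight $0$ (it is neither deleted in step \ref{step:delete} nor perturbed in step \ref{epsilon}), hence $\delta(s,x)\le 0$ and $p_x=-\delta(s,x)+\epsilon>0$. The extra care you take in noting that $\delta(s,x)$ is well-defined via Lemma \ref{positive_weights} is a welcome addition but not a different idea.
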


\begin{proof}
	The 0-weight path $s \rightarrow x$ is some path from $s$ to $x$, and thus $\delta(s,x) \leq 0$.  Thus $p_x = -\delta(s,x) + \epsilon > 0$ as desired.
\end{proof}

\begin{lemma} \label{strong_preference}
	For any player $i$, $x\in O_i$ and $y\not \in O_i$ ,if $e=x \rightarrow y \in H'$ then $u_i(x,\mathbf{p}) > u_i(y,\mathbf{p})$.
\end{lemma}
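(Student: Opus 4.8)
The plan is to invoke the defining ``edge-relaxation'' property of single-source shortest-path distances: for any edge $e = x \to y$ that is present in $H'$, one has $\delta(s,y) \le \delta(s,x) + w'(e)$, since an $s \to y$ walk can always be formed by following a min-weight $s \to x$ path and then traversing $e$. First I would record that the prices are well-defined: by Lemma \ref{positive_weights}, after Step \ref{step:delete} every cycle in $H'$ has strictly positive weight, hence $H'$ has no non-positive cycle, and therefore $\delta(s,x)$ is well-defined and finite for every item $x$; consequently the prices $p_x = -\delta(s,x) + \epsilon$ of Step \ref{step:prices} make sense.

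Next I would use the structure of the edge $e$. Since $x \in O_i$ and $y \notin O_i$, by construction of the preference graph the edge $e = x \to y$ has original weight $w(e) = v_i(x) - v_i(y)$, and since $e$ was not removed, its weight in $H'$ (after Step \ref{epsilon}) is $w'(e) = v_i(x) - v_i(y) - \epsilon$. Appending $e$ to a min-weight $s \to x$ path gives an $s \to y$ walk of weight $\delta(s,x) + v_i(x) - v_i(y) - \epsilon$, so minimality of $\delta(s,y)$ yields
\[
\delta(s,y) \;\le\; \delta(s,x) + v_i(x) - v_i(y) - \epsilon ,
\]
which rearranges to $v_i(y) + \delta(s,y) \le v_i(x) + \delta(s,x) - \epsilon$. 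Substituting $u_i(x,\mathbf{p}) = v_i(x) - p_x = v_i(x) + \delta(s,x) - \epsilon$ and $u_i(y,\mathbf{p}) = v_i(y) + \delta(s,y) - \epsilon$, the displayed inequality becomes $u_i(x,\mathbf{p}) - u_i(y,\mathbf{p}) \ge \epsilon > 0$, which is exactly the claimed strict preference.

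There is no genuinely hard step here once Lemma \ref{positive_weights} is available: the argument is just the triangle inequality for shortest-path distances, and the \emph{strictness} of the conclusion is bought entirely by the $\epsilon$-perturbation of Step \ref{epsilon}. The only subtlety worth flagging explicitly is that $\delta(s,\cdot)$ must be a bona fide shortest-path function, which is why the absence of non-positive cycles --- guaranteed by the edge removals in Step \ref{step:delete} via Lemma \ref{positive_weights} --- is invoked at the very start of the proof.
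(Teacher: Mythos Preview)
Your proof is correct and takes essentially the same approach as the paper: both arguments apply the triangle inequality $\delta(s,y)\le \delta(s,x)+w'(x\to y)$ for the surviving edge, plug in $w'(x\to y)=v_i(x)-v_i(y)-\epsilon$ and $p_z=-\delta(s,z)+\epsilon$, and read off $u_i(x,\mathbf{p})-u_i(y,\mathbf{p})\ge\epsilon>0$. Your additional remark that Lemma~\ref{positive_weights} is needed to make $\delta(s,\cdot)$ well-defined is a fine clarification the paper leaves implicit.
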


\begin{proof}
	By the triangle inequality we have
	\begin{alignat*}{2}
	\delta(s,x) + w'(x \rightarrow y) 					&\geq \delta(s,y)								&&\implies \\
	\delta(s,x) + v_{i}(x) - v_{i}(y) -\epsilon 		&\geq \delta(s,y)								&&\implies \\
	v_{i}(x) - (-\delta(s,x) + \epsilon) -\epsilon 		&\geq v_{i}(y) - (-\delta(s,y) + \epsilon)		&&\implies \\
	v_{i}(x) - p_x -\epsilon 							&\geq v_{i}(y) - p_y							&&\implies \\
	v_{i}(x) - p_x 										&> v_{i}(y) - p_y
	\end{alignat*}
	The claim follows.
\end{proof}

\begin{lemma} \label{utilities_positive}
	For any player $i$ and item $x \in O_i$ we have $v_{i}(x) - p_x > 0$.	
\end{lemma}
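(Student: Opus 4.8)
The plan is to bound $v_i(x)-p_x$ from below by $\epsilon$. Since $p_x = -\delta(s,x)+\epsilon$, this amounts to showing $v_i(x)+\delta(s,x) > 0$, i.e.\ to obtaining a good lower bound on $\delta(s,x)$. A naive perturbation estimate (comparing $\delta(s,x)$ in $H'$ with its value in $H$, where Lemma~\ref{non_negative_prices_util_dif}(3) gives $v_i(x)+\delta_H(s,x)\ge 0$) only yields $v_i(x)-p_x \ge -m\epsilon$, which is useless; the real idea is to read the min-weight path as an allocation.

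First I would invoke Lemma~\ref{positive_weights}: every cycle in $H'$ is strictly positive, so the min-weight $s\to x$ path $P$ realizing $\delta(s,x)$ is simple. Write $P = s\to x_1\to x_2\to\cdots\to x_\ell=x$ with the $x_j$ pairwise distinct items of $M$, so $\ell\le m$. Let $i_j$ be the player holding $x_j$ in $\mathbf{O}$; since edges of $H$ only join items of distinct players we have $i_j\ne i_{j+1}$, and since $x_\ell=x\in O_i$ we get $i_\ell=i$. Then I would build the explicit allocation $\mathbf{A}$ from $\mathbf{O}$ by deleting $x_1$ from player $i_1$ (it becomes unallocated) and, for every $j=1,\dots,\ell-1$, moving $x_{j+1}$ from player $i_{j+1}$ to player $i_j$. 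Distinctness of the $x_j$ makes $\mathbf{A}$ a legitimate (disjoint) allocation in which every player $q\ne i$ still holds $k_q$ items while player $i$ holds only $k_i-1$.

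Next comes the bookkeeping: because multi-demand valuations are additive on bundles of size at most the cap (all the bundles $A_q,O_q$ qualify), I would compute each player's value change by grouping the moves per player via $J_q=\{j: i_j=q\}$; player $q$'s value changes by $-\sum_{j\in J_q}\bigl(v_{i_j}(x_j)-v_{i_j}(x_{j+1})\bigr) = -\sum_{j\in J_q} w(x_j\to x_{j+1})$, with the extra term $-v_i(x)$ for $q=i$. Summing over $q$ (the $J_q$ partition $\{1,\dots,\ell-1\}$) gives
\[
\mathsf{SW}(\mathbf{A}) = \mathsf{SW}(\mathbf{O}) - \sum_{j=1}^{\ell-1} w(x_j\to x_{j+1}) - v_i(x).
\]
Each traversed edge $x_j\to x_{j+1}$ lies in $H'$, hence its weight there is $w(x_j\to x_{j+1})-\epsilon$, so $\delta(s,x)=\sum_{j=1}^{\ell-1}w(x_j\to x_{j+1})-(\ell-1)\epsilon$, and combining with the displayed identity,
\[
v_i(x)-p_x \;=\; v_i(x)+\delta(s,x)-\epsilon \;=\; \mathsf{SW}(\mathbf{O})-\mathsf{SW}(\mathbf{A})-\ell\,\epsilon .
\]

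Finally I would close with optimality: $\mathbf{A}$ gives player $i$ only $k_i-1$ items, but by our standing assumption every optimal allocation gives player $i$ exactly $k_i$ items, so $\mathbf{A}$ is not optimal and therefore $\mathsf{SW}(\mathbf{O})-\mathsf{SW}(\mathbf{A})\ge\Delta$. Since $\ell\le m$ and $\epsilon=\Delta/(m+1)$, this gives $v_i(x)-p_x \ge \Delta - m\epsilon = \epsilon > 0$, which even proves the slightly stronger bound $v_i(x)-p_x\ge\epsilon$. The main obstacle is exactly the step that makes this work: recognizing that the path-induced allocation $\mathbf{A}$ is \emph{necessarily sub-optimal} (player $i$ is one item short), so its welfare deficit is at least the full gap $\Delta$, which dominates the total perturbation $\ell\epsilon\le m\epsilon<\Delta$. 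A secondary point to be careful about is that $i_1,\dots,i_\ell$ need not be distinct, so the accounting of gains and losses in $\mathbf{A}$ must be done per player (via the $J_q$) rather than one isolated swap per edge; one should verify this still produces a disjoint allocation and the stated welfare identity.
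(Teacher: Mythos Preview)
Your proof is correct and follows essentially the same argument as the paper: take a shortest $s\to x$ path in $H'$ (simple by Lemma~\ref{positive_weights}), interpret it as a reallocation $\mathbf{A}$ that dis-allocates one item, and use sub-optimality of $\mathbf{A}$ together with $\epsilon=\Delta/(m+1)$ to conclude. Your write-up is slightly more explicit about the per-player bookkeeping via the sets $J_q$ and extracts the quantitative bound $v_i(x)-p_x\ge\epsilon$, whereas the paper simply notes the expression is positive for sufficiently small $\epsilon$; the only cosmetic difference is that the paper certifies sub-optimality of $\mathbf{A}$ by ``$x_1$ is unallocated'' while you use ``player $i$ has only $k_i-1$ items'', which are equivalent under the standing assumption $m=\sum_i k_i$.
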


\begin{proof}
	Consider a min-weight path from $s$ to $x$ in $H'$, $s \rightarrow x_1 \rightarrow \cdots \rightarrow x_k = x$, and for every $1 \leq j \leq k$ let $i_j$ be the player such that $x_j \in O_{i_j}$ (note that $i_k = i$).  Since every cycle in $H'$ has positive weight (Lemma \ref{positive_weights}) it must be the case that all the vertices $x_i$ are different (otherwise this is not a min-weight path) and $k\leq m$. We have
	\begin{align*}
	v_{i_k}(x_k)-p_{x_k} 	&= v_{i_k}(x_k) + \delta(s,x) -\epsilon \\
	&= v_{i_k}(x_k) + \sum_{j=1}^{k-1} (v_{i_j}(x_j) - v_{i_j}(x_{j+1}) - \epsilon) - \epsilon \\
	&= \sum_{j=1}^{k} v_{i_j}(x_j) - \sum_{j=1}^{k-1} v_{i_j}(x_{j+1}) - \epsilon (k-1) -\epsilon \\  
	&\geq \mathsf{SW}(\mathbf{O}) - \mathsf{SW}(\mathbf{A}) -\epsilon m 
	\end{align*}
	
	where $\mathbf{A}$ is the allocation obtained from $\mathbf{O}$ by passing the item $x_{j+1}$ to player $i_j$ for all $j$, and dis-allocating $x_1$.  Therefore, $ \sum_{j=1}^{k} v_{i_j}(x_j) - \sum_{j=1}^{k-1} v_{i_j}(x_{j+1}) = \mathsf{SW}(\mathbf{O}) - \mathsf{SW}(\mathbf{A})$. By the assumption that every optimal allocation allocates all items, we conclude that $\mathbf{A}$ is a sub-optimal allocation and therefore the last term is positive as desired (note that $\epsilon$ is sufficiently small).
\end{proof}

We are now ready to prove that the output of our dynamic pricing scheme meets the requirements of Corollary  \ref{main_theorem}.  This is cast in the following lemma:

\begin{lemma} \label{main_goal_exposition}
	Let $\mathbf{p}$ be the price vector output by Algorithm \ref{alg_3_players}.  Then, for every player $i$ and $S \in D_{\mathbf{p}}(i)$,
	\begin{enumerate}
		\item $S$ is legal for player $i$.
		\item $S$ can be completed to a legal allocation, i.e. there exists an allocation of the items $M\setminus S$ to the other players in which every player receives a bundle that is legal for her.
	\end{enumerate}
\end{lemma}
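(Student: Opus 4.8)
The plan is to verify the two items of the lemma, which are exactly the two conditions of Corollary~\ref{main_theorem}. Throughout fix a player $i$ with cap $k_i$ and bundle $O_i$ in $\mathbf{O}$, and a demanded bundle $S\in D_{\mathbf{p}}(i)$; recall that $u_i(T,\mathbf{p})=\sum_{x\in T}\bigl(v_i(x)-p_x\bigr)$ whenever $|T|\le k_i$, since $v_i$ is $k_i$-demand. \emph{Part 1.} First I would show $|S|=k_i$: by Lemma~\ref{prices_positive} all prices are positive, so discarding any item of $S$ beyond its $k_i$ most-valued ones strictly raises $u_i(\cdot,\mathbf{p})$, giving $|S|\le k_i$; and if $|S|<k_i$ then adding any $y\in O_i\setminus S$ raises utility by $v_i(y)-p_y>0$ (Lemma~\ref{utilities_positive}), a contradiction. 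Next I would show every $x\in S$ is legal for $i$. If some $x\in S$ is not legal for $i$, then $x\in O_\ell$ for some $\ell\ne i$; picking $y\in O_i\setminus S$ (possible since $|O_i|=k_i=|S|$), the edge $y\to x$ lies in $H$ and could only have been removed in Step~\ref{step:delete} through a marked arc of $B$ between the equivalence classes of $y$ and $x$, but that arc does not exist in $B$ since $i$ is not in the index set of $x$'s class. Hence $y\to x\in H'$, Lemma~\ref{strong_preference} gives $v_i(y)-p_y>v_i(x)-p_x$, and swapping $x$ out for $y$ strictly improves $u_i(\cdot,\mathbf{p})$ --- contradicting $S\in D_{\mathbf{p}}(i)$.

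\emph{Part 2: reduction.} By Part 1 and Theorem~\ref{legal_iff_optimal} it suffices to partition $M\setminus S$ among the players $j\ne i$ so that each receives a legal bundle (i.e.\ exactly $k_j$ items, all legal for $j$). The same exchange argument gives the key fact $(\star)$: for every $z\in O_i\setminus S$ and every $w\in S\setminus O_i$, the edge $z\to w$ was removed in Step~\ref{step:delete}, equivalently the arc of $B$ from $z$'s equivalence class to $w$'s equivalence class is marked --- since otherwise $z\to w\in H'$ and Lemma~\ref{strong_preference} together with the swap $S\mapsto S\setminus\{w\}\cup\{z\}$ contradicts $S\in D_{\mathbf{p}}(i)$. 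Specializing $(\star)$ to $z\in B_{i,\emptyset}$ is impossible: $B_{i,\emptyset}$ has no incoming arc (so lies in no $2$-cycle) and is not a vertex of either $3$-cycle of Step~\ref{step:mark_3_cycles}, so no arc leaving $B_{i,\emptyset}$ is ever marked; hence $B_{i,\emptyset}\subseteq S$. In particular every item of $O_i\setminus S$ is legal for some player $\ne i$, and since each $O_j\setminus S$ ($j\ne i$) is legal for $j$, the partition we seek exists iff the bipartite assignment of the items of $M\setminus S$ to the players $\ne i$ (each item going only to a player for whom it is legal, player $j$ absorbing $k_j$ items) is feasible.

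\emph{Part 2: the crux ($n=3$).} For $n\le 2$ this is immediate: when $n=2$ the other player receives all of $M\setminus S$, which has size $k_j$ and --- by $B_{i,\emptyset}\subseteq S$ --- consists only of items legal for her, and $n=1$ is trivial. For $n=3$, write $\{i,j,k\}=\{1,2,3\}$. By the deficiency form of Hall's theorem the assignment is feasible iff for every set $U$ of items the total cap of the players that can absorb $U$ is at least $|U|$; using $B_{i,\emptyset}\subseteq S$, the only binding instances boil down to
\[ \bigl|B_{i,\{j\}}\setminus S\bigr|\ \le\ \bigl|B_{j,\{i\}}\cap S\bigr|+\bigl|B_{j,\{i,k\}}\bigr|+\bigl|B_{j,\{k\}}\bigr| \]
together with its mirror image under $j\leftrightarrow k$. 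I would prove this inequality by casework on which vertex of the $3$-cycle $B_{i,\{j\}}\to B_{k,\{i\}}\to B_{j,\{k\}}\to B_{i,\{j\}}$ is picked as a set of minimal size in Step~\ref{step:mark_3_cycles}, while using the fixed facts that the arcs $B_{i,\{j\}}\to B_{j,\{i\}}$ and $B_{i,\{j\}}\to B_{j,\{i,k\}}$ are always marked (both lie in $2$-cycles, marked in Step~\ref{step:mark_2_cycles}) and $B_{i,\{j\}}\to B_{k,\{i,j\}}$ is never marked. If the minimal vertex is $B_{i,\{j\}}$ itself, minimality directly gives $|B_{i,\{j\}}\setminus S|\le|B_{i,\{j\}}|\le|B_{j,\{k\}}|$. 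Otherwise the arc $B_{i,\{j\}}\to B_{k,\{i\}}$ is marked exactly when $B_{k,\{i\}}$ is minimal, so $(\star)$ confines $S\setminus O_i$ to $B_{j,\{i\}}\cup B_{j,\{i,k\}}$ (if $B_{j,\{k\}}$ is minimal) or to $B_{j,\{i\}}\cup B_{j,\{i,k\}}\cup B_{k,\{i\}}$ (if $B_{k,\{i\}}$ is minimal); combining $|B_{i,\{j\}}\setminus S|\le|O_i\setminus S|=|S\setminus O_i|$ with $|B_{k,\{i\}}|\le|B_{j,\{k\}}|$ (minimality, in the latter case) yields the inequality.

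\emph{Where the difficulty lies.} The heart of the argument is the pair of deficiency inequalities for $n=3$; this is exactly the ``bottleneck'' phenomenon from the introduction (Figure~\ref{fig_simple_bottleneck}), and the choice of a minimal-size vertex in Step~\ref{step:mark_3_cycles} is precisely what keeps the incoming player's possible deviations inside a set that can still be completed to a legal --- hence optimal --- allocation. Everything else (Part 1, the reduction, the inclusion $B_{i,\emptyset}\subseteq S$, and the case $n\le 2$) follows routinely from the preceding lemmas and Theorem~\ref{legal_iff_optimal}.
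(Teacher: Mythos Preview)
Your proof is correct, and Part~1 together with the inclusion $B_{i,\emptyset}\subseteq S$ mirrors the paper exactly. In Part~2, however, you take a genuinely different route. The paper sets up a bipartite graph between the \emph{stolen} items $S\setminus O_i$ and the \emph{leftover} items $O_i\setminus S$, tries to find a perfect matching there, and when Hall fails it constructs an explicit cascading compensation (player~$j$ steals from $B_{k,\{j\}}$, then player~$k$ picks up the slack from $B_{i,\{k\}}$). You instead formulate the completion globally as a capacitated assignment of \emph{all} of $M\setminus S$ to $\{j,k\}$, observe that with only two receivers the deficiency form of Hall collapses to the single inequality $|B_{i,\{j\}}\setminus S|\le |B_{j,\{i\}}\cap S|+|B_{j,\{i,k\}}|+|B_{j,\{k\}}|$ and its mirror, and then verify it by casework on the minimal vertex of the relevant $3$-cycle. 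Your approach is cleaner and more symmetric --- the cascading second-round steal in the paper is absorbed automatically by the term $|B_{j,\{k\}}|$ on the right-hand side --- while the paper's is more constructive, exhibiting the completing allocation explicitly.

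One minor omission: your casework assumes the $3$-cycle $B_{i,\{j\}}\to B_{k,\{i\}}\to B_{j,\{k\}}\to B_{i,\{j\}}$ exists in $B$, i.e.\ all three classes are nonempty. When one of them is empty Step~\ref{step:mark_3_cycles} marks nothing for this cycle, but the argument still goes through: if $B_{i,\{j\}}=\emptyset$ the inequality is trivial, and if $B_{k,\{i\}}$ or $B_{j,\{k\}}$ is empty then the arc $B_{i,\{j\}}\to B_{k,\{i\}}$ is unmarked (or nonexistent), so $(\star)$ still confines $S\setminus O_i$ to $B_{j,\{i\}}\cup B_{j,\{i,k\}}$ exactly as in your ``$B_{j,\{k\}}$ minimal'' case. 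You may want to say a word about this.
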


\begin{proof}
	We prove for $i=1$ (the same proof applies also for $i=2,3$).	
	We first prove part 1. We start by showing that every $S \in D_{\mathbf{p}}(1)$ is of size $k_1$.  Since all item prices are positive (Lemma \ref{prices_positive}) and player 1 is $k_1$-demand, it cannot be the case that player 1 maximizes utility with a bundle consisting of more than $k_1$ items.  Furthermore, by Lemma \ref{utilities_positive} there are at least $k_1$ legal items from which she derives positive utility. Combining, every demanded bundle has exactly $k_1$ items. Now, for any two items $x,y$ where $x\in O_1$ and $y$ is not legal for player 1, the edge $x \rightarrow y$ was not removed in the transition from $H$ to $H'$ (since there is no corresponding edge in the item-equivalence graph that could have been marked). Thus, player 1 strongly prefers $x$ over $y$ (by Lemma \ref{strong_preference}) and we conclude that every demanded bundle contains only legal items, as desired. 
	
	We proceed to prove part 2.  Let $S \in D_{\mathbf{p}}(1)$.  We refer to the items in $S\setminus O_1$ as the items that player 1 `stole' from players 2 (and 3 if $n=3$), and to the items in $O_1\setminus S$ as those player 1 `left behind'.  We need to show that players 2 and 3 can compensate for their stolen items in a `legal manner', that is, by completing their leftover bundles $O_2\setminus S$ and $O_3\setminus S$ to $k_2$ and $k_3$ legal items, respectively.  The first step is to determine where the stolen and left behind items are taken from. 
	Since $B_{1,\emptyset}$ does not participate in any cycle in the item-equivalence graph (as it has no incoming edge), then none of its outgoing edges were marked, implying (by Lemma \ref{strong_preference}) that player 1 strongly prefers every item of $B_{1,\emptyset}$ over every item $y \notin O_1$.  Since buyer 1 derives positive utility from these items (Lemma \ref{utilities_positive}), we conclude that $B_{1,\emptyset}$ is contained in every demanded bundle and in particular in $S$.  In other words, all the items player 1 left-behind are in $B_{12}$ if $n=2$, or in $B_{12}\cup B_{13} \cup B_{123}$ if $n=3$.  Thus, if $n=2$ we are done:  buyer 2 can compensate for her stolen items by taking the leftover items in $B_{12}$ which are legal for her (the amount of stolen items equals the amount of leftover items since $\left|O_1\right| = \left|S\right| = k_1$).  We assume for the rest of the proof that $n=3$.  Since $S$ is legal for buyer 1, the stolen items $S \setminus O_1$ are contained in $B_{21}\cup B_{213} \cup B_{31} \cup B_{312}$.
	
%
%
%
	
	We denote
	\begin{align*}
	a_{2}  	&:= \left| (O_1\setminus S)\cap B_{12} 	\right| \\
	a_{3}  	&:= \left| (O_1\setminus S)\cap B_{13} 	\right| \\
	a_{23} 	&:= \left| (O_1\setminus S)\cap B_{123}	\right| \\
	b_2    	&:= \left| (S\setminus O_1)\cap B_{21} 	\right| \\
	b_{23}	&:= \left| (S\setminus O_1)\cap B_{213}	\right|	\\
	b_{3}   &:= \left| (S\setminus O_1)\cap B_{31} 	\right|	\\
	b_{32}	&:=	\left| (S\setminus O_1)\cap B_{312}	\right|
	\end{align*}
	In words, $a_{2}$ is the number of items player 1 left behind in $B_{12}$,  $b_2$ is the number of items she `stole' from player 2 out of the items in $B_{21}$, $b_{32}$ is the amount she `stole' from player 3 out of the items in $B_{312}$, etc.  By the discussion in the previous paragraph, these account for all stolen and leftover items, and we get
	\begin{equation}
	b_2 +b_{23} + b_3 + b_{32} = \left| S \setminus O_1 \right| = \left| O_1 \setminus S \right| = a_{2} + a_{23} + a_{3}. \label{bipartite_G}
	\end{equation}
	
	Consider the bipartite graph $G$ whose left side consists of the items in $S\setminus O_1$ and whose right side consists of the items in $O_1\setminus S$, with edges $(x,y)$ whenever the stolen item $x$ can be replaced by the leftover item $y$ legally (e.g., if $x\in O_2$, then $y\in B_{12} \cup B_{123}$).  Specifically, $G$ is composed of a bi-clique between the stolen items from $B_{21}\cup B_{213}$ (the stolen items of player 2) and the leftover items from $B_{12}\cup B_{123}$ (these are the leftover items that are legal for player 2), and of another bi-clique between the stolen items of $B_{31}\cup B_{312}$ (the stolen items of player 3) and the leftover items of $B_{13}\cup B_{123}$ (the leftover items that are legal for player 3).  If there is a perfect matching in $G$, then every stolen item can be replaced with the item it was matched to in the perfect matching, resulting in a legal allocation, and we are done.  Thus we assume that there is no perfect matching in $G$.  
	In this case Hall's condition does not hold for $G$. One can verify that this implies one of the following:
	\[
	b_2 + b_{23} > a_2 + a_{23} \hspace*{5pt}\text{ or }\hspace*{5pt} b_3 + b_{32} > a_3 +a_{23}
	\]
	Assume w.l.o.g. that $b_2 + b_{23} > a_2 + a_{23}$.  Then, by equation (\ref{bipartite_G}), we have $a_3 > b_3+b_{32} \geq 0$.  We claim that this implies $b_{23} = 0$.  The reason is that otherwise, player 1 stole some item, denoted $y$, from $B_{213}$ and left behind some item, denoted $x$, in $B_{13}$.  But this cannot be the case since this would imply (by Lemma \ref{strong_preference}) that the edge $x \rightarrow y$ was removed in the transition from $H$ to $H'$, but the edge $B_{13} \rightarrow B_{213}$ was never marked in the pricing scheme.  Therefore $b_{23}=0$ and $b_2 > a_2 + a_{23} \geq 0$.  The combination of $b_2 > 0$ and $a_3 >0$ implies that the edge $B_{13} \rightarrow B_{21}$ was marked in step \ref{step:mark_3_cycles}, and so one of $B_{13},B_{21}$ is of minimal size in the cycle $B_{13} \rightarrow B_{21} \rightarrow B_{32} \rightarrow B_{13}$.  In particular,
	\begin{align*}
	\left| B_{32} \right| 	&\geq \min \{\left|B_{13}\right|, \left|B_{21}\right| \} \\
	 						&\geq \min \{a_3, b_2 \} 	\\
							&\geq \min \{a_3 - \left(b_3 + b_{32}\right), b_2 - \left(a_2 + a_{23}\right)\} \\
							&= b_2 - \left(a_2 + a_{23}\right),
	\end{align*}
	where the equality holds by equation (\ref{bipartite_G}). In order to complete $S$ to a legal allocation, player 2 compensates for his stolen $b_2$ items by taking the $a_2 + a_{23}$ items player 1 left behind in $B_{12}\cup B_{123}$ and by ``stealing'' $b_2 - (a_2 + a _{23})$ items from $B_{32}$ (indeed there are enough items there for player 2 to steal).  Player 3 now has to compensate for the items stolen from her by both players, a total of 
	$(b_{32} + b_3) + (b_2 - (a_2 + a _{23})) = a_3$
	items. Since player 1 left precisely this number of items in $B_{13}$, player 3 can take them.  Note that the resulting allocation is indeed legal and thus optimal.
	
\end{proof}

\section{Maximal Domain Theorems for Walrasian Equilibrium and Dynamic Pricing} \label{sec: MDT}

In this section we prove Theorems \ref{gs_thm} and \ref{maximal_domain_theorem_dynamic_pricing}, namely the maximal domain theorems for Walrasian equilibrium and dynamic pricing, respectively.
In Appendix \ref{sec:gs_characterization} we state and prove a variant of the price based gross-substitutes characterization by Reijnierse et al. \cite{Reijnierse2002}.  As a direct corollary we obtain the following theorem, which will be needed to prove the maximal domain theorems.

\begin{theorem}\label{B_minus_A_2}
	Let $v$ be a non gross-substitutes valuation.  Then there are bundles $A,B\subseteq M$ and a price vector $\mathbf{p}$ such that
	:
		\begin{enumerate}
			\item $\left|B\setminus A\right| = 2$.
			\item $\left|A\setminus B\right| \leq 1$.
			\item $B \in \arg\min_{C~ :~ u(C,\mathbf{p}) > u(A,\mathbf{p})}(\left|(C\setminus A)\cup (A\setminus C)\right|)$.
		\end{enumerate}
\end{theorem}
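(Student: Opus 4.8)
\emph{Proof plan.} Theorem~\ref{B_minus_A_2} is obtained as a direct corollary of the price-based gross-substitutes characterization established in Appendix~\ref{sec:gs_characterization} (a variant of the criterion of Reijnierse et al.); here I sketch the argument behind that characterization. The natural entry point is the single improvement property: $v$ is gross substitutes if and only if for every price vector $\mathbf{q}$ and every bundle $A\notin D_{\mathbf q}(v)$ there is a bundle reachable from $A$ by one addition, one deletion, or one swap that strictly increases $u(\cdot,\mathbf q)$; the appendix criterion can be used in its place. Since $v$ is \emph{not} gross substitutes, this fails, so the family of triples $(\mathbf p,A,B)$ with $A\notin D_{\mathbf p}(v)$, $A$ having no improving one-item modification at $\mathbf p$, and $u(B,\mathbf p)>u(A,\mathbf p)$, is nonempty. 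Fix such a triple minimising $\rho:=\bigl|(B\setminus A)\cup(A\setminus B)\bigr|$. Then Condition~3 holds for $(\mathbf p,A,B)$ by the choice of $B$, and because no one-item modification of $A$ improves it we get $\rho\ge 2$ together with $\max\bigl(|B\setminus A|,|A\setminus B|\bigr)\ge 2$. It remains to push this minimal witness into the normal form $|B\setminus A|=2$, $|A\setminus B|\le 1$.

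I would first bound the ``added'' side. Suppose $|B\setminus A|\ge 2$. For every $x\in B\setminus A$ the bundle $B\setminus\{x\}$ lies at distance $\rho-1$ from $A$, hence by minimality of $\rho$ (the pair $(\mathbf p,A)$ being unchanged) it does not beat $A$. Now lower the prices of the items of $B\setminus A$ by a small $\eta>0$: the utility of $A$ is unaffected, while $u(B\setminus\{x\},\mathbf p)$ rises by $(|B\setminus A|-1)\eta$, so $B\setminus\{x\}$ becomes strictly better than $A$; for $\eta$ small $A$ still has no improving one-item modification and is still not demanded. When $\rho\ge 3$ this yields an improving bundle at distance $\rho-1$, contradicting minimality; hence $|B\setminus A|\le 2$, and in view of $\max\bigl(|B\setminus A|,|A\setminus B|\bigr)\ge 2$ we may take $|B\setminus A|=2$ once the removed side is controlled.

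Controlling the ``removed'' side is the crux. If $|A\setminus B|\ge 2$, I would seek a symmetric move: a perturbation of $\mathbf p$ keeping $A$ undemanded and free of improving one-item modifications while making some bundle at distance $\rho-1$ from $A$ strictly beat $A$ — a natural candidate is $B\cup\{y\}$ for $y\in A\setminus B$, which lies at distance $\rho-1$ and, by minimality, currently does not beat $A$. Raising the prices of the items of $A\setminus B$ does push such bundles up relative to $A$, but also threatens to turn single deletions of $A$ into improvements, so the amount must be tuned against the (strictly positive, after a preliminary generic perturbation) slacks $u(A,\mathbf p)-u(A\setminus\{w\},\mathbf p)$; in the residual cases one shifts the anchor from $A$ towards $B$. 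Combining the two reductions leaves exactly $|B\setminus A|=2$ and $|A\setminus B|\le 1$, which is the assertion.

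I expect the removed-side reduction to be the main obstacle. On the added side, lowering prices on $B\setminus A$ is essentially ``free'', since those items lie outside $A$; raising prices on $A\setminus B$ instead endangers both invariants that make the minimal triple exploitable — that $A$ is undemanded and that $A$ has no improving one-item neighbour — so the perturbation (or anchor change) must be delicately calibrated to create a strictly closer improvement while both invariants survive. This bookkeeping is precisely what the Appendix~\ref{sec:gs_characterization} variant of the Reijnierse et al.\ characterization encapsulates, and Theorem~\ref{B_minus_A_2} then drops out as an immediate corollary.
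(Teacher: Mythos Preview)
Your high-level framing --- Theorem~\ref{B_minus_A_2} as a corollary of the characterization in Appendix~\ref{sec:gs_characterization} --- matches the paper. But the route you sketch differs from the paper's, and it has two concrete gaps.

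The paper does not pass through the Single Improvement property. It proves (Lemma~\ref{SM_RGP_iff_NP_SM_NP_RGP}, combined with Theorem~\ref{GS_iff_SM_RGP}) that a non-GS valuation violates one of two \emph{nonnegative-price} conditions \eqref{NP-SM} or \eqref{NP-RGP}: there are nonnegative prices $\mathbf p$ with $p_x,p_y>0$ (resp.\ $p_x,p_y,p_z>0$) such that the demand correspondence is sandwiched between $\{S,\,S\cup\{x,y\}\}$ and $\{T:T\subseteq S\}\cup\{T\cup\{x,y\}:T\subseteq S\}$ (resp.\ the RGP analogue). From this structured witness the theorem is immediate: lower $p_x,p_y$ (resp.\ $p_y,p_z$) slightly and take $A=S$, $B=S\cup\{x,y\}$ (resp.\ $A=S\cup\{x\}$, $B=S\cup\{y,z\}$). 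All the work is in constructing the nonnegative-price witness, and that is done by explicitly building $\mathbf p$ from a value-based SM or RGP violation --- not by perturbing an SI counterexample.

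Your sketch has two real problems. First, the ``added side'' perturbation does not work as written. You assert that lowering the prices of $B\setminus A$ by a \emph{small} $\eta$ makes $B\setminus\{x\}$ beat $A$; but $u(B\setminus\{x\},\mathbf p)\le u(A,\mathbf p)$ may be \emph{strict}, so a small $\eta$ closes nothing. Taking $\eta$ large enough to close that gap simultaneously raises $u(A\cup\{y\},\cdot)$ by $\eta$ for each $y\in B\setminus A$, which can create an improving single addition and destroy the ``no one-step improvement'' invariant you need for the new triple to be admissible. There is no uniform window for $\eta$ in general, so the minimal-$\rho$ contradiction does not go through. (You already flag the removed side as incomplete; the added side has the same kind of trouble.)

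Second --- and this is what the paper singles out as the real difficulty --- the price vector in Theorem~\ref{B_minus_A_2} must be nonnegative: that is how ``price vector'' is defined in Section~\ref{sec:prelim}, and the maximal-domain construction in Section~\ref{sec: MDT} uses quantities like $p_a+\varepsilon_a$ as \emph{values} of unit-demand buyers. A Single-Improvement counterexample may only exist at prices with negative coordinates, and your perturbations (in particular lowering prices on $B\setminus A$) can only push further into negative territory. The paper explicitly raises this obstacle right before Lemma~\ref{SM_RGP_iff_NP_SM_NP_RGP} and devotes most of the appendix to overcoming it via direct price construction; your sketch does not address it at all.
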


\begin{proof}[Proof of Theorems \ref{gs_thm} and \ref{maximal_domain_theorem_dynamic_pricing}]
	Assume that $v_{1}$ is not gross-substitutes. Thus Theorem \ref{B_minus_A_2} implies the existence of bundles $A,B$ and a price vector $\mathbf{p}$ for which
	$\left|B\setminus A\right| = 2$, $\left|A\setminus B\right| \leq 1$, and $~B \in \arg\min_{C~ :~ u_1(C,\mathbf{p}) > u_1(A,\mathbf{p})}(\left|(C\setminus A)\cup (A\setminus C)\right|)$.
	Denote $B\setminus A = \{b_1,b_2\}$, and if $\left|A\setminus B\right| = 1$ then we denote $A\setminus B = \{a\}$.
	We now introduce our collection of unit-demand buyers.
	The first, denoted $v_{2}$, values each $b \in B\setminus A$ at $p_{b}+v_{1}(M)+1+\varepsilon'_{b}$ and values every other item at 0.
	Moreover, if $A \setminus B$ is not empty, then we have a buyer $v_{a}$	that values $a$ at $p_{a}+\varepsilon_{a}$ and values every other item at 0. 
	Similarly, we have a buyer $v_{b}$ for each item $b\in B\setminus A$
	that values $b$ at $p_{b}+\varepsilon_{b}$ and values every other item at 0.
	The values $\varepsilon_{a},\varepsilon_{b},\varepsilon'_{b}$
	are defined later. Finally, we have a buyer $v_c$ for each $c\in M\setminus (A\cup B)$ that values $c$ at $v_{1}(M)+1$ and values every other item at 0. Our goal is to set the numbers
	$\varepsilon_{a},\varepsilon_{b}, ~\varepsilon'_b$ such that the following two requirements are satisfied: if the market admits a dynamic pricing then it admits a Walrasian equilibrium, and the market does not admit a Walrasian equilibrium.
	The combination of the two requirements clearly implies both theorems. To
	this end, consider the collection $\mathcal{A}$ of all allocations
	that satisfy the following properties:
	\begin{itemize}
		\item $a$ is allocated to one of $\{ v_{1},v_{a}\} $.
		\item Each item $b\in B\setminus A$ is allocated to one of $\{ v_{1},v_{2},v_{b}\} $.
		\item Each item $c\notin A\cup B$ is allocated to $v_{c}$.
		\item Buyer $v_{2}$ takes exactly one item out of $B\setminus A$.
		\item The items in $A\cap B$ are all allocated to $v_{1}$.
	\end{itemize}
	We would like to set the numbers $\varepsilon_{a},\varepsilon_{b},\varepsilon'_{b}$
	such that no two allocations in $\mathcal{A}$ have the same social welfare.  When do two allocations
	$\mathbf{O}^{1},\mathbf{O}^{2}\in\mathcal{A}$ have the same social welfare? Consider
	the following table that specifies the difference between $\mathbf{O}^{1}$
	and $\mathbf{O}^{2}$:

	\begin{center}
		\begin{tabular}{c|c|c|c|c}
			& 1 & 2 & $a$& $b\in B\setminus A$\tabularnewline\hline 
			$\mbox{ }\mathbf{O}^{1}\mbox{ }$ & $\mbox{ }C\mbox{ }$  & $\mbox{ }e\mbox{ }$ & $\mbox{ }G\mbox{ }$& $\mbox{ }I\mbox{ }$ \tabularnewline\hline 
			$\mathbf{O}^{2}$ & $D$ & $f$ & $H$ & $J$\tabularnewline
		\end{tabular}
	\par\end{center}

	$C$ and $D$ are the bundles allocated to buyer 1 in $\mathbf{O}^{1}$ and $\mathbf{O}^{2}$, respectively. $e$ and $f$ are the items allocated to buyer 2 in
	$\mathbf{O}^{1}$ and $\mathbf{O}^{2}$, respectively. $G\subseteq \{a\}$ equals $\{a\}$ if $a$ is allocated to buyer $v_a$ in $\mathbf{O}^1$ and otherwise $G = \emptyset$. $I\subseteq B\setminus A$ is the set of items $b \in B\setminus A$ that are allocated to buyer $v_b$ in $\mathbf{O}^1$. $H,J$ are defined similarly. 
	Allocations $\mathbf{O}^{1}$ and $\mathbf{O}^{2}$ have the same social welfare exactly when
	\begin{align*}
	v_{1}(C)+(p_{e}+v_{1}(M)+1+\varepsilon'_{e})+\sum_{a\in G}(p_{a}+\varepsilon_{a})+\sum_{b\in I}(p_{b}+\varepsilon_{b}) & +\sum_{c\in M\setminus(A\cup B)}(v_{1}(M)+1)=\\
	v_{1}(D)+(p_{f}+v_{1}(M)+1+\varepsilon'_{f})+\sum_{a\in H}(p_{a}+\varepsilon_{a})+\sum_{b\in J}(p_{b}+\varepsilon_{b}) & +\sum_{c\in M\setminus(A\cup B)}(v_{1}(M)+1)
	\end{align*}
	which in turn, by rearranging, occurs exactly when
	\begin{align}\label{allocations_equal}
	\varepsilon'_{e}-\varepsilon'_{f}+\sum_{a\in G\setminus H}\varepsilon_{a}-\sum_{a\in H\setminus G}\varepsilon_{a}+\sum_{b\in I\setminus J}\varepsilon_{b}-\sum_{J\setminus I}\varepsilon_{b} & =\nonumber \\
	v_{1}(D)-v_{1}(C)+\sum_{a\in H\setminus G}p_{a}-\sum_{a\in G\setminus H}p_{a}+\sum_{b\in J\setminus I}p_{b}-\sum_{b\in I\setminus J}p_{b}+p_{f}-p_{e}
	\end{align}
	To achieve unique welfare for each allocation $O\in\mathcal{A}$ we
	must set $\varepsilon_{a},\varepsilon_{b},\varepsilon'_{b}$ so that
	equation  (\ref{allocations_equal}) never holds whenever $\mathbf{O}^{1} \neq \mathbf{O}^{2}$. The bottom expression in (\ref{allocations_equal}) is a function of $C,D,e,f,G,H,I,J$ and all of $\varepsilon_{a},\varepsilon_{b},\varepsilon'_{b}$ are in the top expression.  If we set these values so that the top expression never evaluates to 0, but also small enough so that its absolute value is always smaller than the smallest possible non-zero absolute value of the bottom expression, then equality never holds, as desired.  To this end denote the bottom expression of equation
	(\ref{allocations_equal}) by $d_{C,D,e,f,G,H,I,J}$, and define $\delta$ to be the minimal positive absolute value of $d_{{C,D,e,f,G,H,I,J}}$ among all possible choices of $C,D,e,f,G,H,I,J$.  If $\left|d_{C,D,e,f,G,H,I,J}\right| = 0$ for all possible choices, then we set $\delta = 1$.  We also define
	\[
	\varepsilon :=\min\left\{ \frac{\delta}{2},\frac{u_{1}(B,\mathbf{p})-u_{1}(A,\mathbf{p})}{4}\right\}.
	\]
	We now finally define the numbers $\varepsilon_{a},\varepsilon_{b}, \varepsilon'_{b}$ and complete the construction. We set:
		\begin{align*}
		\varepsilon_{b_{1}} & :=\varepsilon/2^{1}\\
		\varepsilon_{b_{2}} & :=\varepsilon/2^{2}\\
		\varepsilon_{a}		& :=\varepsilon/2^{3}\\
		\varepsilon'_{b_{1}}& :=\varepsilon/2^{4} \\
		\varepsilon'_{b_{2}}& :=\varepsilon/2^{5}
		\end{align*}
	We claim that whenever $\mathbf{O}^{1}$ and $\mathbf{O}^{2}$ are different allocations,
	the top expression of equation (\ref{allocations_equal}) has positive absolute value that
	is smaller than $\delta$. To see this, note that each of  $\varepsilon_{a}, \varepsilon_{b_1},\varepsilon_{b_2},\varepsilon'_{b_1},\varepsilon'_{b_2}$ appears at most once in the top expression of equation (\ref{allocations_equal}), and at least one appears whenever $\mathbf{O}^{1}$ and $\mathbf{O}^{2}$
	are not the same allocation. Take the number with the smallest power
	of 2 in the denominator and assume w.l.o.g. that it is preceeded with
	a minus sign. Then, even if the rest of the numbers appear with a plus
	sign, the entire expression still evaluates to a negative value strictly
	between -$\delta$ and 0. By definition of $\delta$ the equation
	cannot hold in this case and we obtain the desired uniqueness. We
	have proved:
	\begin{lemma}
		\label{claim:Uniqueness_in_A}For every two different allocations 
		$\mathbf{O}^{1},\mathbf{O}^{2}\in\mathcal{A}$, we have $\mathsf{SW}(\mathbf{O}^{1})\neq \mathsf{SW}(\mathbf{O}^{2})$.
	\end{lemma}
	
	\begin{corollary}
		\label{claim:Same_player}Each item $x\in M\setminus(A\cap B)$
		is allocated to the same player in every optimal allocation.
	\end{corollary}
	
	\begin{proof}
		Let $\mathbf{O}$ be an optimal allocation. The following hold with respect
		to $\mathbf{O}$:
		\begin{itemize}
			\item $a$ is allocated to one of $\{ v_{1},v_{a}\} $
			(otherwise the welfare can be increased by reallocating $a$ to $v_{a}$).
			\item Each item $b\in B\setminus A$ is allocated to one of $\{ v_{1},v_{2},v_{b}\} $
			(otherwise the welfare can be increased by reallocating $b$ to $v_{b}$).
			\item Each item $c\in M\setminus(A\cup B)$ is allocated to
			$v_{c}$ (similarly).
			\item $v_{2}$ takes exactly one item out of $M\setminus(A\cap B)$,
			and this item is in $B\setminus A$ (similarly).
		\end{itemize}
		Therefore, if we begin with the allocation $\mathbf{O}$ and reallocate all the
		items in $A\cap B$ to $v_{1}$ then the resulting allocation is in
		$\mathcal{A}$. Furthermore, since the unit-demand players value each item in $A\cap B$ at 0 and $v_1$ is monotone, we conclude that this modification does not incur a loss
		in welfare, implying that optimality is preserved. The claim follows
		since there is at most one optimal allocation in $\mathcal{A}$ (by Lemma
		\ref{claim:Uniqueness_in_A}) and the modification does not reallocate
		any item in $M\setminus(A\cap B)$.
	\end{proof}

	Corollary \ref{claim:Same_player} can be rephrased as follows:
	
	\begin{corollary}
		\label{cor:There-is-some}There is some partition of $M\setminus(A\cap B)$,
		denoted $\{ S_{1},S_{2}\} \cup\{ S_{x}\}_{x\in M\setminus(A\cap B)}$
		such that in every optimal allocation the bundle received by player
		$v_{i}$ is the union of $S_{i}$ and a subset of $A\cap B$. 
	\end{corollary}
	
	We are now ready to prove:
	\begin{lemma}
		\label{lem:Dynamic_implies_Walrasian}If the market admits a dynamic pricing, then it admits a Walrasian equilibrium.
	\end{lemma}
	
	\begin{proof}
		Let $\mathbf{q}$ be a dynamic pricing for the market (Definition~\ref{def:dynamic-pricing}). Recall that for any player $v_{i}$
		and any $S\in D_{\mathbf{q}}(v_{i})$ there is some optimal allocation
		in which $v_{i}$ receives the bundle $S$. Thus, by Corollary \ref{cor:There-is-some},
		$S_{i}\subseteq S$ for any $S\in D_{\mathbf{q}}(v_{i})$. Furthermore,
		for any player $v_{i}\neq v_{1}$ the items in $A\cap B$ do not add
		anything to the utility, implying that $S_{i}\in D_{\mathbf{q}}(v_{i})$.
		Moreover, even if we update $\mathbf{q}$ so that all items in $A\cap B$ are
		priced at 0, and denote the new price vector by $\mathbf{q'}$, then we would
		still have $S_{i}\in D_{\mathbf{q'}}(v_{i})$. Note also that this
		update can only make player $v_{1}$ want $A\cap B$ more than before.
		Thus $S_{1}\cup(A\cap B)\in D_{\mathbf{q'}}(v_{1})$. We
		have thus shown that the allocation $(S_{1}\cup(A\cap B), S_{2}, (S_{x})_{x\in M\setminus(A\cap B)})$
		together with the prices $\mathbf{q'}$ constitute a Walrasian equilibrium, as desired.
	\end{proof}
	It is left to prove that the market does not admit a Walrasian equilibrium.
	\begin{lemma}\label{requirement_2}
		The market composed of the buyers $\{ v_{1},v_{2}\} \cup\{ v_{x}\} _{x\in M\setminus(A\cap B)}$
		does not admit a Walrasian equilibrium.
	\end{lemma}
	We remark that Lemma \ref{requirement_2} proves Theorem \ref{gs_thm} and that the proof
	is mainly adapted from the original proof of Theorem \ref{gs_thm}.  
	
	\begin{proof}
		Assume towards contradiction that the allocation $Y = (Y_1, (Y_{x})_{x\in M\setminus (A\cap B)})$
		together with the price vector $\mathbf{t}$ is a Walrasian equilibrium. Let $X = (X_1, (X_{x})_{x\in M\setminus (A\cap B)})$
		be the allocation obtained from $Y$ by reallocating all of $A\cap B$
		to $v_{1}$. Define the price vector $\mathbf{q}$ as follows:
		\[
		\begin{cases}
		q_{a}=p_{a}+\varepsilon_{a} & \\
		q_{x}=0 & x\in A\cap B\\
		q_{x}=t_{x} & x\in M\setminus A
		\end{cases}
		\]
		The same arguments as in the original proof of Theorem \ref{gs_thm} show that:
		\begin{itemize}
			\item $X,\mathbf{q}$ is a Walrasian equilibrium,
			\item $A\cap B\subseteq X_{1}\subseteq A\cup B$
			\item $X_{2}=\{ b_i\} $ for one of $i=1,2$, implying
			that $B\setminus X_{1}\neq\emptyset$.  Assume w.l.o.g. that $X_2 = \{b_2\}$.
		\end{itemize}
		The last two bullets the following:
		\begin{itemize}
			\item $A\setminus X_1$ equals one of $\{a\}, \emptyset$
			\item $X_1 \setminus A$ equals one of $\{b_1\}, \emptyset$
		\end{itemize}
		and in any case, since $\left|B\setminus A\right|=2$, we have $\left|A\triangle X_{1}\right|<\left|A\triangle B\right|$. By the minimality of $B$ we have $u_{1}(X_{1},\mathbf{p})\leq u_{1}(A,\mathbf{p})<u_{1}(B,\mathbf{p})$.	Assume that $X_{1}\setminus A = \{b_1\}$.  Consider the difference $u_{1}(A,\mathbf{p})-u_{1}(X_{1},\mathbf{p}) \geq0$.
		By how much does the difference change when modifying the prices from
		$\mathbf{p}$ to $\mathbf{q}$? The items in $A\cap X_{1}$ do not contribute to the
		change (the prices of these items cancel out when evaluating the difference).
		$A\setminus X_{1}$ contributes no less than $-\varepsilon_{a}$.
		Moreover, $b_1 \notin X_{b_1}$, implying $q_{b_1}\geq p_{b_1}+\varepsilon_{b_1}$ (otherwise
		$v_{b_1}$ would prefer having $b_1$). We conclude that $X_{1}\setminus A$
		contributes at least $\varepsilon_{b_1}$ to the difference change.
		But by definition, each $b \in B\setminus A$ satisfies  $\varepsilon_{b} - \varepsilon_{a} > 0$ and in particular the difference is strictly larger at the prices $\mathbf{q}$ compared to $\mathbf{p}$.  Thus we have
		\[
		u_{1}(X_{1},\mathbf{q})<u_{1}(A,\mathbf{q})
		\]
		which is a contradiction since $(X,\mathbf{q})$ is a Walrasian
		equilibrium (implying in particular that $X_{1}$ is a favorite bundle for $v_{1}$ with respect to $\mathbf{q}$). Now assume that $X_{1}\setminus A=\emptyset$.
		Denote $d:=u_{1}(B,\mathbf{p})-u_{1}(A,\mathbf{p})>0$. When passing
		from $\mathbf{p}$ to $\mathbf{q}$, the total price of of $A\setminus X_{1}$ increased
		by at most $\varepsilon_{a}\leq \epsilon \leq d/4$ (recall the definition of $\epsilon$). Together
		with $u_{1}(X_{1},\mathbf{p})\leq u_{1}(A,\mathbf{p})$ we have
		\begin{equation}
		u_{1}(X_{1},\mathbf{q})\leq u_{1}(A,\mathbf{q})+d/4.\label{eq:-1}
		\end{equation}
		Now, since $X_{1}\subseteq A$ and $X_2 = \{b_2\}$, we must have $b_1\in X_{b_1}$, implying
		
		\[
		q_{b_1}\leq p_{b_1}+\varepsilon_{b_1}<p_{b_1}+\frac{d}{4}
		\]
		where the second inequality holds by definition of $\epsilon_{b_1}$, and the first holds since otherwise $v_{b_1}$ would rather not have $b_1$.  Moreover, since buyer $v_2$ (weakly) prefers $b_2$ over $b_1$, then we have
		\begin{align*}
		p_{b_1}+v_{1}(M)+1+\varepsilon'_{b_1}-q_{b_1} & \leq p_{b_2}+v_{1}(M)+1+\varepsilon'_{b_2}-q_{b_2} &\Leftrightarrow\\
		q_{b_2} & \leq p_{b_2}+q_{b_1}-p_{b_1}+\varepsilon'_{b_2}-\varepsilon'_{b_1} &
		\end{align*}
		
		and since $q_{b_1}\leq p_{b_1}+\varepsilon_{b_1}$, we also have
		\begin{align*}
		q_{b_2} & \leq p_{b_2}+\varepsilon_{b_1}+\varepsilon'_{b_2}-\varepsilon'_{b_1}\\
		& \leq p_{b}+\frac{d}{4}
		\end{align*}
		
		We have shown that $q_{b}\leq p_{b}+\frac{d}{4}$ for every $b\in B\setminus A$, and conclude that when passing from $\mathbf{p}$ to $\mathbf{q}$, the total price of
		$B\setminus A$ increased by at most $d/2$. Since the total
		price change of $A\setminus B$ did not decrease then we have
		\begin{equation}
		u_{1}(B,\mathbf{q})-u_{1}(A,\mathbf{q})\geq u_{1}(B,\mathbf{p})-u_{1}(A,\mathbf{p})-d/2\label{eq:-2}
		\end{equation}
		Combining (\ref{eq:-1}) and (\ref{eq:-2}) we get 
		\begin{align*}
		u_{1}(B,\mathbf{q})-u_{1}(X_{1},\mathbf{q}) & \geq u_{1}(A,\mathbf{q})+u_{1}(B,\mathbf{p})-u_{1}(A,\mathbf{p})-d/2-u_{1}(A,\mathbf{q})-d/4\\
		& =d-\frac{3}{4}d >0
		\end{align*}
		and again this is a contradiction since $(X,\mathbf{q})$ is a Walrasian
		equilibrium.
	\end{proof}
	
\end{proof}

\paragraph{Acknowledgements.} We deeply thank Amos Fiat and Renato Paes-Leme for helpful discussions.

\bibliographystyle{splncs04}
\bibliography{Dynamic_Pricing}

\appendix

\section{Missing Proofs from Section \ref{sec: dynamic_pricing}}\label{proofs: section dynamic_pricing}
\subsection{Proof of Lemma \ref{pass_alloc}}

\begin{proof}
	The weight of the cycle is
	\begin{align*}
	w(C) 	&= \sum_{i=1}^{k} w(x_i\rightarrow x_{i+1})		\\
	&= \sum_{i=1}^{k} v_{i}(x_i) - v_{i}(x_{i+1})
	\end{align*}
	Let $y$ be an arbitrary item.  If $y \notin \{x_1, \ldots x_k\}$ then it is allocated to the same player in $\mathbf{O}$ and in $\mathbf{A}$, ergo contributing 0 to the difference $\mathsf{SW}(\mathbf{O}) - \mathsf{SW}(\mathbf{A})$.  If, on the other hand, $y=x_i$, then it contributes $v_{i}(x_i) - v_{i-1}(x_i)$ to the difference (since it is allocated to player $i$ in $\mathbf{O}$ and to player $i-1$ in $\mathbf{A}$).  We have established that
	\begin{align*}
	\mathsf{SW}(\mathbf{O})-\mathsf{SW}(\mathbf{A}) &= \sum_{i=1}^{k} v_{i}(x_i) - v_{i-1}(x_i)  	\\
	&= \sum_{i=1}^{k} v_{i}(x_i) - v_{i}(x_{i+1})	\\
	&= w(C)
	\end{align*} 
\end{proof}

\subsection{Proof of Corollary \ref{non_negative_cycles}}

\begin{proof}
	If all vertices in the cycle are different then this is immediate by Lemma \ref{pass_alloc} and the fact that $\mathbf{O}$ is optimal.  If some vertices are repeated then the cycle is a union of two or more cycles with no vertex repetition (whose weights are non-negative).  The claim holds since the weight of the cycle equals the sum of the weights of the repetition-free sub-cycles.
\end{proof}

\subsection{Proof of Lemma \ref{non_negative_prices_util_dif}}

\begin{proof}
	Part 1 holds since the edge $s \rightarrow x$ is a particular path from $s$ to $x$ and its weight is 0 (and the weight of a min-weight path from $s$ to $x$ can only be smaller).  Part 2 holds by the triangle inequality:
	\begin{align*}
	\delta(s,x) + w(x \rightarrow y) &\geq \delta(s,y) 	&\implies \\
	\delta(s,x) + v_{i}(x) - v_{i}(y) &\geq \delta(s,y)		&\implies \\
	v_{i}(x) - p_x &\geq v_{i}(y) - p_y
	\end{align*}
	To show part 3, consider a min-weight path $s \rightarrow x_1 \rightarrow \cdots \rightarrow x_k = x$ from $s$ to $x$.  Let's assume first that $x_1 \notin O_i$.  Then the edge $x \rightarrow x_1$ exists and the weight of the cycle obtained by combining the edge with the path is:
	\begin{align*}
	\delta(s,x) + w(x \rightarrow x_1) = -p_x + v_{i}(x) - v_{i}(x_1) \geq 0
	\end{align*}	
	
	where the inequality holds by Corollary \ref{non_negative_cycles}, and the result follows (recall that valuations are normalized and monotone implying $v_i(x_1)\geq 0$).  We now assume that $x_1 \in O_i$. If $k=1$ (i.e., the path is simply the edge $s \rightarrow x$) then $p_x = 0$ and the result follows.  Otherwise,  $x_2 \notin O_i$ and the edge $x \rightarrow x_2$ does exist.  The weight of the cycle $x_2\rightarrow \cdots \rightarrow x \rightarrow x_2$ is
	\begin{align*}
	0 	&\leq	\delta(x_2, x) + w(x \rightarrow x_2) 					\\
	&=		\delta(x_2, x) + v_i(x) - v_i(x_2)						\\
	&=		(v_i(x_1)-v_i(x_2)) + \delta(x_2, x) + (v_i(x) - v_i(x_1)) 	\\
	&=		\delta(s,x) + v_i(x) - v_i(x_1)							\\
	&=		-p_x + v_i(x) - v_i (x_1)					
	\end{align*}
	and again the result follows.
\end{proof}

\subsection{Proof of Lemma \ref{poly_time_item-equivalence_graph}}

\begin{proof}
	Clearly the main problem is determining the non-empty sets $B_{i,C}$.  We can efficiently determine the set $B_{i,C}$ that any item $x\in O_i$ belongs to by executing the following sub-routine:  go over all players $j \in [n]\setminus \{i\}$ and compute the optimal social welfare in the residual market obtained by fixing $x$ to player $j$. Denote the result by $opt_j$ and compare $opt_j$ with the optimal social welfare in the original market, denoted by $opt$.  $x$ belongs to the set $B_{i,C}$ for the set of players $C = \{j\in [n]\setminus \{i\} \mid opt_j = opt\}$. 
\end{proof}

\subsection{Figure \ref{example_alg}}

\begin{figure}[H]
	\centering\scalebox{0.8}{
		\includegraphics[width=\linewidth]{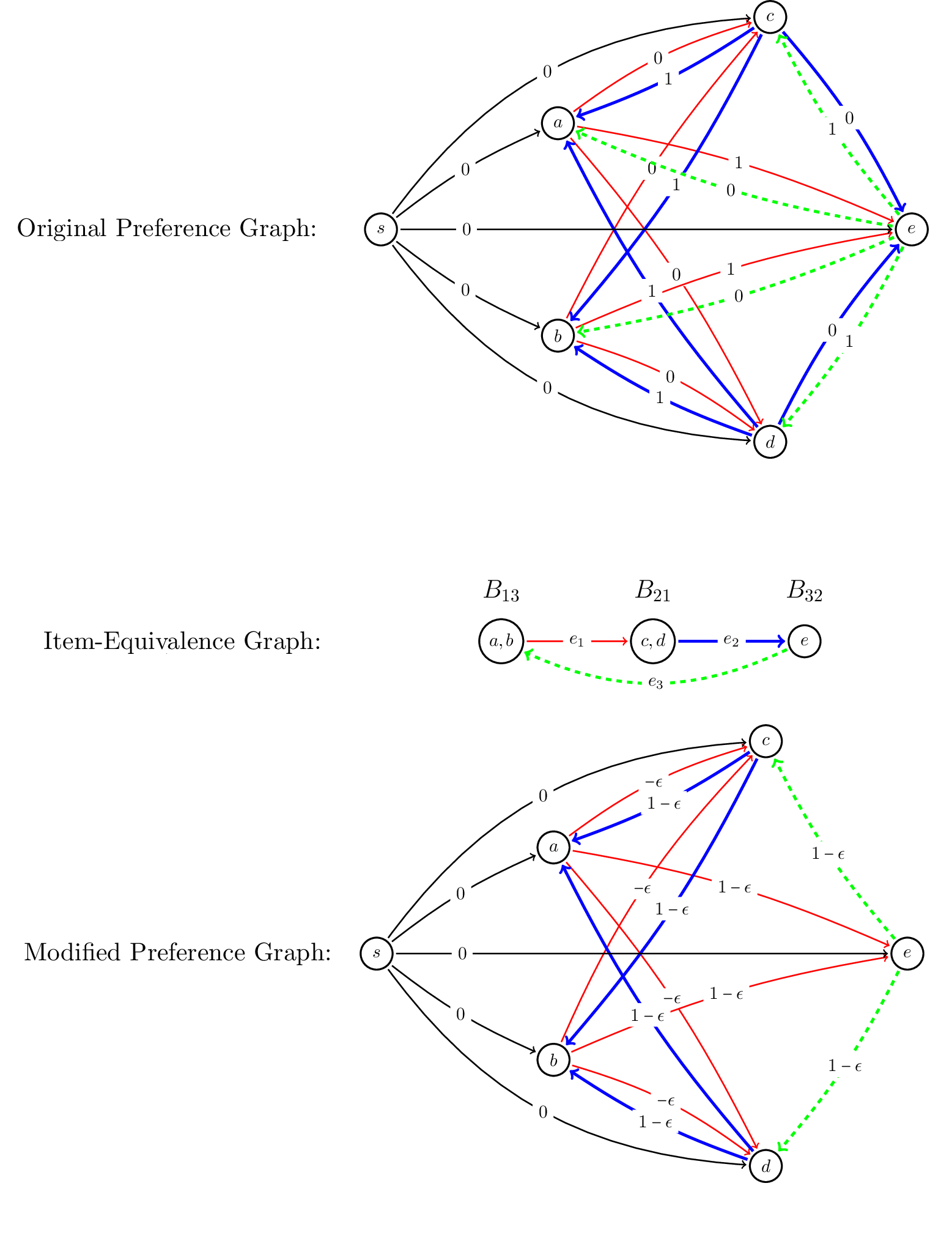}
	}
	\caption{The various graphs obtained in the execution of Algorithm \ref{alg_3_players} on our running example.
		The edges $e_2$ and $e_3$ in the item-equivalence graph were marked in step \ref{step:mark_3_cycles}.  Consequently, the edges from $c,d$ to $e$ and from $e$ to $a,b$ were removed from the original preference graph.		
	}
	\label{example_alg}
\end{figure}

\section{Dynamic Pricing for 3 Multi-Demand Buyers: the Case where Demand Exceeds Supply} \label{appendix: m < sum k_i}

In this section, we generalize the result of Section \ref{three_player_solution} for the case where $m \leq \sum_{i} k_i$. A natural approach would be to introduce imaginary items with value 0 to all players, and apply the same techniques as in Section \ref{three_player_solution} to the obtained market.  This approach ultimately succeeds, but introduces non-trivial challenges along the way which should be handled carefully.  In particular, establishing the equivalent of Lemma \ref{utilities_positive} for the generalized setting (Lemma~\ref{real_utilities_positive}) requires new ideas and more subtle arguments.

We fix a buyer profile $\mathbf{v} = (v_1,\ldots,v_n)$ over the item set $M$, where each $v_i$ is $k_i$-demand.  As explained in Section \ref{sec: dynamic_pricing} we can assume w.l.o.g. that all items are essential for optimality (i.e. each item is allocated in every optimal allocation), implying that every optimal allocation hands at most $k_i$ items to player $i$, for every $i$.  in section \ref{sec: dynamic_pricing} we made the simplifying assumption that each optimal allocation hands \textbf{exactly} $k_i$ items to player $i$ for every $i$.  This was necessary for the proof of Theorem \ref{legal_iff_optimal}, which was crucial for establishing the correctness of the dynamic pricing scheme.  In general though, the number of items might be smaller than $\sum_{i} k_i$, in which case not all players exhaust their cap $k_i$ in every optimal allocation.  In order to simulate this condition and present a dynamic pricing scheme that follows the same ideas of the scheme in the simplified setting, we introduce to the market $\sum_{i=1}^{n} k_i - m$ ``imaginary items'', valued at 0 by all players, and for every original optimal allocation in which a player $i$ receives less than $k_i$ items, we think of it as if the amount of received items is exactly $k_i$, where some of the items can be imaginary. Furthermore, it is convenient to think of the price of an imaginary item as always being 0.  We formalize these ideas in the following:

\begin{definition} $ $
	\begin{enumerate}
		\item 	The \textit{augmented market} of a buyer profile $\mathbf{v} = (v_1,\ldots,v_n)$ is the buyer profile $\mathbf{v'} = (v'_1,\ldots, v'_n)$ defined on the item set $M' = M \cup \{d_1,\ldots , d_{\sum_{i=1}^{n} k_i - m}\}$, where for every $i$, $v'_i$ is $k_i$-demand with
		\[
		v'_i(x) = 	\begin{cases}
		v_i(x) 	& x\in M\\
		0		& x \in \{d_1,\ldots , d_{\sum_{i=1}^{n} k_i - m}\}
		\end{cases}
		\]
		The items $d_1,\ldots, d_{\sum_{i=1}^{n} k_i - m}$ are called \textit{imaginary} items. 
		\item	An \textit{augmented optimal allocation} is any original optimal allocation augmented with the additional imaginary items such that every player $i$ receives exactly $k_i$ items.  Formally, an allocation $\mathbf{O}'$ is an augmented optimal allocation if every player $i$ is allocated exactly $k_i$ items, and there is an optimal allocation $\mathbf{O}$ such that for every original item $x \in M$ we have that $x \in O_i$ iff $x \in O'_i$. 
		
		\item For any price vector $\mathbf{p}$ on $M$, its \textit{augmented price vector} $\mathbf{p'}$ is the price vector on $M'$ where 
		\[
		p'_x = \begin{cases}
		p_x	& x\in M 		\\
		0	& \text{otherwise}
		\end{cases}
		\]
	\end{enumerate}	 
\end{definition}

\begin{remark}\label{augmented_optimal} $ $
	\begin{itemize}
		\item Since $v'_i$ and $v_i$ coincide on the set of real items $M$, we abuse notation and use $v_i$ when referring to $v'_i$  (and similarly for $u_i$ and $u'_i$). 
		\item Every augmented optimal allocation in the augmented market is essentially an optimal allocation in the original market, padded with the appropriate number of imaginary items, to match demand. 
		\item Since imaginary items always have value and price of 0, it follows that for any player $i$, bundle $S\subseteq M$ such that $\left|S\right| \leq k_i$ and price vector $\mathbf{p}$ on $M$ we have $S \in D_{\mathbf{p}}(i) \iff S\cup \{d_1,\ldots ,d_{k_i - \left|S\right|}\} \in D_{\mathbf{p'}}(i)$.
	\end{itemize}
\end{remark}

In our dynamic pricing scheme we adjust the tools used in the simplified setting to accommodate settings where demand exceeds supply. In particular, we still use the preference graph, except that its vertex set corresponds to all items, including imaginary ones, and its edges are defined with respect to some augmented optimal allocation $\mathbf{O}$. Analogous reasoning gives us the following lemma. Recall that $\delta(s,x)$ is the min-weight path from $s$ to $x$ in the preference graph $H$.

\begin{lemma} \label{weak_preference_real_imaginary}
	Consider the (non-negative) prices $p_x = -\delta(s,x)$ for every real item $x$, and the augmented prices $\mathbf{p'}$.  Let $i,x,y$ be such that $x \in O_i, y \notin O_i$, and both $x,y$ can be either real or imaginary.  Then player $i$ weakly prefers $x$ over $y$, i.e.
	\[
	v_i(x) - p'_x \geq v_i(y) - p'_y
	\]
\end{lemma}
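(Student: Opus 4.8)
The plan is to first upgrade the price identity $p_x = -\delta(s,x)$ so that it holds \emph{uniformly} over all items of the augmented market --- real \emph{and} imaginary --- and then finish with a single application of the triangle inequality, exactly as in Lemma~\ref{non_negative_prices_util_dif}(2). As a preliminary step I would note that Lemma~\ref{pass_alloc} and Corollary~\ref{non_negative_cycles} carry over verbatim to the augmented market: a reallocation of $\mathbf{O}$ obtained by shifting items along a cycle keeps every bundle of size $\le k_i$, so welfare is still evaluated additively on each bundle and $\mathbf{O}$ remains welfare-maximal among all augmented allocations. Hence $\delta(s,\cdot)$ is well-defined, and the $0$-weight edge $s\to x$ gives $\delta(s,x)\le 0$, so $p_x = -\delta(s,x)\ge 0$ --- this is the (parenthetical) non-negativity asserted in the lemma.

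The crux is the claim that $\delta(s,d)=0$ for every imaginary item $d$, since this yields $p'_d = 0 = -\delta(s,d)$ and therefore $p'_x = -\delta(s,x)$ on all of $M'$. Only the inequality $\delta(s,d)\ge 0$ needs an argument. I would take a \emph{simple} min-weight path $s\to x_1\to\cdots\to x_k = d$ (it may be taken simple since all cycles have non-negative weight) and mimic the proof of Lemma~\ref{utilities_positive}: form the allocation $\mathbf{A}$ from $\mathbf{O}$ by handing $x_{j+1}$ to the owner $i_j$ of $x_j$ for each $j<k$ and dis-allocating $x_1$. Because consecutive path items have distinct owners and $x_1,\ldots,x_k$ are all distinct, every player's bundle size remains $k_i$ except that of the owner of $d$, whose bundle drops to $k_{i_k}-1$; in particular $\mathbf{A}$ is a legitimate augmented allocation with $|A_i|\le k_i$ for all $i$. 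An item-by-item accounting of $\mathsf{SW}(\mathbf{O})-\mathsf{SW}(\mathbf{A})$ then gives $\delta(s,d) = w(\text{path}) = \mathsf{SW}(\mathbf{O})-\mathsf{SW}(\mathbf{A}) - v_{i_k}(d)$, and since $d$ is imaginary, $v_{i_k}(d)=0$, so $\delta(s,d) = \mathsf{SW}(\mathbf{O})-\mathsf{SW}(\mathbf{A})\ge 0$ by optimality of $\mathbf{O}$.

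With $p'_x = -\delta(s,x)$ on all of $M'$ in hand, the conclusion is immediate and uniform. Given $x\in O_i$ and $y\notin O_i$, the item $y$ is allocated in $\mathbf{O}$ to some player $j\neq i$ (every item, including every imaginary one, is allocated in an augmented optimal allocation), so the edge $x\to y$ with weight $w(x\to y) = v_i(x)-v_i(y)$ is present in $H$ regardless of which of $x,y$ is real or imaginary (using $v_i$ for $v'_i$ as in Remark~\ref{augmented_optimal}). The triangle inequality $\delta(s,x)+w(x\to y)\ge\delta(s,y)$ rearranges to $v_i(x)-p'_x \ge v_i(y)-p'_y$, which is exactly the asserted weak preference. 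I expect the only real work to be the step $\delta(s,d)\ge 0$ --- note in particular that the sub-case ``$x$ imaginary, $y$ real'' really does need this identity (it amounts to $p'_y\ge v_i(y)$, which is not otherwise obvious) --- while everything else is a transcription of the unit-cap arguments with imaginary items treated simply as value-$0$ items.
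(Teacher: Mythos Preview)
Your proof is correct and is precisely the natural way to make the paper's terse ``analogous reasoning'' explicit. The only step that goes beyond a direct transcription of Lemma~\ref{non_negative_prices_util_dif}(2) is the identity $\delta(s,d)=0$ for imaginary $d$, and your path-to-allocation welfare accounting (mirroring the computation in Lemma~\ref{utilities_positive}) handles it cleanly; once $p'_z=-\delta(s,z)$ holds uniformly on $M'$, the triangle inequality finishes all four real/imaginary cases at once, exactly as you say.
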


An item $x \in M'$ (real or imaginary) is called \textit{legal} for player $i$ if there is some augmented optimal allocation $\mathbf{O}' = (O'_1, \ldots, O'_n)$ such that $x \in O'_i$. Note that an imaginary item is legal for some player $i$ if there is an optimal allocation in the original market in which player $i$ receives strictly less than $k_i$ items. Legal bundles and allocations are defined as in the main exposition.  Theorem \ref{legal_iff_optimal} then directly translates to our setting as follows. An allocation in the augmented market is legal iff it is \textit{augmented} optimal, implying the following lemma.
\begin{lemma} \label{main_goal_general}
	A price vector $\mathbf{p}$ is a dynamic pricing for the original market if for every player $i$ and $S \in D_{\mathbf{p}}(i)$:
	\begin{enumerate}
		\item The augmented set $S' = S \cup \{d_1,\ldots d_{\min\{k_i - \left|S\right|, \sum k_i -m\}}\}$ is legal for player $i$ in the augmented market.
		\item There exists an allocation of the items $M'\setminus S'$ to the other players in which every player receives a bundle that is legal for her.
	\end{enumerate}
\end{lemma}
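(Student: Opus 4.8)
The plan is to peel off the augmentation and invoke the machinery already assembled, so the argument is essentially a translation. Fix a player $i$ and a bundle $S\in D_{\mathbf{p}}(i)$ in the original market; by Definition~\ref{def:dynamic-pricing} it suffices to exhibit an optimal allocation of the original market in which $i$ receives exactly $S$. First I would note that under the hypothesis we may assume $\left|S\right|\le k_i$: condition~1 asserts that the augmented set $S'$ is legal for $i$, a legal bundle has exactly $k_i$ items, so $\left|S'\right|=k_i$; since $S\subseteq S'$ this forces $\left|S\right|\le k_i$, and in fact forces $\min\{k_i-\left|S\right|,\ \sum_j k_j-m\}=k_i-\left|S\right|$, so $S'=S\cup\{d_1,\dots,d_{k_i-\left|S\right|}\}$.

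Next I would lift $S$ to the augmented market: by the third bullet of Remark~\ref{augmented_optimal}, $S\in D_{\mathbf{p}}(i)$ implies $S'\in D_{\mathbf{p'}}(i)$. Then I would assemble a full allocation of the augmented market by giving player $i$ the bundle $S'$ (legal for $i$ by condition~1) and using condition~2 to allocate the remaining items $M'\setminus S'$ to the other players so that each receives a bundle legal for her. Because every legal bundle of player $j$ has exactly $k_j$ items and $\sum_j k_j=\left|M'\right|$, this covers all of $M'$, and the result is a legal allocation $\mathbf{A'}$ of the augmented market with $A'_i=S'$.

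Now I would apply the augmented-market version of Theorem~\ref{legal_iff_optimal} — valid because the augmented profile $\mathbf{v'}$ is again multi-demand ($v'_i$ is $k_i$-demand) and, under the augmented-optimal convention, every relevant allocation hands each player exactly $k_i$ items, so the hypotheses of that theorem hold — to conclude that $\mathbf{A'}$ is an augmented optimal allocation. By the definition of augmented optimal allocation there is an optimal allocation $\mathbf{O}$ of the original market agreeing with $\mathbf{A'}$ on all real items; in particular $O_i=A'_i\cap M=S'\cap M=S$, since the items of $S'\setminus S$ are imaginary. Thus $\mathbf{O}$ is an optimal allocation of the original market in which $i$ receives $S$, and as $i$ and $S$ were arbitrary, $\mathbf{p}$ is a dynamic pricing, completing the proof.

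Almost all of this is bookkeeping; the only place that needs care is the verification that Theorem~\ref{legal_iff_optimal} genuinely transfers to the augmented market — that padding with zero-valued imaginary items and restricting attention to augmented optimal allocations recreates exactly the hypotheses of that theorem (every item essential, every player exhausting its cap) — together with matching the $\min$ in the definition of $S'$ to $k_i-\left|S\right|$. Neither step requires new ideas; the genuinely subtle part of the demand-exceeds-supply case lies elsewhere, in re-proving the positive-utility statement (Lemma~\ref{real_utilities_positive}), not here.
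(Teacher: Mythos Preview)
Your proposal is correct and follows essentially the same approach as the paper: the paper states that the augmented-market analog of Theorem~\ref{legal_iff_optimal} (legal iff augmented optimal) directly implies this lemma, and your argument spells out precisely that translation. The step lifting $S$ to $S'\in D_{\mathbf{p'}}(i)$ via Remark~\ref{augmented_optimal} is harmless but unnecessary---conditions~1 and~2 already hand you the legal allocation $\mathbf{A'}$ with $A'_i=S'$, and the rest of your argument never uses membership in the augmented demand correspondence.
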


The item-equivalence graph is also defined analogously. An imaginary item $x$ is in the set $B_{i,C}$ iff there is an optimal allocation in the original market in which player $i$ receives strictly less than $k_i$ items.  Furthermore, for any two imaginary items $x \in B_{i,C_1}$, $y \in B_{j,C_2}$ for $i\neq j$ we have $\{i\} \cup C_1 = \{j\} \cup C_2$ (all imaginary items are legal for the same set of players since they are valued the same by all players).  Lemma \ref{item-equivalence_cycles} carries over to our setting as well.  Equipped with the modified tools, the dynamic pricing scheme is defined analogously to the main exposition, only that prices are set only for the real items (but based on the preference graph that includes the imaginary items).  The only part of the analysis that does not carry over directly from the main exposition is the proof of Lemma \ref{utilities_positive} (ensuring a strictly positive utility from every item $x\in O_i$). This lemma was crucial to argue that each demanded set of player $i$ contains exactly $k_i$ items.  In our setting it is required in order to argue that each such demanded set contains at least $\left|O_i\cap M\right|$ items (i.e. the amount of real items in $O_i$), implying in particular that for each $S \in D_{\mathbf{p}}(i)$,
\[
\left|S'\right| = \left|S \cup \{d_1,\ldots d_{\min\{k_i - \left|S\right|, \sum k_i -m\}}\}\right| = k_i,
\] which is needed for the proof of Part 1 of Lemma \ref{main_goal_general} (the analog of Lemma \ref{main_goal_exposition}, whose proof also directly carries over to our setting).

We next explain why the proof of Lemma~\ref{utilities_positive} does not carry over to augmented markets. 
In the proof, we argued that the utility $v_i(x) - p_x$ equals $\mathsf{SW}(\mathbf{O}) - \mathsf{SW}(\mathbf{A})$, where $\mathbf{A}$ is some allocation in which some item is not allocated. Thus, $\mathbf{A}$ is necessarily sub-optimal, and the utility is strictly positive, as required. This reasoning fails in our setting since the un-allocated item might be imaginary, in which case $\mathbf{A}$ may still be optimal. In what follows we show that the lemma still holds.

\begin{lemma} \label{real_utilities_positive}
	For any player $i$ and real item $x \in O_i\cap M$, $v_{i}(x) - p_x > 0$.	
\end{lemma}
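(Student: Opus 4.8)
The plan is to run the proof of Lemma~\ref{utilities_positive} as far as it goes and then patch the single step that breaks. Fix $i$ and a real item $x\in O_i\cap M$, pick a min-weight path $s\to x_1\to\cdots\to x_k=x$ in $H'$, and observe that its vertices are distinct and $k\le|M'|$ since every cycle of $H'$ is strictly positive (the analogue of Lemma~\ref{positive_weights}). The same telescoping computation as in Lemma~\ref{utilities_positive} gives
\[
v_i(x)-p_x=\bigl(\mathsf{SW}(\mathbf{O})-\mathsf{SW}(\mathbf{A})\bigr)-\varepsilon k,
\]
where $\mathbf{A}$ is obtained from the augmented optimal allocation $\mathbf{O}$ by passing $x_{j+1}$ to the owner of $x_j$ for every $j<k$ and leaving $x_1$ unallocated. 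Because $\varepsilon$ is chosen so that $\varepsilon k\le\varepsilon|M'|<\Delta$, it suffices to prove that $\mathbf{A}$ is sub-optimal: then $\mathsf{SW}(\mathbf{O})-\mathsf{SW}(\mathbf{A})\ge\Delta>\varepsilon k$ and we are done.

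If $x_1$ is a \emph{real} item this is immediate: $\mathbf{A}$ leaves a real (hence essential) item unallocated, so its restriction to $M$ is not an optimal allocation of the original market, and since imaginary items contribute nothing to welfare, $\mathbf{A}$ is sub-optimal. So assume $x_1$ is imaginary. Two structural facts narrow the picture. First, in $H'$ there are no edges between two imaginary items: their equivalence classes have the same set of legal players (namely $P$), hence form a $2$-cycle in the item-equivalence graph and were removed in step~\ref{step:mark_2_cycles}; consequently $x_2$ is real. Second, if the equivalence class of $x_1$ carried three indices then all of its outgoing edges would have been removed in step~\ref{step:mark_2_cycles} as well (each sits on a $2$-cycle with its reverse), contradicting the survival of $x_1\to x_2$; hence $x_1$ is legal for exactly two players.

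The remaining case --- $x_1$ imaginary --- is the crux, and is exactly where the parallel with the main-text proof collapses, since $\mathbf{A}$ now leaves only an imaginary item unallocated and may genuinely have optimal welfare. The approach is to trace the path $s\to x_1\to x_2\to\cdots\to x_k$ through the item-equivalence graph, using precisely which edges survive steps~\ref{step:mark_2_cycles}--\ref{step:mark_3_cycles} when $n\le 3$: from the two-index class of the imaginary $x_1$ every surviving outgoing edge leads either to a three-index class (a dead end, as three-index classes have no surviving outgoing edges) or to one specific two-index class, and iterating this severely constrains the shape of the path and the classes of $x_2,\dots,x_k$. For each surviving shape one either derives a contradiction with minimality of the path, or exhibits a concrete modification $\widetilde{\mathbf{A}}$ of $\mathbf{A}$ --- re-inserting the freed imaginary item with an owner for whom it is legal, and checking the bundle of every player --- whose structure forces $\mathsf{SW}(\mathbf{O})-\mathsf{SW}(\mathbf{A})\ge\Delta$.

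I expect this last case to be the only real obstacle. Everything else (the telescoping identity, the real-$x_1$ case, the facts about imaginary items and three-index classes in $H'$) is a routine transcription of the tools already developed; it is the imaginary-$x_1$ analysis that is genuinely new, is tied to the specific edge-marking of Algorithm~\ref{alg_3_players} and therefore special to three buyers, and is the reason the statement is confined to real items --- for imaginary items the weak inequality of Lemma~\ref{weak_preference_real_imaginary} is the best available.
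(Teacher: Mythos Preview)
Your plan breaks at exactly the step you flag as the crux. In the imaginary-$x_1$ case, re-inserting the freed imaginary item into $\mathbf{A}$ does \emph{not} force sub-optimality; it can produce a legal (hence augmented-optimal) allocation, giving $\mathsf{SW}(\mathbf{O})-\mathsf{SW}(\mathbf{A})=0$ and therefore $v_i(x)-p_x=-\varepsilon k<0$. Concretely, take one of the path shapes that actually survives the edge-marking of Algorithm~\ref{alg_3_players}, say $x_1\in B_{32}$ (imaginary), $x_2\in B_{13}$, $x=x_3\in B_{213}$. Then $\mathbf{A}$ passes $x_2$ to player~3, $x_3$ to player~1, and drops $x_1$; handing $x_1$ to player~2 (for whom it is legal, since $x_1\in B_{32}$) yields a legal $\widetilde{\mathbf{A}}$ with $\mathsf{SW}(\widetilde{\mathbf{A}})=\mathsf{SW}(\mathbf{A})=\mathsf{SW}(\mathbf{O})$. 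In other words, the statement is \emph{false} for an arbitrary base augmented optimal allocation $\mathbf{O}$, so no path-tracing analysis of the algorithm as written can succeed.

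The paper's resolution is not a sharper analysis but a modification of the algorithm. For each of the finitely many surviving path shapes --- both the two-index case above and the one-index case $x_1\in B_{i,\emptyset}$, which your sketch omits (you rule out three indices and then jump to ``$x_1$ is legal for exactly two players'') --- the proof exhibits specific cycles in the item-equivalence graph and shows that repeatedly ``applying'' them (reallocating one item from each vertex along the cycle) produces an alternative augmented optimal allocation in which the offending path shapes can no longer occur. The algorithm is then augmented with a preprocessing step that replaces $\mathbf{O}$ by this alternative; after preprocessing, the problematic cases are vacuous, and the remaining cases (your real-$x_1$ case, plus the easy cases where the edge $x\to x_1$ survives in $H'$, or where some path edge corresponds to no item-equivalence edge and hence $\mathbf{A}$ is genuinely sub-optimal) finish the proof.
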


\begin{proof}
	Consider a min-weight path from $s$ to $x$ in $H'$ , $s \rightarrow x_1 \rightarrow \cdots \rightarrow x_k = x$ (recall that $H'$ is the modified preference graph), and suppose that $x_j \in O_{i_j}$ for all $j$ (with $i_k = i$).  Since every cycle in $H'$ has positive weight (Lemma \ref{positive_weights}) it must be the case that all the vertices $x_i$ are different (otherwise this is not a min-weight path).  We split to cases:
	\begin{enumerate}
		\item The item $x_1$ is real. In this case, the proof is identical to that of Lemma~\ref{utilities_positive}.
		
		\item The edge $x \rightarrow x_1$ exists (i.e., $x$ and $x_1$ do not belong to the same buyer and the edge was not removed in the transition from $H$ to $H'$).  In this case the cycle $C$ obtained by connecting $x$ to $x_1$ exists in $H'$ and its weight is positive (Lemma \ref{positive_weights}). Thus we have 
		\begin{align*}
		0	&< w(C)										\\
		&= \delta(s,x) + w(x \rightarrow x_1)		\\
		&= \delta(s,x) + v_i(x) - v_i(x_1) -\epsilon\\
		&\leq v_i(x) - (-\delta(s,x) + \epsilon)	\\
		&= v_i(x) - p_x
		\end{align*}
		as desired.

		\item One of the edges $x_d \rightarrow x_{d+1}$ of the path does not correspond to an edge in the item-equivalence graph.  I.e., if $x_d \in B_{i_d, C_d}, x_{d+1} \in B_{i_{d+1}, C_{d+1}}$, then $i_d \notin C_{d+1}$.  In this case there is no augmented optimal allocation in which $x_{d+1}$ is allocated to player $i_d$ and thus the allocation $\mathbf{A}$ obtained from $\mathbf{O}$ by passing the item $x_{j+1}$ to player $i_j$ for all $1 \leq j \leq k-1$ and dis-allocating $x_1$ is not optimal.  Here again, the proof follows from the same reasoning as the proof of Lemma~\ref{utilities_positive}.
	\end{enumerate}
	
	In the remaining cases we assume that the edge $x \rightarrow x_1$ does not exist, $x_1$ is imaginary and that all path edges correspond to item-equivalence graph edges that were not marked in the transition from $H$ to $H'$. In particular, all inner vertices of the path must belong to 2-index vertices of the item-equivalence graph (i.e. vertices of the form $B_{ij}$), since all outgoing edges from 3-index vertices in the item-equivalence graph were marked in step \ref{step:mark_2_cycles}, and 1-index sets have no incoming edges.
	
	\begin{enumerate}
		\setcounter{enumi}{3}
		\item  $x_1$ belongs to a 3-indexed set of the item-equivalence graph. Thus it cannot be an inner vertex and we have $x_1 = x_k = x$. However this is a contradiction since $x$ is a real item and $x_1$ is imaginary.
	\end{enumerate}
	
	{\bf Remark:} All cases up to now did not make any assumption on the structure of $\mathbf{O}$.  In the remaining cases we use the following terminology:  given a cycle in the item-equivalence graph in which all vertices are different, ``applying the cycle'' means choosing an arbitrary item from each vertex in the cycle, followed by returning the augmented optimal allocation obtained by reallocating each of the items to the player possessing the preceding item in the cycle. 
	
	\begin{enumerate}
		\setcounter{enumi}{4}
		\item $x_1$ belongs to a 1-indexed set.  W.l.o.g. $x_1 \in B_{1,\emptyset}$.  If $x_k\notin O_1$, then the edge $x_k \rightarrow x_1$ exists in $H'$ and this is handled in case 1.  Thus we assume that $x_k$ belongs to player 1.  $x_2 \notin B_{213}\cup B_{312}$ as otherwise $x_2$ is an inner vertex that belongs to a 3-indexed set, contradicting our assumption. Thus $x_2 \in B_{21}$ or $x_2 \in B_{31}$.  Assume $x_2 \in B_{21}$. $x_3 \notin B_{123} \cup B_{12}$ (corresponding edges were marked in Step \ref{step:mark_2_cycles}) and also $x_3 \notin B_{312}$ (cannot be inner vertex and cannot be final vertex since it does not belong to player 1).  The remaining possibility is that $x_3 \in B_{32}$.  In this case we cannot have $x_4 \in B_{13}$ (one of the edges $B_{21} \rightarrow B_{32}, B_{32} \rightarrow B_{13}$ was marked in Step \ref{step:mark_3_cycles}).  Similarly we cannot have $x_4 \in B_{23}\cup B_{213}$ (corresponding edges were marked in Step \ref{step:mark_2_cycles}).  We conclude that $x_4 = x = x_k \in B_{123}$, and the (item-equivalence graph) path is 
		\[
		B_{1,\emptyset} \rightarrow B_{21} \rightarrow B_{32} \rightarrow B_{123}
		\]
		
		In the analogous case where $x_2 \in B_{31}$, the resulting path is
		\[
		B_{1,\emptyset} \rightarrow B_{31} \rightarrow B_{23} \rightarrow B_{123}
		\]
		
		We now show that there is an alternative augmented optimal allocation in which both paths do not exist in the item-equivalence graph (i.e., one of the sets in each path is empty).  We can then update the algorithm by adding a pre-processing step in which the base augmented optimal allocation is updated to the new one if it so happens that the imaginary items belong to $B_{1,\emptyset}$.  In the new allocation the current case is vacuous.  To this end, note that the cycles  $C_1 = B_{21} \rightarrow B_{32} \rightarrow B_{123} \rightarrow B_{21}$ and $C_2 = B_{31} \rightarrow B_{23} \rightarrow B_{123} \rightarrow B_{31}$ are cycles in the item-equivalence graph. Consider the following procedure:
		
		While one of the cycles $C_1$,$C_2$ exists in the item-equivalence graph (i.e. the corresponding vertices are non-empty sets), ``apply'' one of them.
		
		Note that each application of $C_1$ decreases $\left|B_{21}\right|, \left|B_{32}\right|, \left|B_{123}\right|$ by 1 (and increases $\left|B_{23}\right|, \left|B_{312}\right|, \left|B_{12}\right|$ by 1).  Each application of $C_2$ decreases $\left|B_{31}\right|, \left|B_{23}\right|, \left|B_{123}\right|$ by 1 (and increases $\left|B_{32}\right|, \left|B_{213}\right|, \left|B_{13}\right|$ by 1). In particular, the sum
		\[
		\left|B_{21}\right| + \left|B_{123}\right| + \left|B_{31}\right|
		\]
		strictly decreases after each iteration.  We conclude that the procedure must end and none of these paths exists in the obtained augmented optimal allocation.

		
		\item $x_1$ is imaginary and belongs to a 2-indexed set of the item-equivalence graph.  W.l.o.g. $x_1 \in B_{32}$.  It cannot be the case that $x_2 \in B_{23}\cup B_{213}$ (the corresponding edges were marked in Step \ref{step:mark_2_cycles}).  If $x_2 \in B_{123}$, then $x_2 = x$ but the edge $x_2 \rightarrow x_1$ was not removed in the transition to $H'$, and this case was covered in part 2. The remaining possibility is that $x_2 \in B_{13}$.  Note that the edge $x_2 \rightarrow x_1$ exists in $H'$ and thus $x_2 \neq x$.  It cannot be the case that $x_3 \in B_{21}$ (one of the edges $B_{32} \rightarrow B_{13}, B_{13} \rightarrow B_{21}$ was marked in Step \ref{step:mark_3_cycles}).  $x_3 \notin B_{312}\cup B_{31}$ (the corresponding edges were marked in Step \ref{step:mark_2_cycles}). The last possibility is that $x_3 = x \in B_{213}$ and the corresponding item-equivalence graph path is 
		\[
		B_{32} \rightarrow B_{13} \rightarrow B_{213}
		\]
		
		(indeed, the edge $x_3 \rightarrow x_1$ does not exist in $H'$ since the edge $B_{213} \rightarrow B_{32}$ was marked in Step \ref{step:mark_2_cycles}). Furthermore, every imaginary item can belong either to $B_{32}$ or $B_{23}$. In the analogous case where $x_1 \in B_{23}$, the resulting item-equivalence graph path is
		\[
		B_{23} \rightarrow B_{12} \rightarrow B_{312}
		\]
		As in the previous case we will show that there is a ``cycle application'' procedure that results in an alternative augmented optimal allocation in which none of these paths exists. Denote the cycles
		\begin{align*}
		C_1 &= B_{32} \rightarrow B_{13} \rightarrow B_{213} \rightarrow B_{32} \\
		C_2 &= B_{23} \rightarrow B_{12} \rightarrow B_{312} \rightarrow B_{23}
		\end{align*}
		and our assumption is that $C_1$ exists in the bottlneck graph.  The procedure goes as follows:
		\begin{itemize}
			\item While $\left|B_{13}\right|, \left|B_{213}\right|, \left|B_{12}\right|, \left|B_{312}\right| \geq 1 $ (all inequalities hold)
			\begin{itemize}
				\item Apply $C_1$, then apply $C_2$
			\end{itemize}
			\item If $\left|B_{12}\right| = 0$ or $\left|B_{312}\right| = 0$, apply $C_1$ $\min\{\left|B_{13}\right|, \left|B_{213}\right|, \left|B_{32}\right|\}$ times and terminate.
			\item Otherwise (i.e., $\left|B_{13}\right| = 0$ or $\left|B_{213}\right| = 0$), apply $C_2$ $\min\{\left|B_{12}\right|, \left|B_{312}\right|,\left|B_{23}\right|\}$ times and terminate.
		\end{itemize}
		
		Each application of $C_1$ decreases $\left|B_{13}\right|, \left|B_{213}\right|, \left|B_{32}\right|$ by 1, and increases $\left|B_{123}\right|, \left|B_{23}\right|, \left|B_{31}\right|$ by 1.  Each application of $C_2$ decreases $\left|B_{12}\right|, \left|B_{312}\right|,\left|B_{23}\right|$ by 1 and increases $\left|B_{123}\right|, \left|B_{32}\right|,\left|B_{21}\right|$ by 1.  Therefore, in total, each iteration of the loop decreases each of $\left|B_{13}\right|, \left|B_{213}\right|, \left|B_{12}\right|, \left|B_{312}\right|$ by 1 and the loop in the process ends, with either $C_1$ or $C_2$ non-existent. The final step takes care of eliminating the other cycle (note that applying the leftover cycle does not resurrect its counterpart cycle).
	\end{enumerate}
\end{proof}

\section{A Characterization of Gross-Substitutes}\label{sec:gs_characterization}

In their paper, Reijnierse et al. prove the following:

\begin{theorem}[\cite{Reijnierse2002}]\label{GS_iff_SM_RGP}
	A valuation $v$ is gross-substitutes if and only if the following two conditions hold:
	\begin{itemize}
		\item For every pair of different items $x,y$ and bundle $S\subseteq M\setminus \{x,y\}$,	we have
		\begin{equation}\label{SM}
		v\left(S\cup\left\{ x\right\} \right)+v\left(S\cup\left\{ y\right\} \right)\geq v\left(S\right)+v\left(S\cup\left\{ x,y\right\} \right)\tag{SM}
		\end{equation}
		
		\item For every triplet of different items $x,y,z$ and bundle $S\subseteq M\setminus \{x,y,z\}$, we have
		\begin{equation}\label{RGP}
		v\left(S\cup\left\{ x\right\} \right)+v\left(S\cup\left\{ y,z\right\} \right)\leq\max\left\{ v\left(S\cup\left\{ y\right\} \right)+v\left(S\cup\left\{ x,z\right\} \right),v\left(S\cup\left\{ z\right\} \right)+v\left(S\cup\left\{ x,y\right\} \right)\right\}\tag{RGP}
		\end{equation}
	\end{itemize}
\end{theorem}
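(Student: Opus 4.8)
I would prove the two directions separately. The forward direction (gross‑substitutes $\Rightarrow$ (SM) and (RGP)) is the routine one: from a violation of either inequality I build an explicit pair of price vectors $\mathbf{p}\le\mathbf{q}$ differing in a single coordinate that witnesses a failure of the gross‑substitutes condition. In both cases I price the items of the background bundle $S$ at $0$, price every item outside the few relevant items at $+\infty$ (so it is never demanded and effectively disappears), and choose the remaining two or three prices inside an open interval whose non‑emptiness is \emph{exactly} the assumed violated inequality.

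Concretely, for (SM): suppose $v(S\cup\{x\})+v(S\cup\{y\})<v(S)+v(S\cup\{x,y\})$, i.e. the marginal of $y$ over $S\cup\{x\}$ strictly exceeds the marginal of $y$ over $S$. Take $\mathbf{p}$ with $p_x$ small and $p_y\le v(S\cup\{x,y\})-v(S\cup\{x\})$, so that $S\cup\{x,y\}$ is demanded at $\mathbf{p}$; then raise $p_x$ to a large $q_x$ while leaving $p_y$ fixed. At $\mathbf{q}$ no demanded bundle contains $x$, and among $x$‑free bundles the best is $S$ itself precisely because $p_y>v(S\cup\{y\})-v(S)$. Thus $y$ has an unchanged price and lies in the $\mathbf{p}$‑demanded bundle $S\cup\{x,y\}$, yet $y$ is in no $\mathbf{q}$‑demanded bundle — contradicting gross substitutes. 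The admissible interval $(\,v(S\cup\{y\})-v(S),\ v(S\cup\{x,y\})-v(S\cup\{x\})\,]$ for $p_y$ is non‑empty iff (SM) fails. The (RGP) case is analogous with three relevant items $x,y,z$: pick prices so that $S\cup\{y,z\}$ is demanded at $\mathbf{p}$, raise $p_y$, and show the best $\mathbf{q}$‑bundle is $S\cup\{x\}$, which omits $z$ although $p_z$ did not change; the interval needed for $p_z$ is non‑empty precisely because $v(S\cup\{x\})+v(S\cup\{y,z\})>v(S\cup\{y\})+v(S\cup\{x,z\})$, and the second RGP inequality is what lets one fit $p_x$.

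For the converse, (SM)$\,+\,$(RGP)$\Rightarrow$ gross substitutes, the plan is to establish the \emph{single‑improvement property} and then invoke the equivalence between single improvement and gross substitutes for monotone valuations \cite{GS99}. So, given prices $\mathbf{p}$ and $A\notin D_{\mathbf{p}}(v)$, I must produce $B$ with $|A\setminus B|\le 1$, $|B\setminus A|\le 1$ and $u(B,\mathbf{p})>u(A,\mathbf{p})$. Fix a demanded bundle $A^{\ast}$ minimizing $|A\triangle A^{\ast}|$, assume for contradiction that no single move from $A$ strictly improves $u(\cdot,\mathbf{p})$, and set $P=A\setminus A^{\ast}$, $Q=A^{\ast}\setminus A$. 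If $Q=\emptyset$ (resp. $P=\emptyset$), submodularity — which is exactly (SM), hence gives decreasing marginals of addition (resp. increasing marginals of removal) — forces $u(A^{\ast},\mathbf{p})\le u(A,\mathbf{p})$, a contradiction. If $|P\cup Q|\le 2$ the configuration is a swap (immediate), an add‑two, or a drop‑two, and the latter two contradict ``no improving move'' via one application of (SM). If $|P\cup Q|=3$ with $P,Q$ both non‑empty, one application of (RGP) with a suitable background (roughly $A$ minus a removed item when $|Q|=2$, and $A$ minus all of $P$ when $|P|=2$) bounds $v(A^{\ast})$ by a maximum of two values, each of which ``no improving move'' bounds from above, again yielding $u(A^{\ast},\mathbf{p})\le u(A,\mathbf{p})$.

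\textbf{Main obstacle.} The crux is the case $|A\triangle A^{\ast}|\ge 4$ with $P,Q$ both non‑empty: a single use of (SM)/(RGP) no longer reaches $A^{\ast}$, so one must apply (RGP) with a carefully chosen background to manufacture a bundle $A'$ strictly closer to $A^{\ast}$ with $u(A',\mathbf{p})\ge u(A,\mathbf{p})$, and then close by induction on $|A\triangle A^{\ast}|$ together with the minimality of $A^{\ast}$. The delicate part — essentially all the real content of Reijnierse et al.'s argument — is organizing the induction so that one obtains a genuine contradiction (an actual improving \emph{single} move from $A$, or a violation of the minimality of $|A\triangle A^{\ast}|$) rather than merely a chain of equal‑utility moves eventually reaching $A^{\ast}$. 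An alternative presentation sidesteps single improvement by verifying directly, again by an induction driven by (SM) and (RGP), the M$^{\natural}$‑exchange axiom and quoting the equivalence between M$^{\natural}$‑concavity and gross substitutes; the local‑to‑global induction is the same difficulty either way.
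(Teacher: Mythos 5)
The paper does not prove this theorem at all---it is stated with a citation to Reijnierse et al.\ and used as a black box; the appendix only adds Lemma~\ref{SM_RGP_iff_NP_SM_NP_RGP} on top of it. So there is no ``paper proof'' to match against, and the question is simply whether your plan would constitute a proof.

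Your forward direction is sound and is the standard construction: from a violation of \eqref{SM} you manufacture $\mathbf{p}\le\mathbf{q}$ differing only at $x$, and the open interval you describe for $p_y$, namely $\bigl(v(S\cup\{y\})-v(S),\,v(S\cup\{x,y\})-v(S\cup\{x\})\bigr]$, is nonempty exactly when \eqref{SM} fails; monotonicity guarantees both endpoints are nonnegative so a positive $p_x$ also fits. The \eqref{RGP} case is analogous once \eqref{SM} is already in hand, and your use of each of the two max-violated inequalities to fit $p_z$ and $p_x$ respectively is the right bookkeeping, though you should actually write the four inequalities that pin $S\cup\{x\}$ as the unique $y$-free optimizer at $\mathbf{q}$ and check their joint satisfiability.

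The converse is where the proposal stops short. You reduce to establishing the single-improvement property and then handle $|A\triangle A^{\ast}|\le 3$ by one application of \eqref{SM} or \eqref{RGP}; that part is fine. But for $|A\triangle A^{\ast}|\ge 4$ you only name the difficulty (``organizing the induction so that one obtains a genuine contradiction rather than a chain of equal-utility moves'') without resolving it---and, as you yourself note, this is essentially all of the content of the theorem. The concrete issue is that one application of \eqref{RGP} with a chosen background gives an inequality relating $v$-values of bundles at distance $\le 3$ from $A$, but it does not by itself produce a \emph{demanded} bundle closer to $A$ (needed to contradict the minimality of $A^{\ast}$) nor a \emph{strictly} improving single move from $A$ (needed to contradict the ``no improving move'' assumption); you must show one of these two things actually materializes, and the ties that can arise are precisely what make this delicate. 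Until that inductive step is carried out---or the alternative route you mention via the $M^{\natural}$-exchange axiom is executed---the converse is a plan, not a proof. There is also a second layer of indirection worth flagging: you invoke the equivalence of single improvement with gross substitutes for monotone valuations; that equivalence is itself a nontrivial theorem, so citing it is legitimate but does shift rather than eliminate work.
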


The first condition is the well-known submodularity condition.  The conditions \eqref{SM} and \eqref{RGP} have analogous ``price'' counterparts:

\begin{lemma}[\cite{Reijnierse2002}]\label{SM_RGP_iff_P_SM_P_RGP}
	A valuation $v$ satisfies \eqref{SM} and \eqref{RGP} if and only if it satisfies the following two conditions:
	\begin{itemize}
		\item There are no vector $\mathbf{p} \in \mathbb{R}^{\left|M\right|}$ (possibly with negative entries), two different items $x,y$ and a bundle $S\subseteq M\setminus \{x,y\}$ for which
		\begin{equation}\label{P-SM}
		D_{\mathbf{p}}(v) = \{S,S\cup\{x,y\}\}\tag{P-SM}
		\end{equation}
		
		\item There are no vector $\mathbf{p} \in \mathbb{R}^{\left|M\right|}$ (possibly with negative entries), three different items $x,y,z$	and a bundle $S\subseteq M\setminus \{x,y,z\}$ for which
		\begin{equation}\label{P-RGP}	
		D_{\mathbf{p}}(v) = \{S\cup\{x\},S\cup\{y,z\}\}\tag{P-RGP}
		\end{equation}
	\end{itemize}
\end{lemma}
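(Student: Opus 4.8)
The plan is to prove the two implications by handling the pair \eqref{SM}/\eqref{P-SM} in parallel with the pair \eqref{RGP}/\eqref{P-RGP}, using the contrapositive in both directions. The direction ``\eqref{SM} and \eqref{RGP} $\Rightarrow$ \eqref{P-SM} and \eqref{P-RGP}'' is the routine half. If $D_{\mathbf p}(v)=\{S,S\cup\{x,y\}\}$, then $S$ and $S\cup\{x,y\}$ have equal and maximal utility, which gives $p_x+p_y=v(S\cup\{x,y\})-v(S)$, while $S\cup\{x\},S\cup\{y\}\notin D_{\mathbf p}(v)$ give $v(S\cup\{x\})-v(S)<p_x$ and $v(S\cup\{y\})-v(S)<p_y$; adding the last two and substituting contradicts \eqref{SM}. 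Symmetrically, if $D_{\mathbf p}(v)=\{S\cup\{x\},S\cup\{y,z\}\}$, then comparing the total utility $u_1(S\cup\{x\},\mathbf p)+u_1(S\cup\{y,z\},\mathbf p)$ with $u_1(S\cup\{y\},\mathbf p)+u_1(S\cup\{x,z\},\mathbf p)$ and with $u_1(S\cup\{z\},\mathbf p)+u_1(S\cup\{x,y\},\mathbf p)$ --- the price terms cancel in each comparison --- turns the maximality of the first pair into a violation of \eqref{RGP}.

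For the converse I would argue contrapositively: from a violation of \eqref{SM} or \eqref{RGP} I build prices realizing a forbidden demand set. Both constructions begin identically. Fix the violating configuration, with ``base set'' $S$ and the two (resp.\ three) distinguished items; price every item outside $S$ and the distinguished items at a value so large that no demanded bundle contains it, and price each $z\in S$ strictly below $\min_{T}\big(v(T\cup\{z\})-v(T)\big)$, so that every demanded bundle contains all of $S$. Since the relevant quantities are finitely many fixed reals, such choices exist and are mutually consistent. This reduces matters to choosing the remaining $2$ (resp.\ $3$) prices so that, writing $f(T):=v(S\cup T)-v(S)$ and $g(T):=f(T)-\sum_{w\in T}p_w$, the function $g$ on $2^{\{x,y\}}$ (resp.\ $2^{\{x,y,z\}}$) is maximized exactly on the forbidden pair.

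If \eqref{SM} is violated, then $f(\{x\})+f(\{y\})<f(\{x,y\})$, and the constraints $p_x>f(\{x\})$, $p_y>f(\{y\})$, $p_x+p_y=f(\{x,y\})$ --- which is exactly what makes $g$ maximal precisely at $\emptyset$ and $\{x,y\}$ --- are simultaneously satisfiable precisely because of this strict inequality, yielding a violation of \eqref{P-SM}. If \eqref{RGP} is violated we may additionally assume $v$ (hence $f$) is submodular, since otherwise we are in the previous case. Here I would reformulate the violation in the symmetric form $\sigma_{xy}>\sigma_{yz}$ and $\sigma_{xz}>\sigma_{yz}$, where $\sigma_{uw}:=f(\{u\})+f(\{w\})-f(\{u,w\})\ge 0$, and set $\sigma:=f(\{x\})+f(\{y,z\})-f(\{x,y,z\})\ge 0$. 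Parametrizing $p_x:=f(\{x\})-t$, $p_y:=f(\{y\})-t+\eta$, $p_z:=f(\{z\})-t+\theta$, the requirement ``$g$ maximized exactly at $\{x\}$ and $\{y,z\}$'' unwinds to $0<t<\sigma$, $\eta+\theta=t-\sigma_{yz}$, $\eta>\max\{0,t-\sigma_{xy}\}$, $\theta>\max\{0,t-\sigma_{xz}\}$. Choosing $t$ slightly above $\sigma_{yz}$ kills the two ``$t-\sigma_{uw}$'' terms (since $\sigma_{yz}<\sigma_{xy},\sigma_{xz}$), so all that remains is $\sigma_{yz}<\sigma$; and submodularity gives $f(\{x,y,z\})-f(\{y,z\})\le f(\{x,y\})-f(\{y\})$, i.e.\ $\sigma\ge\sigma_{xy}>\sigma_{yz}$. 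Splitting $t-\sigma_{yz}>0$ into positive $\eta,\theta$ completes the construction and produces a violation of \eqref{P-RGP}.

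The main obstacle is the \eqref{RGP} case of the converse: a priori the system pinning $D_{\mathbf p}(v)$ down to exactly $\{S\cup\{x\},S\cup\{y,z\}\}$ looks over-determined (one equality and six strict inequalities on three prices), and the resolution rests on two points --- first peeling off the non-submodular case so that submodularity is available, and second the reformulation of \eqref{RGP} through the quantities $\sigma_{uw}$, which simultaneously exposes why the constraints are feasible and pinpoints the single place submodularity is used, namely $\sigma\ge\sigma_{xy}$. Everything else is bookkeeping: checking that all the strict inequalities above are mutually consistent, and that the ``huge'' and ``far below'' prices on the non-distinguished items are compatible with the few constraints on the distinguished ones.
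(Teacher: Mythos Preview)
The paper does not supply its own proof of this lemma: it is quoted from Reijnierse et al.\ and used as a black box. What the paper \emph{does} prove is the non-negative-price variant (Lemma~\ref{SM_RGP_iff_NP_SM_NP_RGP}), whose argument is closely related to yours.

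Your proof is correct. A couple of brief comparisons with the paper's argument for the non-negative variant are worth recording. For the ``easy'' direction your utility-summing argument is identical to the paper's. For the converse, both you and the paper first send the prices of items outside $S$ and the distinguished items to $+\infty$; the one structural difference is in how items of $S$ are treated. The paper prices them at $0$, which is enough for the \eqref{NP-SM}/\eqref{NP-RGP} conclusion (subsets $T\subseteq S$ are tolerated in the demand), whereas you drive $p_z$ below the minimal marginal of $z$, which is exactly what is needed to pin the demand down to $S$ itself and obtain \eqref{P-SM}/\eqref{P-RGP}. This is the right move and is where the allowance of negative prices in the statement is used.

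Your treatment of the \eqref{RGP} case via the quantities $\sigma_{uw}$ is a genuinely cleaner packaging than the paper's explicit $\varepsilon$-perturbation of $p_y,p_z$ followed by an ad hoc choice of $p_x$. In particular, your observation that the single submodularity input is $\sigma\ge\sigma_{xy}$ (equivalently, the marginal of $x$ over $\{y,z\}$ is at most its marginal over $\{y\}$) isolates precisely why one may assume \eqref{SM} before handling the \eqref{RGP} violation. The paper uses the same idea but spread across several displayed inequalities; your parametrization $(t,\eta,\theta)$ makes the feasibility of the seven constraints transparent.
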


The combination of Theorem \ref{GS_iff_SM_RGP} and Lemma \ref{SM_RGP_iff_P_SM_P_RGP} (essentially Lemma 4.2 in \cite{Leme17}) implies that if $v$ is not gross-substitutes, then either \eqref{P-SM} or \eqref{P-RGP} are violated.  If, for example, \eqref{P-SM} is violated, then there are a vector $\mathbf{p}$, two different items $x,y$ and a bundle $S\subseteq M\setminus \{x,y\}$ such that 
\[
D_{\mathbf{p}}(v) = \{S,S\cup\{x,y\}\}.
\]

We can then decrease $p_x$ and $p_y$ by a small enough amount so that $S\cup\{x,y\}$ becomes the unique utility-maximizing bundle, and $S$ becomes the only 2nd best bundle.  It would appear that taking the vector $\mathbf{p}$ together with $A := S$ and $B = S\cup\{x,y\}$ proves Theorem \ref{B_minus_A_2}.  However, the prices obtained from Lemma \ref{SM_RGP_iff_P_SM_P_RGP} can be negative (and indeed are in the known construction) and therefore are unsuitable.  The same problem arises when assuming that \eqref{P-RGP} is violated.

The following is a different version of Lemma \ref{SM_RGP_iff_P_SM_P_RGP} with non-negative prices.

\begin{lemma}\label{SM_RGP_iff_NP_SM_NP_RGP}
	A valuation $v$ satisfies \eqref{SM} and \eqref{RGP} if and only if it satisfies the following two conditions:
	\begin{itemize}
		\item There are no nonnegative price vector $\mathbf{p}$, two different items $x,y$
		and a bundle $S\subseteq M\setminus \{ x,y\}$ for which	$p_{x},p_{y}>0$ and
		\begin{equation}\label{NP-SM}
		\{S,S\cup \{ x,y\}\} \subseteq D_{\mathbf{p}}(v) \subseteq \{ T \mid T\subseteq S\}\cup \{T\cup \{ x,y\} \mid T\subseteq S\}\tag{NP-SM}
		\end{equation}
		
		\item There are no nonnegative price vector $\mathbf{p}$, three different items $x,y,z$ and a bundle $S \subseteq M\setminus \{x,y,z\}$ for which $p_x,p_y,p_z > 0$ and 
		\begin{equation}\label{NP-RGP}
		\{S\cup \{x\}, S\cup \{y,z\}\} \subseteq D_{\mathbf{p}}(v) \subseteq \{T \cup \{x\} \mid T \subseteq S\} \cup \{T \cup \{y,z\} \mid T\subseteq S\}\tag{NP-RGP}
		\end{equation}
	\end{itemize}
\end{lemma}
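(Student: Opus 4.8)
I would prove both implications by contraposition, so that each becomes a concrete statement about demand correspondences. In one direction I must show that a violation of \eqref{NP-SM} or \eqref{NP-RGP} forces a violation of \eqref{SM} or \eqref{RGP}; in the other that a violation of \eqref{SM} or \eqref{RGP} forces a violation of \eqref{NP-SM} or \eqref{NP-RGP}. The first is a short local argument; the second is where I would build nonnegative price vectors by hand.

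\textbf{From \eqref{NP-SM}/\eqref{NP-RGP} violations to \eqref{SM}/\eqref{RGP} violations.} Suppose a nonnegative $\mathbf{p}$ witnesses a violation of \eqref{NP-SM} at $(S,x,y)$. Since $S$ and $S\cup\{x,y\}$ are both in $D_{\mathbf{p}}(v)$, equating their utilities gives $v(S\cup\{x,y\})-v(S)=p_x+p_y$. Since $S\cup\{x\}$ and $S\cup\{y\}$ lie outside the prescribed family, hence outside $D_{\mathbf{p}}(v)$, their utilities are \emph{strictly} below that of $S$, which gives $v(S\cup\{x\})-v(S)<p_x$ and $v(S\cup\{y\})-v(S)<p_y$; adding and substituting yields $v(S\cup\{x\})+v(S\cup\{y\})<v(S)+v(S\cup\{x,y\})$, a violation of \eqref{SM}. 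The \eqref{NP-RGP} case is the same in spirit: from $u(S\cup\{x\})=u(S\cup\{y,z\})$ one gets $v(S\cup\{y,z\})-v(S\cup\{x\})=p_y+p_z-p_x$, and from the fact that $S\cup\{y\},S\cup\{z\},S\cup\{x,y\},S\cup\{x,z\}$ are not demanded one extracts four strict inequalities which, combined in pairs, give $v(S\cup\{y\})+v(S\cup\{x,z\})<v(S\cup\{x\})+v(S\cup\{y,z\})$ and $v(S\cup\{z\})+v(S\cup\{x,y\})<v(S\cup\{x\})+v(S\cup\{y,z\})$, i.e. a violation of \eqref{RGP}. Note that positivity of $p_x,p_y,p_z$ is not even needed here; only the containment in the prescribed family is used.

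\textbf{From \eqref{SM}/\eqref{RGP} violations to \eqref{NP-SM}/\eqref{NP-RGP} violations.} I would split into two cases. If \eqref{SM} is violated at $(S,x,y)$, set $p_j:=0$ for $j\in S$, set $p_j$ equal to a large constant for $j\notin S\cup\{x,y\}$ (so no bundle meeting an ``outside'' item is demanded), and pick $p_x,p_y>0$ with $p_x+p_y=v(S\cup\{x,y\})-v(S)$, $p_x>v(S\cup\{x\})-v(S)$, $p_y>v(S\cup\{y\})-v(S)$; such $p_x,p_y$ exist precisely because the \eqref{SM} violation says the two lower bounds sum to strictly less than $v(S\cup\{x,y\})-v(S)$, and they are positive by monotonicity. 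A case check over bundles $T'\cup R$ with $T'\subseteq S$ and $R\subseteq\{x,y\}$, using only monotonicity and the choice of $p_x,p_y$, shows that $S$ and $S\cup\{x,y\}$ are utility maximizers and every bundle outside $\{T:T\subseteq S\}\cup\{T\cup\{x,y\}:T\subseteq S\}$ is strictly dominated; this is exactly a violation of \eqref{NP-SM}. If instead \eqref{SM} holds but \eqref{RGP} is violated at $(S,x,y,z)$, then $v$ is submodular (the classical equivalence with the pairwise condition \eqref{SM}), and I would run the analogous construction: $p_j:=0$ on $S$, large on $M\setminus(S\cup\{x,y,z\})$, and $p_x,p_y,p_z>0$ chosen so that $S\cup\{x\}$ and $S\cup\{y,z\}$ are tied ($p_y+p_z-p_x=v(S\cup\{y,z\})-v(S\cup\{x\})$) while each of the six ``bad'' bundles $S\cup R$ with $R\in\{\emptyset,\{y\},\{z\},\{x,y\},\{x,z\},\{x,y,z\}\}$ is strictly dominated by $S\cup\{x\}$.

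\textbf{Main obstacle.} The hard part is the feasibility of this last system. After eliminating $p_y+p_z$ via the tie equation, it reduces to placing $p_x$ in an open interval whose endpoints are expressions in the eight values $v(S\cup R)$, together with splitting the forced sum $p_y+p_z$ subject to individual lower bounds. Showing the interval is nonempty and the split is possible is pure bookkeeping, but it must combine the \emph{two} \eqref{RGP}-violation inequalities, in the right places, with submodularity: for instance $v(S\cup\{x,y\})+v(S\cup\{x,z\})\ge v(S\cup\{x,y,z\})+v(S\cup\{x\})$ is what keeps the lower endpoint below the upper one, while $v(S\cup\{y\})+v(S\cup\{z\})\ge v(S\cup\{y,z\})+v(S)$ together with monotonicity disposes of several of the splitting constraints. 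I expect this verification, rather than any conceptual difficulty, to be the bulk of the proof. (An alternative to the explicit constructions is to invoke Lemma \ref{SM_RGP_iff_P_SM_P_RGP} for a possibly-negative witness and renormalize it, observing that every negatively priced item lies in the common core of the two witness bundles and can be ``contracted away''; but turning this into a clean nonnegative witness for $v$ itself seems to require at least as much care.)
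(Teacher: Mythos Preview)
Your proposal is correct and follows essentially the same route as the paper: contraposition in both directions, with the ``easy'' direction extracting strict inequalities from the demand structure (prices cancel), and the ``hard'' direction building a nonnegative witness by zeroing prices on $S$, setting them to $\infty$ outside $S\cup\{x,y\}$ (resp.\ $S\cup\{x,y,z\}$), and choosing the remaining two (resp.\ three) prices to tie the two target bundles while strictly dominating the neighbors---using submodularity in the \eqref{RGP} case. The only cosmetic difference is that the paper commits to explicit price values (e.g.\ $p_x=v(x\mid S\cup\{y\})-\epsilon$, $p_y=v(y\mid S\cup\{z\})-\epsilon$, etc.) and verifies the inequalities directly, whereas you describe the feasibility region abstractly and defer the bookkeeping; the paper's concrete choices land inside exactly the region you describe.
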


The combination of Theorem \ref{GS_iff_SM_RGP} and Lemma \ref{SM_RGP_iff_NP_SM_NP_RGP} imply:
\begin{theorem} \label{GS_iff_NP_SM_NP_RGP}
	A valuation $v$ is gross-substitutes iff it satisfies \eqref{NP-SM} and \eqref{NP-RGP}.
\end{theorem}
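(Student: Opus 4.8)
The plan is to read Theorem~\ref{GS_iff_NP_SM_NP_RGP} off as the composition of two biconditionals already in hand: Theorem~\ref{GS_iff_SM_RGP} says $v$ is gross-substitutes iff it satisfies \eqref{SM} and \eqref{RGP}, and Lemma~\ref{SM_RGP_iff_NP_SM_NP_RGP} says the latter holds iff $v$ satisfies \eqref{NP-SM} and \eqref{NP-RGP}. So the content is entirely in Lemma~\ref{SM_RGP_iff_NP_SM_NP_RGP}, which I would prove by two contrapositives.

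The direction ``a failure of \eqref{NP-SM} or \eqref{NP-RGP} yields a failure of \eqref{SM} or \eqref{RGP}'' is a short unpacking. If nonnegative prices $\mathbf{p}$ with $p_x,p_y>0$ witness a failure of \eqref{NP-SM} at $(x,y,S)$, then optimality of $S$ and $S\cup\{x,y\}$ gives $v(S\cup\{x,y\})-v(S)=p_x+p_y$, while $S\cup\{x\}$ and $S\cup\{y\}$ lie outside the permitted family and are therefore strictly suboptimal, giving $v(S\cup\{x\})-v(S)<p_x$ and $v(S\cup\{y\})-v(S)<p_y$; adding the two strict inequalities to the equality falsifies \eqref{SM} at $(x,y,S)$. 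The \eqref{NP-RGP} case is the same computation: optimality of $S\cup\{x\}$ and $S\cup\{y,z\}$ pins down $v(S\cup\{x\})+v(S\cup\{y,z\})$, and strict suboptimality of the impermissible bundles $S\cup\{y\},S\cup\{x,z\}$ and of $S\cup\{z\},S\cup\{x,y\}$ gives the two strict inequalities defining a failure of \eqref{RGP}.

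For the converse I would invoke the Reijnierse price characterization (Lemma~\ref{SM_RGP_iff_P_SM_P_RGP}): if $v$ fails \eqref{SM} or \eqref{RGP}, it admits a price vector $\mathbf{p}$ (possibly with negative entries) with $D_{\mathbf{p}}(v)$ equal to $\{S,S\cup\{x,y\}\}$ or to $\{S\cup\{x\},S\cup\{y,z\}\}$, and I claim these can be normalized to nonnegative prices witnessing a failure of \eqref{NP-SM} or \eqref{NP-RGP} respectively. First, the ``special'' prices are already strictly positive by monotonicity: in the first case $S\cup\{x\}\notin D_{\mathbf{p}}(v)$ while $S\in D_{\mathbf{p}}(v)$ forces $p_x>v(S\cup\{x\})-v(S)\ge 0$, and symmetrically $p_y>0$; in the second case comparing the undemanded $S\cup\{x,y\}$, $S\cup\{x,z\}$, $S\cup\{x,y,z\}$ against the demanded $S\cup\{x\}$, $S\cup\{y,z\}$ forces $p_y,p_z,p_x>0$. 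Then I set $p_i:=0$ for every $i\in S$ and raise every price outside $S\cup\{x,y\}$ (resp.\ $S\cup\{x,y,z\}$) above $v(M)$. Inflating the outside prices only removes bundles from the demand. With the prices on $S$ zeroed, the utility of any bundle $S'\cup T$ with $S'\subseteq S$ and $T\subseteq\{x,y\}$ (resp.\ $T\subseteq\{x,y,z\}$) equals $v(S'\cup T)$ minus the unchanged price of $T$, so by monotonicity it is at most the utility of $S\cup T$; and among the bundles $\{S\cup T\}_T$ the zeroing shifts all utilities by the same constant, so their set of maximizers is unchanged. Hence the new nonnegative demand contains $\{S,S\cup\{x,y\}\}$ (resp.\ $\{S\cup\{x\},S\cup\{y,z\}\}$) and is contained in the permitted family---exactly a failure of \eqref{NP-SM} (resp.\ \eqref{NP-RGP}).

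The point that needs care---and, I suspect, the reason the paper writes \eqref{NP-SM}/\eqref{NP-RGP} with a containment rather than with the exact equality $D_{\mathbf{p}}(v)=\{S,S\cup\{x,y\}\}$ of \eqref{P-SM}/\eqref{P-RGP}---is that zeroing the prices on $S$ typically enlarges the demand (for instance every $S'\subsetneq S$ with $v(S')=v(S)$ now joins it), so two-bundle demand cannot be preserved; the normalized prices can only be expected to satisfy the sandwich. The main obstacle is verifying that the enlarged demand still avoids every impermissible bundle shape, e.g.\ $S\cup\{y\}$ or $S'\cup\{x,y\}$ with $S'\subsetneq S$ in the \eqref{RGP} case; this follows from monotonicity together with the fact that these bundles were already strictly below the optimal utility under the Reijnierse prices, while the surgery only shifts utilities by constants on each affected family.
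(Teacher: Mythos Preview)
Your derivation of Theorem~\ref{GS_iff_NP_SM_NP_RGP} itself---as the composition of Theorem~\ref{GS_iff_SM_RGP} with Lemma~\ref{SM_RGP_iff_NP_SM_NP_RGP}---is exactly what the paper does (one sentence). The substantive content is in Lemma~\ref{SM_RGP_iff_NP_SM_NP_RGP}, and here your route for the harder direction genuinely differs from the paper's.

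The paper does \emph{not} pass through Lemma~\ref{SM_RGP_iff_P_SM_P_RGP}. Given a violation of \eqref{SM} (resp.\ \eqref{RGP}), it builds the nonnegative witness prices by hand: set $p_d=0$ on $S$, $p_d=\infty$ off $S\cup\{x,y\}$ (resp.\ $S\cup\{x,y,z\}$), and choose $p_x,p_y$ (resp.\ $p_x,p_y,p_z$) as explicit marginal values minus a small $\varepsilon$; positivity of these prices and the required demand sandwich are then verified directly from the violated inequality (and, in the \eqref{RGP} case, from the assumed \eqref{SM}). You instead take the Reijnierse prices with exact two-element demand as a black box and normalize: zero out $S$, inflate the complement, and observe that monotonicity plus the constant-shift structure keeps the permissible shapes optimal while still strictly excluding the impermissible ones. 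Your observation that $p_x,p_y$ (resp.\ $p_x,p_y,p_z$) are already strictly positive under the Reijnierse prices is correct and is what makes the normalization costless.

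Both arguments are sound. Your approach is cleaner and more uniform---one normalization lemma handles both cases---and it correctly explains why the \eqref{NP-SM}/\eqref{NP-RGP} conditions must be stated as sandwiches rather than equalities. The paper's direct construction is more self-contained (it does not rely on the existence part of Lemma~\ref{SM_RGP_iff_P_SM_P_RGP}) and yields the slightly finer decomposition \eqref{SM}$\Leftrightarrow$\eqref{NP-SM} and, under \eqref{SM}, \eqref{RGP}$\Leftrightarrow$\eqref{NP-RGP}, whereas your argument only gives the equivalence of the conjunctions.
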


We now show how Theorem \ref{GS_iff_NP_SM_NP_RGP} implies Theorem \ref{B_minus_A_2}.  The proof of Lemma \ref{SM_RGP_iff_NP_SM_NP_RGP}, which to a large extent is adapted from \cite{Leme17} and \cite{Roughgarden14}, is given right after. 

\begin{proof}[Proof of Theorem \ref{B_minus_A_2}]
	Let $v$ be valuation that is not gross substitutes.  By Theorem \ref{GS_iff_NP_SM_NP_RGP}, there is a nonnegative price vector $\mathbf{p}$ and a bundle $S$ such that one of the following holds:
	\begin{enumerate}
		\item There are items $x,y\notin S$ for which $p_{x},p_{y}>0$ and
		\[
		\{S,S\cup \{ x,y\}\} \subseteq D_{\mathbf{p}}(v) \subseteq \{ T \mid T\subseteq S\}\cup \{T\cup \{ x,y\} \mid T\subseteq S\}
		\]
		\item There are items $x,y,z\notin S$ for which $p_{x},p_{y},p_{z}>0$ and 
		\[
		\{S\cup \{x\}, S\cup \{y,z\}\} \subseteq D_{\mathbf{p}}(v) \subseteq \{T \cup \{x\} \mid T \subseteq S\} \cup \{T \cup \{y,z\} \mid T\subseteq S\}
		\]
	\end{enumerate}
	Assume 1, and decrease the price of $x$ and $y$ by a small enough $\epsilon>0$ so that the new prices are still nonnegative and $S$ derives the 2nd highest utility under the new prices.  Observe that all utility maximizing bundles under the updated prices contain $x,y$, and $S\cup \{x,y\}$ is such a bundle.  Thus, if we choose $A := S$ and  $B := S\cup\{x,y\}$, then $A,B$ satisfy $\left|B\setminus A\right|=2$, $\left|A\setminus B\right| = 0$, and every other utility-maximizing bundle $C$ satisfies $\left|C\triangle A\right| \geq \left|B\triangle A\right|$, as required.  Likewise, if bullet 2 holds, then we can take $A:=S\cup\{x\}$,  $B=S\cup \{ y,z\}$ together with the price vector $\mathbf{p}$ after $p_y$ and $p_z$ have been decreased by a small enough amount.
\end{proof}

\begin{proof}[Proof of Lemma \ref{SM_RGP_iff_NP_SM_NP_RGP}]
	We first show that \eqref{SM} is equivalent to \eqref{NP-SM}, and then we show that under the assumption that \eqref{SM} holds, \eqref{RGP} is equivalent to \eqref{NP-RGP}.  The combination implies the lemma.
	
	Assume that \eqref{NP-SM} does not hold, i.e., there are corresponding non-negative price vector $\mathbf{p}$,	items $x,y$ and a bundle $S$. Then,
	\begin{align*}
	S,S\cup \{x,y\}  			& \in D_{\mathbf{p}}(v)	\\
	S\cup \{x\} ,S\cup \{y\}	& \notin D_{\mathbf{p}}(v)
	\end{align*}
	implying
	\[
	v(S)+v(S\cup \{x,y\}) - 2\cdot \sum_{d\in S}p_{d}-p_{x}-p_{y} > v(S\cup \{x\} )+v(S\cup \{ y\}) -2\cdot \sum_{d\in S}p_{d}-p_{x}-p_{y}.
	\]
	We conclude that \eqref{SM} is violated.  For the converse direction, assume \eqref{SM} is violated; i.e.,
	\[
	v(S) + v(S\cup \{x,y\}) = v(S\cup \{x\}) + v(S\cup \{y\})+\delta
	\]
	for some $\delta>0$. We define the prices $\mathbf{p}$ as follows.
	Set $p_{d}=\infty$ for any $d\notin S\cup \{x,y\}$ to
	guarantee that no item outside of $S\cup \{x,y\}$ is
	demanded. Set $p_{d}=0$ for any $d\in S$. Finally, let $\epsilon := \delta/2$ and set
	\begin{align*}
	p_{x} & :=v(x | S\cup \{y\}) - \epsilon		\\
	p_{y} & :=v(y | S\cup \{x\}) - \epsilon
	\end{align*}
	First we show that $p_{x},p_{y}$ are positive:
	\begin{align*}
	p_{x} 	& := v(S\cup \{x,y\}) - v(S\cup \{y\}) - \frac{1}{2}(v(S)+v(S\cup \{x,y\}) - v(S\cup\{ x\}) - v(S\cup \{y\}))	\\
	& =\frac{1}{2}( v(S\cup\{ x,y\}) - v(S\cup\{ y\}) + v(S\cup\{ x\}) -v(S))	\\
	& \geq \frac{1}{2}( v(S\cup\{ x,y\}) - v(S\cup\{ y\})) \\
	& \geq \frac{1}{2}\delta \\
	& >0,
	\end{align*}
	where the first and second inequalities hold since $v(S\cup\{ x\}) -v(S) \geq 0$ ($v$ is monotone).  $p_{y}>0$ can be
	shown similarly. The definitions of $p_{x}$ and $p_{y}$ directly imply
	\[
	u(S\cup\{ x,y\}) > u(S\cup\{ x\} ) , u(S\cup\{ y\})
	\]
	and it is also immediate that
	\[
	p_{x}+p_{y} = v(S\cup\{ x,y\} )-v(S)
	\]
	implying
	\[
	u(S\cup\{ x,y\} )=u(S).
	\]
	To summarize, 
	\begin{equation}
	u(S),u(S\cup\{ x,y\} )>u(S\cup\{ x\} ),u(S\cup\{ y\} )\label{eq:cond}
	\end{equation}
	Finally, for any $T\subseteq S$ and $E\subseteq\{ x,y\} $
	we have (recall that $p_{d}=0$ for all $d\in S$ and that $v$ is monotone):
	\begin{align*}
	u(T\cup E) & =v(T\cup E)-\sum_{d\in E}p_{d}\\
	& \leq v(S\cup E)-\sum_{d\in E}p_{d}\\
	& =u(S\cup E)
	\end{align*}
	and this together with the inequalities \eqref{eq:cond} imply that $S,S\cup\{ x,y\} $
	are demanded and any other demanded bundle must be of the form $T$
	or $T\cup\{ x,y\} $ for $T\subseteq S$. Thus \eqref{NP-SM} is violated, as required.
	
	We proceed to prove that under the assumption that $v$ satisfies \eqref{SM}, Conditions \eqref{RGP} and \eqref{NP-RGP} are equivalent.
	
	Assume that \eqref{NP-RGP} does not hold, i.e.,there are corresponding non-negative price vector $\mathbf{p}$, items $x,y,z$ and a bundle $S$. Then
	
	\begin{align*}
	S\cup\{ x\} ,S\cup\{ y,z\}  & \in D_{\mathbf{p}}(v)\\
	S\cup\{ x,z\} ,S\cup\{ x,y\} ,S\cup\{ y\} ,S\cup\{ z\}  & \notin D_{\mathbf{p}}(v)
	\end{align*}
	In particular we have
	\begin{align*}
	v(S\cup\{ x\} )-p_{x}+v(S\cup\{ y,z\} )-p_{y}-p_{z} & >v(S\cup\{ y\} )-p_{y}+v(S\cup\{ x,z\} )-p_{x}-p_{z}\\
	v(S\cup\{ x\} )-p_{x}+v(S\cup\{ y,z\} )-p_{y}-p_{z} & >v(S\cup\{ z\} )-p_{z}+v(S\cup\{ x,y\} )-p_{x}-p_{y}
	\end{align*}
	
	implying
	
	\begin{align}
	v(S\cup\{ x\} )+v(S\cup\{ y,z\} ) & >v(S\cup\{ y\} )+v(S\cup\{ x,z\} )\label{rgp_violation1} \\
	v(S\cup\{ x\} )+v(S\cup\{ y,z\} ) & >v(S\cup\{ z\} )+v(S\cup\{ x,y\} )\label{rgp_violation2}
	\end{align}				
	and thus \eqref{RGP} is violated as required.
	
	For the converse direction, assume that (\ref{rgp_violation1}) and (\ref{rgp_violation2}) hold.  We define the prices $\mathbf{p}$ as follows. Set $p_{d}=\infty$
	for any $d\notin S\cup\{ x,y,z\} $ to guarantee that no
	item outside of $S\cup\{ x,y,z\} $ is demanded. By rearranging
	the assumed inequalities we get
	\begin{align*}
	v(z\mid S\cup\{ y\} ) & >v(z\mid S\cup\{ x\} ) \geq 0\\
	v(y\mid S\cup\{ z\} ) & >v(y\mid S\cup\{ x\} ) \geq 0	
	\end{align*}
	Therefore, by setting $p_{y}=v(y\mid S\cup\{ z\} )-\epsilon$
	and $p_{z}=v(z\mid S\cup\{ y\} )-\epsilon$
	for a sufficiently small $\epsilon$, $p_{y}$ and $p_{z}$
	are positive and 
	\begin{equation}
	u(S\cup\{ y,z\} ) > u(S\cup\{ y\} ),u\{ S\cup\{ z\} \} > u(S), \label{eq_util1}
	\end{equation}
	
	where the first inequality is a direct implication of the definition
	of $p_{y}$ and $p_{z}$ and the second inequality is implied by the first since $v$ is submodular.  Next we define $p_{x}$ so that the utility from $S\cup\{ x\} $
	equals that of $S\cup\{ y,z\} $. Specifically:
	\begin{align*}
	p_{x} & =v(S\cup\{ x\} )-v(S\cup\{ y,z\} )+p_{y}+p_{z}\\
	& =v(S\cup\{ x\} )-v(S\cup\{ y,z\} )+v(S\cup\{ y,z\} )-v(S\cup\{ z\} )-\epsilon+v(S\cup\{ y,z\} )-v(S\cup\{ y\} )-\epsilon\\
	& =v(S\cup\{ x\} )+v(S\cup\{ y,z\} ) - v(S\cup\{ z\} )-v(S\cup\{ y\} )-2\epsilon \\
	& >v(S\cup\{ y\} )+v(S\cup\{ x,z\} ) - v(S\cup\{ z\} )-v(S\cup\{ y\} )-2\epsilon \\
	& \geq v(S\cup \{x\}) - v(S\cup \{z\}) -2\epsilon \\
	& >0,
	\end{align*}
	where the first inequaity holds by (\ref{rgp_violation1}), the second holds by monotonicity and the third holds by (\ref{rgp_violation2}) and for a small enough $\epsilon$.  Finally, set $p_{d}=0$ for every
	$d\in S$. Observe that indeed all prices are nonnegative as required.
	Next we show that
	\begin{align}
	u(S\cup\{ x\} ) & >u(S\cup\{ x,y\} ) \label{eq_util_2}\\
	u(S\cup\{ x\} ) & >u(S\cup\{ x,z\} ) \label{eq_util_3}
	\end{align}
	Adding the two inequalites together and applying the submodularity of $v$ gives 
	\begin{align}
	u(S\cup\{ x\} )>u(S\cup\{ x,y,z\} ) \label{eq_util_4}
	\end{align}

	We show that (\ref{eq_util_2}) holds ((\ref{eq_util_3}) is analogous). This amounts to showing that the
	marginal contribution of $y$ to the bundle $S\cup\{ x\} $
	is negative:
	\begin{align*}
	v(y|S\cup\{ x\} )-p_{y} & =v(S\cup\{ x,y\} )-v(S\cup\{ x\} )-p_{y}\\
	& =v(S\cup\{ x,y\} )-v(S\cup\{ x\} )-(v(S\cup\{ y,z\} )-v(S\cup\{ z\} )-\epsilon)\\
	& =-(v(S\cup\{ y,z\} )+v(S\cup\{ x\} )-v(S\cup\{ x,y\} )-v(S\cup\{ z\} ))+\epsilon\\
	& <0
	\end{align*}
	where the inequality holds by (\ref{rgp_violation2}) for a sufficiently small $\epsilon>0$.  To summarize, the combination of (\ref{eq_util1}),(\ref{eq_util_2}), (\ref{eq_util_3}), (\ref{eq_util_4}) together with $u(S\cup\{x\}) = u(S\cup \{y,z\})$ establishes that each of
	\begin{align*}
	&u(S\cup\{ x\} ), \\
	&u(S\cup\{ y,z\} ) 
	\end{align*}
	is strictly greater than each of
	\begin{align*}
	&u(S),\\
	&u(S\cup\{ y\} ),\\
	&u(S\cup\{ z\} ),\\
	&u(S\cup\{ x,z\} ),\\
	&u(S\cup\{ x,y\} ),\\
	&u(S\cup\{ x,y,z\} ).
	\end{align*}
	Now, for any $T\subseteq S$ and any $E\subseteq\{ x,y,z\} $
	we have (recall that $p_{d}=0$ for all $d\in S$)
	\begin{align*}
	u(T\cup E) & =v(T\cup E)-\sum_{d\in E}p_{d}\\
	& \leq v(S\cup E)-\sum_{d\in E}p_{d}\\
	& =u(S\cup E)
	\end{align*}
	Therefore, demanded bundles can only be of the form $T\cup\{ x\} $
	or $T\cup\{ y,z\} $ for $T\subseteq S$, and $S\cup\{ x\} ,S\cup\{ y,z\} $
	are demanded, implying that \eqref{NP-RGP} is violated. This concludes the proof.
\end{proof}

\section{A Budget-Additive Market with Dynamic Pricing and without Walrasian Equilibrium}\label{appendix:example_dp_no_we}

Consider the example given by \cite{FL14}, with 4 players $c_i,d_i$ for $i \in \{1,2\}$, and 7 items $a_i,b_i,\alpha_i$ for $i \in \{1,2\}$ and $\beta$.  Each of the buyers is budget-additive with budget 2, meaning that the value for a bundle $S$ is $v(S) = \min\{2,\sum_{x \in S}v(\{x\})\}$.  For $i \in \{1,2\}$, $c_i$ has value 1 for $a_i,b_i,\alpha_i$ and value 0 for the rest of the items, and $d_i$ has value 2 for $\beta$, value 1 for $a_i,b_i$ and value 0 for the rest.  It can be easily verified that $OPT = 7$ in this market.  Furthermore, \cite{FL14} showed that the market does not admit a Walrasian equilibrium.  However, we claim that it does admit an optimal dynamic pricing.  To see this, consider the first round prices 
\begin{align*}
p^1_{\alpha_1} = p^1_{\alpha_2} = p^1_{\beta} 	&= \epsilon 	\\
p^1_{a_1} = p^1_{a_2} 							&= 2\epsilon	\\
p^1_{b_1} = p^1_{b_2}							&= 3\epsilon
\end{align*}
We split to cases according to the first incoming buyer, and we can assume w.l.o.g. that it is either $c_1$ or $d_1$ (the other cases are symmetric):
\begin{enumerate}
	\item Buyer $c_1$ arrives first.  Under the above prices he takes $\alpha_1$ and $a_1$.  At this point, we set the following prices for all subsequent rounds:
	\begin{align*}
	p^2_{\beta} = p^2_{b_1}	&= \epsilon 	\\
	p^2_{a_2} = p^2_{\alpha_2} 	&= 2\epsilon	\\
	p^2_{b_2}					&= 3\epsilon	\\
	\end{align*}
	
	The earlier of $d_1,d_2$ to arrive takes $\beta$.  If $d_1$ arrives later he takes $b_1$ and $c_2$ takes $a_2$ and $\alpha_2$ .  If $d_2$ arrives later and before $c_2$ he takes $a_2,b_2$ and $c_2$ takes $\alpha_2$.  If $d_2$ arrives last, $c_2$ takes $a_2$ and $\alpha_2$ and $d_2$ takes $b_2$.  $OPT$ is achieved in any alternative.
	
	\item Buyer $d_1$ arrives first. In this case, the initial prices are used throughout. Buyer $d_1$ takes $\beta$. $c_1$ takes $a_1,\alpha_1$ regardless of his place in line.  If $d_2$ arrives before $c_2$, he takes $a_2,b_2$ and $c_2$ takes $\alpha_2$.  Otherwise, $c_2$ takes $a_2,\alpha_2$ and $d_2$ takes $b_2$.  Again, $OPT$ is achieved.  
\end{enumerate}

\end{document}